\newtheorem{theorem}{Theorem}
\newtheorem{lemma}[theorem]{Lemma}
\newtheorem{definition}[theorem]{Definition}
\newtheorem{corollary}[theorem]{Corollary}
\newtheorem{remark}[theorem]{Remark}
\newcommand{\qed}{\hfill$\square$}
\newenvironment{proof}{%
  \noindent{\em Proof.\ }}{%
  \hspace*{\fill}\qed \\
  \vspace{2ex}}
\DeclareMathAlphabet{\bm}{OML}{cmm}{b}{it}
\newcommand{\ket}[1]{| #1 \rangle} 
\newcommand{\bra}[1]{\langle #1 |} 
\newcommand{\rom}[1]{\mathrm{#1}}
\newcommand{\san}[1]{\mathsf{#1}}
\newcommand{\mbf}[1]{\mathbf{#1}}
\begin{document}

% paper title
\title{Key rate of quantum key distribution
 with hashed two-way classical 
communication\footnote{Part of this paper will be presented at the 
2007 IEEE International Symposium on Information Theory, 
Acropolis Congress and Exhibition Center, Nice, France, 24th--29th June
 2007,
and will be published in its proceedings without proofs.}}

\author{Shun Watanabe}
 \email{shun-wata@it.ss.titech.ac.jp}
\author{Ryutaroh Matsumoto}%
 \email{ryutaroh@rmatsumoto.org}
 \homepage{http://www.rmatsumoto.org/research.html}
\author{Tomohiko Uyematsu}
 \email{uyematsu@ieee.org}
 \affiliation{%
Department of Communicaiton and Integrated Systems \\
Tokyo Institute of Technology \\
2-12-1, Oookayama, Meguro-ku, Tokyo, 152-8552, Japan
}

\author{Yasuhito Kawano}%
 \email{kawano@theory.brl.ntt.co.jp}
\affiliation{%
NTT Communication Science Laboratories, \\ 
NTT Corporation \\
3-1, Wakamiya, Morinosato, Atsugishi, \\
Kanagawa Pref., 243-0198, Japan
}%

%\date{\today}

\begin{abstract}
We propose an information reconciliation
protocol that uses two-way classical communication.
The key rates of quantum key distribution (QKD) protocols 
that use our new
protocol are higher than those of previously known protocols for a wide
range of error rates 
for the BB84 and six-state protocols.
We also clarify the relation between
the proposed  and known QKD protocols, and the relation 
between it and 
entanglement distillation protocols (EDPs). 
\end{abstract}

\pacs{03.67.Dd, 89.70.+c}
\maketitle

\section{Introduction}         

Quantum key distribution (QKD) protocols provide a way for
two parties, a sender,  Alice, and a receiver, Bob, to share an
unconditionally  secure key
in the presence of an eavesdropper, Eve. Unlike conventional
schemes of key distribution that rely on unproven computational
assumptions,  the security of QKD protocols is guaranteed by the 
principles of quantum mechanics.

QKD protocols usually consist of two parts, a quantum  and
a classical part.
Alice sends a binary sequence to Bob in the quantum part by
encoding it into quantum states that are randomly chosen from
a set of non-orthogonal states.
Since unknown non-orthogonal states cannot be cloned perfectly, any
eavesdropping attempt by Eve will disturb the transmitted quantum states.
Thus, by estimating the error rate of the transmitted quantum states, Alice and
Bob can estimate the amount of information that Eve has gained. 
For the sequence that remains after the error estimation phase,
which is usually called the raw key, 
Alice and Bob first carry out an information reconciliation 
(IR) protocol \cite{brassard:94}
to share the same bit sequence.
Alice and Bob then distill the final secure key by 
conducting a privacy
amplification (PA) protocol \cite{bennett:95}.

The best-known QKD protocols are the Bennett-Brassard 1984 (BB84) protocol \cite{bennett:84}
and the six-state protocol \cite{bruss:98}.
The unconditional security of the BB84 protocol has been proved 
\cite{biham:00, biham:06, mayers:01}.
Shor and Preskill \cite{shor:00} presented a simple proof of the BB84 protocol
by showing that the QKD protocol that uses the entanglement distillation
protocol (EDP) \cite{bennett:96, bennett:96b} can be converted into 
the BB84 protocol.
After that, the unconditional security of the six-state protocol was 
proved \cite{lo:01} by using the same technique as Shor and Preskill
used \cite{shor:00}.
Recently, the security of generic QKD protocols that include the BB84 protocol
and the six-state protocol has been proved \cite{kraus:05, renner:05, renner:05b},
which are based on
information theoretical techniques instead of 
Shor and Preskill's technique. 

In addition to the security of QKD protocols, the key rates of QKD protocols are also important,
where the key rate is defined by the ratio of the length of the final secure key to
the length of the raw key.
Gottesman and Lo \cite{gottesman:03} converted 
EDPs that use two-way classical communication 
into QKD protocols
that use the same communication.
More specifically, they proposed preprocessing that uses two-way classical
communication.
By inserting this two-way preprocessing before the conventional one-way
IR protocol, the key rates
of QKD protocols are increased when the error rate of a channel 
expressed as a percentage is more than about 9 \%.
Indeed, the tolerable error rate of the BB84 protocol is increased from 11 \% to 18.9 \%,
and that of the six-state protocol is increased from 12.7 \% to 26.4 \%,
where the tolerable
error rate is the error rate at which the key rate becomes zero.
Chau later showed that the two-way BB84 protocol can tolerate 20.0 \%
error rate,
 and that the 
two-way six-state protocol can tolerate 27.6 \% error rate \cite{chau:02}.
Recently, this kind of two-way preprocessing has been applied to 
QKD protocols with weak coherent pulses \cite{ma:06, kraus:07}.
It should be noted that this preprocessing is also known within the classical
key agreement context, in which it is usually called an advantage
distillation protocol \cite{maurer:93}.
Bae and Ac\'in and Ac\'in et al. \cite{acin:06, bae:07} extensively studied
the noise tolerance of QKD protocols with
advantage distillation protocols, on the other hand,
we are interested in the key rates of QKD protocols in this paper.

Vollbrecht and Vestraete proposed
a new type of two-way EDP \cite{vollbrecht:05}. 
This protocol uses
previously shared EPR pairs as an assistant resource (two-way breeding EDP), and the distillation rate of
this EDP exceeds that of one-way EDPs for a whole range of 
fidelities,
where a fidelity is that between the initial mixed
state and the EPR pair.
Using
the fact that a breeding EDP can be converted into a QKD 
protocol assisted by one-time pad encryption 
with a pre-shared secret key \cite{lo:03}, 
Vollbrecht and Vestraete's two-way breeding EDP \cite{vollbrecht:05} was
converted into a two-way QKD protocol assisted by one-time pad encryption \cite{ma:06, watanabe:06}.
The key rate of the converted QKD protocol is higher than that of
one-way QKD protocols \cite{shor:00, lo:01}
for a whole range of error rates.
It should be noted that the use of a pre-shared secret key is not 
the basis of their improvement,
because any QKD protocol that makes use of a pre-shared key
can be transformed into an equally efficient protocol that 
does not need a pre-shared
secret key \cite{christandl:06}.

We propose an IR protocol that uses
two-way classical communication in this paper.
Our proposed protocol is based on 
Vollbrecht and Vestraete's idea of two-way breeding
EDP \cite{vollbrecht:05}, but does not 
require any pre-shared secret keys.
Furthermore, 
our protocol does not leak information that is
redundantly leaked to Eve \cite{ma:06, watanabe:06}.
More precisely, in these protocols \cite{ma:06, watanabe:06},
Alice sends a redundant message that is useless to Bob, but
is useful to Eve.
However, in the proposed protocol, Alice does not
send that redundant information.
As a result, for the BB84 and six-state protocol,
the key rates of the QKD protocols that use our IR
protocol are higher than those of previously known
protocols for a wide range of error rates.
Especially, the key rate of our protocol is higher than
those of known protocols \cite{watanabe:06, shor:00, lo:01, renner:05} for the whole
range of error rates.
We also show the relation between the proposed
protocol and the advantage distillation protocol, i.e.,
the B-step of Gottesman and Lo \cite{gottesman:03}
(Remark \ref{remark-relation-to-b-step}).
We also show the relation between the proposed QKD protocol and
Vollbrecht and Vestraete's EDP. As a results, it turns out that
there does not seem to be any EDP that corresponds to
our proposed protocol (Remark \ref{remark-relation-to-vollbrecht}).

The rest of this paper is organized as follows.
Section \ref{two-way-information-reconciliation-protocol}
proposes a two-way IR protocol.
Section \ref{security-of-qkd-and-key-rate}
presents the key rate formula of the QKD protocol that uses our
proposed IR protocol.
There is a proof of the key rate formula in
the Appendix \ref{sec-proof-of-theorem}.
Section \ref{annalysis-of-key-rate}
presents the key rate formula as a function of error rate.
The proof of this formula is presented in Appendix \ref{proof-of-theorem-3}.

%%%%%%%%%%%%%%%%%%%%%%%%%%%%%
\section{Two-way information reconciliation protocol}
\label{two-way-information-reconciliation-protocol}

We propose an 
IR protocol that uses two-way classical communication
(called two-way IR protocol after this) in this section.
When Alice and Bob have correlated classical sequences,
$\mbf{x}, \mbf{y} \in \mathbb{F}_2^{2n}$, the purpose of  IR protocols
for Alice and Bob is to share the same classical sequence by
exchanging messages over a public authenticated channel, where
$\mathbb{F}_2$ is the field of order $2$.
Here, we assume that the pair of sequences 
$(\mbf{x}, \mbf{y})$ is independently identically distributed (i.i.d)
according to a joint probability distribution, $P_{XY}$, on 
$\mathbb{F}_2 \times \mathbb{F}_2$. 

Let us  review some notations 
for a linear code to describe our IR protocol.
An $[n, n-m]$ linear code, ${\cal C}_{n,m}$, is an 
$(n-m)$-dimensional linear subspace of $\mathbb{F}_2^n$.
Then, a parity check matrix, $M_{{\cal C}_{n,m}}$, of code ${\cal C}_{n,m}$ is an
$m \times n$ matrix of rank $m$ with $0, 1$ entries such that
$\mbf{c} M_{{\cal C}_{n,m}}^T = \mbf{0}$ for any $\mbf{c} \in {\cal C}_{n,m}$,
where $M_{{\cal C}_{n,m}}^T$ is the transpose matrix of $M_{{\cal C}_{n,m}}$.
A decoder, $g_{{\cal C}_{n,m}}$,
of code ${\cal C}_{n,m}$ is a map
from a syndrome, $\mbf{t} \in \mathbb{F}_2^m$, to an error, 
$\mbf{e} \in {\cal D}(\mbf{t})$,
where ${\cal D}(\mbf{t}) := \{ \mbf{e} \in \mathbb{F}_2^n \mid \mbf{e} M_{{\cal
C}_{n,m}}^T =  \mbf{t} \}$ is the  set of errors whose  syndromes are $\mbf{t}$.
After this, we will assume that a linear code is implicitly
specified with a parity check matrix  
and a decoder.

We need to define
some auxiliary random variables to describe our IR protocol.
Let $\xi_1:\mathbb{F}_2^2 \to \mathbb{F}_2$ be a function defined as
$\xi_1(a_1,a_2) := a_1 + a_2$ for $a_1,a_2 \in \mathbb{F}_2$, and
let $\xi_2:\mathbb{F}_2^2 \to \mathbb{F}_2$ be a function defined as
$\xi_2(a,0) := a$ and $\xi_2(a,1):=0$ for $a \in \mathbb{F}_2$.
For a pair of joint random variables $((X_1, Y_1)$, $(X_2, Y_2))$ with a
distribution, $P_{XY}^2$, define random variables
$U_1 := \xi_1(X_1, X_2)$, $V_1:= \xi_1(Y_1, Y_2)$ and
$W_1:= U_1 + V_1$.
Furthermore, define random variables  $U_2 := \xi_2(X_2,W_1)$,
$V_2:= \xi_2(Y_2,W_1)$ and $W_2 := U_2 + V_2$.
For the pair of sequences, 
$\mbf{x} = (x_{11},x_{12},\ldots,x_{n1},x_{n2})$ and
$\mbf{y} = (y_{11},y_{12}, \ldots, y_{n1}, y_{n2})$,
which is distributed according to 
the product distribution, $P_{XY}^{2n}$,
let $\mbf{u}$, $\mbf{v}$ and $\mbf{w}$ be $2n$-bit sequences such that
\begin{eqnarray*}
u_{i1} := \xi_1(x_{i1}, x_{i2}),~~
v_{i1} := \xi_1(y_{i1}, y_{i2}),~~
w_{i1} := u_{i1} + v_{i1}
\end{eqnarray*}
and
\begin{eqnarray*}
u_{i2} := \xi_2(x_{i2}, w_{i1}),~~
v_{i2} := \xi_2(y_{i2}, w_{i1}),~~
w_{i2} := u_{i2} + v_{i2}
\end{eqnarray*}
for $1 \le i \le n$.
Then, the pair $(\mbf{u},\mbf{v})$ is distributed
according to the distribution, $P_{U_1 U_2 V_1 V_2}^n$,
and the discrepancy, $\mbf{w}$, between $\mbf{u}$ and
$\mbf{v}$ is distributed according to the distribution,
$P_{W_1 W_2}^n$.
For sequence $\mbf{w}$, let 
$\san{T}_b := \{ j \mid 1 \le j \le n,~ w_{j1} = b\}$
be the set of indices of blocks such that the parities
of the discrepancies are $b$.
For the subsequence, $\mbf{u}_2 := (u_{12}, \ldots, u_{n2})$,
let $\mbf{u}_{2, \san{T}_b}$ be the subsequence
that consists of the $i$-th bit of $\mbf{u}_2$
such that $i \in \san{T}_b$.

Well-known 
methods \cite{gottesman:03, maurer:93, vollbrecht:05}
of two-way processing within the key distillation context
have been to classify blocks of length $2$ according to the parity, $w_{i1}$, 
of the discrepancies in each block.
In conventional two-way processing of the key distillation protocols
\cite{gottesman:03, maurer:93}, which is so-called advantage
distillation protocols, Alice sends the parity sequence,
$\mbf{u}_1 := (u_{11},\ldots,u_{n1})$, to Bob so that he can 
identify the parity sequence, 
$\mbf{w}_1 := (w_{11},\ldots, w_{n1})$, of the discrepancies.
Then, Alice and Bob discard $\mbf{u}_1$ and 
$\mbf{v}_1 := (v_{11},\ldots,v_{n1})$ respectively, because
$\mbf{u}_1$ is revealed to Eve. 
Furthermore, Alice and Bob discard
the second bit of the $i$-th block, if
the parity of the discrepancies is $1$, i.e.,
$i \in \san{T}_1$. 
Finally, Alice and Bob undertake an error correction procedure
for the subsequences $(\mbf{u}_{2,\san{T}_0}, \mbf{v}_{2, \san{T}_0})$.
More precisely, Alice sends the syndrome, 
$\mbf{t}_2 := \mbf{u}_{2,\san{T}_0} M_{{\cal C}_{n_0, m_0}}^T$,
for the prescribed $[n_0, m_0]$-linear code, and then
Bob decodes $\mbf{\hat{w}}_{2, \san{T}_0} := 
g_{{\cal C}_{n_0, m_0}}( \mbf{t}_2 + \mbf{v}_{2,\san{T}_0} M_{{\cal
C}_{n_0, m_0}}^T)$ and obtains 
$\mbf{v}_{2,\san{T}_0} + \mbf{\hat{w}}_{2, \san{T}_0}$, 
where $n_0 := |\san{T}_0|$ is the cardinality of the set, $\san{T}_0$.

Our two-way IR protocol, which is based on 
Vollbrecht and Vestraete's idea of
two-way EDP \cite{vollbrecht:05}, is quite similar to
the previously described two-way processing except for one significant change.
As is usual in information theory, if we allow  negligible error probability,
Alice does not need  to send the parity sequence, $\mbf{u}_1$, to Bob
to identify parity sequence $\mbf{w}_1$.
More precisely, Bob can decode $\mbf{w}_1$ 
with negligible decoding error probability if Alice sends
a syndrome, $\mbf{t}_1 := \mbf{u}_1 M_{{\cal C}_{n,m}}^T$, for a linear code
such that the rate is $\frac{m}{n} \simeq H(P_{W_1})$ \cite[Corollary 2]{csiszar:82}.
Since Eve's available information from syndrome $\mbf{t}_1$ is
much smaller than that from sequence $\mbf{u}_1$ itself,
our IR protocol is more efficient than the above-mentioned two-way
processing in most cases, which will be discussed in 
Section \ref{annalysis-of-key-rate}.
Our IR protocol is formally executed as follows,
where the tilde $\tilde{}$ and hat $\hat{}$ on a sequence, a set or a
number 
indicate that
they are  guessed versions of those without these superscripts. 
Note that 
the inputs of the IR protocol are Alice's bit sequence $\mbf{x}$ and
Bob's bit sequence $\mbf{y}$, and the outputs of the IR protocol
are a sequence, $\hat{\mbf{u}}$, guessed by Alice and a sequence,
$\tilde{\mbf{u}}$, guessed by Bob.

\begin{enumerate}
\renewcommand{\theenumi}{\roman{enumi}}
\renewcommand{\labelenumi}{(\theenumi)}

\item \label{step1}
Alice locally computes $\mbf{u}_1$ and Bob does the same for $\mbf{v}_1$.

\item \label{step2}
For a prescribed $[n,n-m]$ linear code, ${\cal C}_{n,m}$, Alice
sends syndrome $\mbf{t}_1 = \mbf{u}_1 M_{{\cal C}_{n,m}}^T$
to Bob.

\item \label{step3}
Bob decodes $\hat{\mbf{w}}_1 := 
g_{{\cal C}_{n,m}}(\mbf{t}_1 + \mbf{v}_1 M_{{\cal C}_{n,m}}^T)$,
and sends $\hat{\mbf{w}}_1$ to Alice.

\item \label{step4}
Alice computes $\hat{\mbf{u}}_2$.
If the number, $\hat{n}_0 := |\{ i \mid \hat{w}_{i1} = 0 \}|$,
of blocks such that the guessed parity, $\hat{w}_{i1}$, of the discrepancies
is $0$ does not satisfy $\underline{n}_0 \le \hat{n}_0 \le \overline{n}_0$
for prescribed 
integers, $\underline{n}_0$ and $\overline{n}_0$,
then Bob randomly guesses $\hat{\mbf{u}}_{2, \hat{\san{T}}_0}$.
Otherwise, Alice sends the syndrome,
$\hat{\mbf{t}}_2 := \hat{\mbf{u}}_{2,\hat{\san{T}}_0} M_{{\cal C}_{\hat{n}_0,\hat{m}_0}}$,
for a prescribed $[\hat{n}_0, \hat{n}_0 - \hat{m}_0]$ linear code, 
${\cal C}_{\hat{n}_0, \hat{m}_0}$.

\item \label{step5}
Bob decodes $\tilde{\mbf{w}}_{2, \hat{\san{T}}_0} :=  
g_{{\cal C}_{\hat{n}_0, \hat{m}_0}}(\hat{\mbf{t}}_2 
+ \hat{\mbf{v}}_{2, \hat{\san{T}}_0} M_{{\cal C}_{\hat{n}_0, \hat{m}_0}}^T)$,
and obtains $\tilde{\mbf{u}}_{2, \hat{\san{T}}_0} :=  \hat{\mbf{v}}_{2, \hat{\san{T}}_0}
+ \tilde{\mbf{w}}_{2, \hat{\san{T}}_0}$.

\end{enumerate}

Note that $\hat{\mbf{u}}_{2, \hat{\san{T}}_1}$ and $\hat{\mbf{v}}_{2, \hat{\san{T}}_1}$
are set to all $0$s in our protocol, which is mathematically
equivalent to discarding them.

According to the universal channel coding theorem for the linear code
\cite[Corollary 2]{csiszar:82},
rates $\frac{m}{n} = H(P_{W_1}) + \delta$ and
$\frac{\hat{m}_0}{\hat{n}_0} = H(P_{W_2|W_1 = 0}) + \delta$ 
for  small $\delta > 0$ are sufficient for
Bob to decode $\mbf{w}_1$ and $\mbf{w}_{2, \san{T}_0}$ 
with negligible decoding error probability.
Furthermore, we set $\underline{n}_0 := n(P_{W_1}(0) - \delta)$
and $\overline{n}_0 := n ( P_{W_1}(1) + \delta)$ to 
satisfy the condition, $\underline{n}_0 \le \hat{n}_0 \le \overline{n}_0$,
in Step (\ref{step4}) with high probability.

\begin{remark}
Since we cannot estimate the probability distribution
of error exactly in QKD protocols and the actual distribution fluctuates around the
estimated error distribution, universality of codes is required.
Even though the distribution of errors in the QKD protocols are not necessarily
i.i.d., 
it is sufficient
to consider a universality condition on codes for the i.i.d. case.
More precisely, it is sufficient to use a linear code such that
the decoding error probability of the linear code is universally small 
for any binary symmetric channel 
whose crossover probability is close to the estimated error rate.
Such observations were first pointed out by Hamada \cite{hamada:04}.
Efficiently decodeable linear codes such as the low density parity check matrix
 code \cite{gallager:63} and the turbo code \cite{berrou:96}
satisfy this condition.
\end{remark}

%%%%%%%%%%%%%%%%%%%%%%%%%%%%
\section{Security of QKD and key rate}
\label{security-of-qkd-and-key-rate}
 
This section presents the asymptotic key rate of QKD protocols
that employs the IR protocol proposed in Section 
\ref{two-way-information-reconciliation-protocol}.
The asymptotic key rate is derived by
the security proof method \cite{kraus:05, renner:05, renner:05b}.

We implement a prepare and measure scheme in a practical QKD protocol.
However, when we analyze the security of a QKD protocol, it is usually 
more convenient to consider its entanglement-based version.
Without compromising security, we can assume that Alice and Bob's raw keys
and bit sequences for error estimation are obtained by measuring a 
bipartite state, $\rho_{A^N B^N}$, on an $N$ pair of bipartite systems
$({\cal H}_A \otimes {\cal H}_B)^{\otimes N}$,
that $\rho_{A^N B^N}$ is invariant under the permutation of the 
systems \footnote{By applying the random permutation after the
transmission phase of QKD protocols, we can assume that Alice and
Bob's bit sequences are invariant under the permutation without
any compromise of the security.},
and that Eve can access  $\rom{Tr}_{A^N B^N} [\rho_{A^N B^N E^N}]$ for
a purification $\rho_{A^N B^N E^N}$ of $\rho_{A^N B^N}$ 
(see also \cite{kraus:05, renner:05}).
The specific form of $\rho_{A^N B^N}$ depends on which scheme Alice and Bob employ to
transmit a binary sequence, noise in the channel, and Eve's attack.
From \cite[Lemma 4.2.2]{renner:05b}, without loss of generality,
we can assume that  purification $\rho_{A^N B^N E^N}$ lies on the
symmetric subspace 
of $({\cal H}_A \otimes {\cal H}_B \otimes {\cal H}_E)^{\otimes N}$,
because any purification 
can be transformed into another purification 
using Eve's local operation.

Before the protocol is started, Alice and Bob discard the last $k$ subsystems, 
${\cal H}_A^{\otimes k} \otimes {\cal H}_B^{\otimes k}$, 
for  technical reasons of security proof.
More specifically, $k$ subsystems are discarded to apply
the de Finetti style representation theorem
\cite[Theorem 4.3.2]{renner:05b} (see also \cite{renner:07}) in the security proof.
Therefore, we set $N:= 2n + m + k$. Then, Alice and Bob conduct the protocol
for the state, $\rho_{A^{2n+m}B^{2n+m}} := \rom{Tr}_k[ \rho_{A^N B^N}]$, 
where $k$ is the number of discarded
systems, $m$ is the number of systems for 
parameter estimation,
and $2n$ is the number of systems that are used for key distillation.

First, Alice and Bob undertake the following parameter estimation protocol for the last
$m$-subsystems of the state $\rho_{A^{2n+m} B^{2n+m}}$.
The parameter estimation protocol is conducted to
estimate the number of discrepancies between Alice and
Bob's raw keys, and the amount of information that Eve has gained 
by eavesdropping.
%%%%%%%%%%%
\begin{enumerate}
\renewcommand{\theenumi}{\roman{enumi}}
\renewcommand{\labelenumi}{(\theenumi)}

\item
Alice and Bob carry out a bipartite 
positive operator valued measurement (POVM),
${\cal M} := \{ M_a \}_{a \in {\cal A}}$,
for each system, ${\cal H}_A \otimes {\cal H}_B$,
where ${\cal A}$ is the set of measurement outcomes.
The specific form of ${\cal M}$ depends on which scheme
we use. 

\item
If the type, 
$P_{\mbf{a}}$, of the measurement outcomes, 
$\mbf{a} =(a_1,\ldots, a_m)$, satisfies
$P_{\mbf{a}} \in {\cal Q}$
for a prescribed set, ${\cal Q}$,
the protocol outputs the type, $Q := P_{\mbf{a}}$, 
and Alice and Bob conduct the key distillation protocol
according to $Q$, where the type of sequence 
$\mbf{a} = (a_1,\ldots, a_m)$ is 
the frequency distribution defined by 
\begin{eqnarray*}
P_{\mbf{a}}(a) := \frac{ |\{i \mid 1 \le i \le m,~ a_i = a \}|}{m}
\mbox{ for }
a \in {\cal A}
\end{eqnarray*}
(for more details on the type, see \cite[Chapter 11]{cover}).
Otherwise, it outputs ``abort''.
\end{enumerate}
It is convenient to describe
the parameter estimation protocol using a 
completely positive (CP) map
as follows.
Let ${\cal M}^{\otimes m} := \{ M_{\mbf{a}} \}_{\mbf{a} \in {\cal A}^m}$ be 
a product POVM on $({\cal H}_A \otimes {\cal H}_B)^{\otimes m}$,
where $M_{\mbf{a}} = M_{a_1} \otimes \cdots \otimes M_{a_m}$.
Then, we can define a CP map, ${\cal E}_Q$, by
\begin{eqnarray}
{\cal E}_Q :
\rho_m 
\mapsto 
\sum_{\mbf{a} \in {\cal T}_Q^n({\cal A})} \rom{Tr} M_{\mbf{a}} \rho_m,  
    \label{eq-cpm-describing-parameter-estimation}
\end{eqnarray}
which maps the density operator to the probability such
that the parameter estimation protocol outputs $Q$,
where ${\cal T}_Q^m({\cal A})$ is a set of all sequences on ${\cal A}^m$
with type $Q$.

When the output of the parameter estimation 
protocol is $Q \in {\cal Q}$, 
Alice, Bob, and Eve's tripartite state is given by
\begin{eqnarray*}
\lefteqn{ \rho_{A^{2n} B^{2n} E^N}^Q
:=
\frac{1}{P_{\san{PE}}(Q)} }\\
&& (\rom{id}_{A^{2n} B^{2n}} \otimes 
{\cal E}_Q \otimes \rom{id}_{E^N})(\rho_{A^{2n+m}
B^{2n+m} E^N}),
\end{eqnarray*}
where $P_{\san{PE}}(Q)$ is a probability such that
the parameter estimation protocol outputs $Q$,
and $\rom{id}$ denotes the identity map on each system.
 
Alice and Bob apply a measurement 
${\cal M}_{XY} := \{M_x \otimes M_y \}_{(x,y) \in \mathbb{F}_2 \times \mathbb{F}_2}$
on ${\cal H}_A \otimes {\cal H}_B$ to
the remaining $2n$ systems
to obtain classical data (raw keys).
Then, Alice and Bob's measurement results, 
$(\mbf{x}, \mbf{y}) \in \mathbb{F}_2^{2n} \times \mathbb{F}_2^{2n}$, 
and Eve's available information 
is described by a $\{ccq\}$-state 
\footnote{A $\{ccq\}$-state is a tripartite state such that 
the first and second systems are classical and the third system
is quantum. See \cite{devetak:04} for a detail of this notation.}
\begin{eqnarray*}
\rho_{\mbf{X} \mbf{Y} E^N}^Q := ({\cal E}_{XY}^{\otimes 2n} \otimes
 \rom{id}_{E^N})(\rho_{A^{2n} B^{2n} E^N}^Q),
\end{eqnarray*}
where we introduce a CP map, ${\cal E}_{XY}$, that describes 
the measurement procedure for convenience.

According to  output $Q$ of the parameter estimation 
protocol, Alice and Bob decide the parameters of the 
IR protocol: rate $R(Q) := \frac{m}{n}$ of linear code 
${\cal C}_{n,m}$, numbers $\underline{n}_0(Q)$ and
$\overline{n}_0(Q)$ that are used in Step (\ref{step4}),
and rate $R_0(Q) := \frac{m_0}{n_0}$ of
linear code ${\cal C}_{n_0, m_0}$ for
$\underline{n}_0(Q) \le n_0 \le \overline{n}_0(Q)$.
Furthermore, Alice and Bob also decide the length, $\ell(Q)$,
of the finally distilled key according to $Q$.
According to the determined parameters, 
a final secure key pair is distilled as follows.
\begin{enumerate}
\renewcommand{\theenumi}{\roman{enumi}}
\renewcommand{\labelenumi}{(\theenumi)}

\item 
\label{key-distillation-step-1}
Alice and Bob undertake the two-way IR 
protocol in Section 
\ref{two-way-information-reconciliation-protocol},
and Alice obtains $\hat{\mbf{u}}$ and Bob obtains
$\tilde{\mbf{u}}$.

\item 
\label{key-distillation-step-2}

Alice and Bob carry out a 
privacy amplification (PA) protocol 
to distill a key pair $(s_A, s_B)$ such 
that Eve has little information about it.
Alice first randomly chooses a hash function,
$f: \mathbb{F}_2^{2n} \to \{ 0,1\}^{\ell(Q)}$,
from a family of two-universal hash functions 
(refer \cite[Definition 5.2.1]{renner:05b} for a formal
definition of a family of two-universal hash functions),
and sends the choice of $f$ to Bob over the public channel.
Then, Alice's distilled key is
$s_A = f(\hat{\mbf{u}})$ and 
Bos's distilled key is $s_B = f(\tilde{\mbf{u}})$.
\end{enumerate}

The distilled key pair and Eve's available information
can be described by a $\{cccq\}$-state,
$\rho_{S_A S_B C E^N}^Q$, where classical system
$C$ consists of random variables
$(\mbf{T}_1,\hat{\mbf{T}}_2,\hat{\mbf{W}}_1)$ that describe
 the exchanged messages 
$(\mbf{t}_1, \hat{\mbf{t}}_2, \hat{\mbf{w}}_1)$
in the IR protocol
and random variable $F$ that describes the choice  of the hash function in the PA protocol.
To define the security of the distilled key pair
$(S_A, S_B)$, we use the universally composable security 
definition \cite{ben-or:04, renner:05c},
which is defined by the trace distance between the
actual key pair and the ideal key pair.
We cannot state security in QKD protocols 
in the sense that the distilled key
pair $(S_A, S_B)$ is secure 
for a particular output $Q$ of the parameter estimation
protocol, because there is a slight possibility that
the parameter estimation protocol will not output ``abort''
even though Eve has so much information.
The QKD protocol is said to be $\varepsilon$-secure (in the sense of the average
over the outputs of the parameter estimation protocol) if
\begin{eqnarray}
      \label{eq-varepsilon-secure-average}
\sum_{Q \in {\cal Q}} P_{\san{PE}}(Q) 
\frac{1}{2}
\|
\rho^Q_{S_A S_B C E^N} - \rho_{S_A S_B}^{Q,\rom{mix}} 
\otimes \rho_{C E^N}
\| \le \varepsilon,
\end{eqnarray}
where $\rho_{S_A S_B}^{Q,\rom{mix}} := \sum_{s \in {\cal S}_Q}
\frac{1}{|{\cal S}_Q|} \ket{s,s}\bra{s,s}$
is the uniformly distributed key on the key space
${\cal S}_Q := \{0, 1\}^{\ell(Q)}$.

To state the relation between the security
and the asymptotic key rate
of the previously mentioned QKD protocol, define 
\begin{eqnarray*}
\Gamma(Q) := 
\{ \sigma_{AB} \mid
P_A^{\sigma_{AB}} = Q  \}
\end{eqnarray*}
as the set of two-qubit density operators that are
compatible with output $Q$ of the 
parameter estimation protocol,
where $P_A^{\sigma_{AB}}$ denotes the probability distribution
of the outcomes when measuring $\sigma_{AB}$ with POVM ${\cal M}$,
i.e., $P_A^{\sigma_{AB}}(a) := \rom{Tr} [M_a \sigma_{AB}]$.
For a purification, $\sigma_{ABE}$, of a density operator,
$\sigma_{AB} \in \Gamma(Q)$, let
$\sigma_{X_1 X_2 Y_1 Y_2 E_1 E_2} := ({\cal E}_{XY}^{\otimes 2}
\otimes \rom{id}_{E}^{\otimes 2})(\sigma_{ABE}^{\otimes 2})$ be 
a $\{ccq\}$-state that consists of $2$-bit pairs
$((X_1, X_2), (Y_1, Y_2))$ and environment systems $E_1, E_2$.
By using functions $\xi_1$ and $\xi_2$,
define random variables $(U_1, U_2, W_1, W_2)$
for the pair of bits $((X_1, X_2), (Y_1, Y_2))$
in the same way as in Section
\ref{two-way-information-reconciliation-protocol}.
Then, let $\sigma_{U_1 U_2 W_1 E_1 E_2}$ and 
$\sigma_{U_1 U_2 W_1 U_1 E_1 E_2}$ be
density operators that respectively describe the classical
random variables $(U_1, U_2, W_1)$
and $(U_1, U_2, W_1, U_1)$ with the environment
system $E_1, E_2$.

\begin{theorem}
     \label{theorem-assymptotic-key-rate}
For $Q \in {\cal Q}$, i.e., the output of the parameter
estimation protocol such that the QKD protocol does not abort,
let $\frac{\ell(Q)}{2n}$ be the key rate of the protocol.
For any $\varepsilon > 0$,
if the key rate satisfies 
\begin{eqnarray}
 \frac{\ell(Q)}{2n} &<& \frac{1}{2} \min_{\sigma_{AB} \in
 \Gamma(Q)}  
 \max\Bigl[
H_{\sigma}(U_1 U_2 | W_1 E_1 E_2) \nonumber \\
&&  - 
H(P_{W_1}) - P_{W_1}(0) H(P_{W_2|W_1=0}) , 
\label{eq-key-rate-formula} \\
&&  H_{\sigma}(U_2 | W_1 U_1 E_1
 E_2) - P_{W_1}(0) H(P_{W_2|W_1=0})
\Bigr], \nonumber 
\end{eqnarray}
then there exists a protocol that is $\varepsilon$-secure
in the sense of Eq.~(\ref{eq-varepsilon-secure-average})
for sufficiently large $n$, where 
$H_{\rho}(A|B) := H(\rho_{AB}) - H(\rho_B)$ is  
conditional von Neumann entropy \cite{hayashi-book:06},
and $H(P)$ is Shannon entropy \cite{cover}.
\end{theorem}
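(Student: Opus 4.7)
The plan is to prove Theorem \ref{theorem-assymptotic-key-rate} within the security-proof framework of Renner~\cite{renner:05, renner:05b}. First I would apply the exponential de Finetti representation theorem \cite[Theorem 4.3.2]{renner:05b} to the permutation-invariant state on the remaining $2n+m$ subsystems; conditioned on the parameter estimation outputting $Q$, this essentially reduces the analysis to an i.i.d.\ state $\sigma_{ABE}^{\otimes 2n}$ with $\sigma_{AB} \in \Gamma(Q)$, the approximation errors being absorbed by the $k$ discarded subsystems and by the smoothing parameter. Measuring with $\mathcal{E}_{XY}^{\otimes 2n}$ then produces a classical-quantum state whose per-pair structure is precisely $\sigma_{X_1 X_2 Y_1 Y_2 E_1 E_2}$.

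Next I would apply the leftover hash lemma \cite[Corollary 5.6.1]{renner:05b}, which guarantees that hashing with a two-universal family yields an $\varepsilon$-secure key of length essentially $H_{\min}^{\varepsilon}(\hat{\mbf{U}}\,|\,C E^N)$, where $C = (\mbf{T}_1, \hat{\mbf{T}}_2, \hat{\mbf{W}}_1, F)$ collects the two syndromes, Bob's back-announcement, and the hash choice. Peeling off the classical messages using the chain rule for smooth min-entropy, and invoking the asymptotic equipartition property for quantum conditional entropy, the per-pair rate is at least $H_{\sigma}(U_1 U_2 \,|\, W_1 E_1 E_2) - H(P_{W_1}) - P_{W_1}(0) H(P_{W_2|W_1=0})$, where $H(P_{W_1})$ is the asymptotic rate of $\mbf{T}_1$ and $P_{W_1}(0) H(P_{W_2|W_1=0})$ is that of $\hat{\mbf{T}}_2$; the latter identification requires the cutoffs $\underline{n}_0, \overline{n}_0$ on $\hat{n}_0$ to ensure the second syndrome has the expected length with high probability. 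This yields the first branch of the maximum.

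For the second branch, I would instead consider the variant in which Alice discards $\hat{\mbf{U}}_1$ from the raw key, equivalently treating $\hat{\mbf{U}}_1$ itself as revealed to Eve, and keeps only $\hat{\mbf{U}}_{2,\hat{\san{T}}_0}$. The same chain-rule and AEP argument then gives a rate of $H_{\sigma}(U_2 \,|\, W_1 U_1 E_1 E_2) - P_{W_1}(0) H(P_{W_2|W_1=0})$ per pair. Since Alice is free to pick whichever variant gives the larger rate before launching privacy amplification, the achievable rate is the maximum of the two. Minimizing over $\sigma_{AB} \in \Gamma(Q)$ accounts for the worst-case purification consistent with the parameter estimate, and the outer factor $1/2$ converts from the ``per two-bit block'' rate to the ``per raw-key bit'' rate of Eq.~(\ref{eq-key-rate-formula}).

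The main obstacle I expect is the smooth min-entropy bookkeeping when passing from the $N$-copy Eve system to the single-copy expressions $H_\sigma(\,\cdot\,|\cdots E_1 E_2)$, and in particular the justification that conditioning on the \emph{guessed} parity sequence $\hat{\mbf{W}}_1$ (what Alice and Bob actually share) may be replaced by conditioning on the \emph{true} $\mbf{W}_1$ without loss of rate. This should follow from the universal channel coding theorem \cite[Corollary 2]{csiszar:82}, which forces $\hat{\mbf{W}}_1 = \mbf{W}_1$ except on a set of vanishing probability, uniformly over the admissible distributions $P_{W_1}$ induced by $\sigma_{AB} \in \Gamma(Q)$.
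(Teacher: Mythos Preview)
Your proposal is correct and follows essentially the same approach as the paper: de~Finetti reduction to an approximate i.i.d.\ state, the leftover hash lemma for privacy amplification, smooth-min-entropy chain rules to peel off the two syndromes (or $\mbf{U}_1$ itself for the second branch), and the quantum AEP to pass to von Neumann entropies, with the minimization over $\Gamma(Q)$ and the factor $1/2$ arising exactly as you describe. The one technical wrinkle the paper flags---and which you should anticipate when executing the ``peeling off'' step---is that the chain rule of \cite[Theorem~3.2.12]{renner:05b} is too restrictive for this two-way setting, so they first prove a mild strengthening (conditioning on an extra classical system costs at most its max-entropy, at the price of loosening the smoothing from $\varepsilon$ to $\sqrt{8\varepsilon}$); your identification of the $\hat{\mbf{W}}_1$ versus $\mbf{W}_1$ replacement as the main bookkeeping issue is also on target and is handled in the paper by a direct trace-distance bound.
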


The meaning of the two arguments of the maximum
in Eq.~(\ref{eq-key-rate-formula}) should be noted.
The first argument 
states that the key rate is given by the difference between
Eve's ambiguity, $H_{\sigma}(U_1 U_2 | W_1 E_1 E_2)$,
about Alice's reconciled key and
the amount, $H(P_{W_1}) + P_{W_1}(0) H(P_{W_2|W_1=0})$,
of information leaked in the IR protocol.
On the other hand, since information leaked from
the syndrome, $\mbf{t}_1 = \mbf{u}_1 M_{{\cal C}_{n,m}}^T$, 
cannot be more than $\mbf{u}_1$ itself, 
we can evaluate the key rate 
under the condition that Eve can access $\mbf{u}_1$ itself, i.e.,
Eve's ambiguity, $H_{\sigma}(U_2 | W_1 U_1 E_1 E_2)$, about Alice's 
reconcilied key and the amount, $P_{W_1}(0) H(P_{W_2|W_1=0})$,
of information leaked in the IR protocol.
If either of them is omitted, 
the key rate is underestimated,
which will be discussed in Section \ref{annalysis-of-key-rate}.

Theorem \ref{theorem-assymptotic-key-rate} is formally proved by
demonstrating the above intuition formally, where we use 
a security proof method \cite{kraus:05,renner:05, renner:05b}.
More precisely, we use the techniques of privacy amplification 
and minimum entropy, and the de Finetti style representation theorem
and the property of symmetric states
(see \cite{renner:05b}).
Since the techniques used in the proof are not new and
involved, we give the proof for Theorem
\ref{theorem-assymptotic-key-rate} in 
the Appendix.

\section{Analysis of key rate}
\label{annalysis-of-key-rate}

Here, we analyze the asymptotic key rate formula
in Theorem \ref{theorem-assymptotic-key-rate}. 
More precisely, 
we derive a specific form of the key rate formulas as 
functions of the error rates
for the six-state  \cite{bruss:98} and  
BB84 protocols \cite{bennett:84}.

Before analyzing the key rate, let us define some notations.
For $\san{x},\san{z} \in \mathbb{F}_2$, let 
\begin{eqnarray*}
\ket{\psi(\san{x},\san{z})} :=
\frac{1}{\sqrt{2}}(
\ket{0}\ket{0 + \san{x}} + (-1)^{\san{z}}
\ket{1} \ket{1 + \san{x}}
)
\end{eqnarray*}
be the Bell states on two-qubit systems ${\cal H}_A \otimes {\cal H}_B$.
For a probability distribution, 
$P_{\san{XZ}}$, on $\mathbb{F}_2 \times \mathbb{F}_2$,
a state of the form, 
\begin{eqnarray*}
\sum_{\san{x},\san{z} \in \mathbb{F}_2 } P_{\san{XZ}}(\san{x},\san{z})
\ket{\psi(\san{x},\san{z})}\bra{\psi(\san{x},\san{z})},
\end{eqnarray*}
is called a Bell diagonal state.
We occasionally abbreviate $P_{\san{XZ}}(\san{x}, \san{z})$
as $p_{\san{xz}}$.

\begin{theorem}
    \label{theorem-key-rate-vs-error-rate}
For a Bell diagonal state, 
$\sigma_{AB} = \sum_{\san{x}, \san{z} \in \mathbb{F}_2} P_{\san{XZ}}(\san{x},\san{z})
\ket{\psi(\san{x},\san{z})}\bra{\psi(\san{x},\san{z})}$, we have
\begin{eqnarray}
\lefteqn{ \frac{1}{2} 
\max[
H_{\sigma}(U_1 U_2 | W_1 E_1 E_2) } \nonumber \\
&& - 
H(P_{W_1}) - P_{W_1}(0) H(P_{W_2|W_1=0}) , \nonumber \\
&& H_{\sigma}(U_2 | W_1 U_1 E_1
 E_2) - P_{W_1}(0) H(P_{W_2|W_1=0})
] \nonumber \\
&=& \max[
1 - H(P_{\san{XZ}}) \nonumber \\
&&  + \frac{P_{\bar{\san{X}}}(1)}{2}
h\left( \frac{p_{00} p_{10} + p_{01} p_{11}}{
(p_{00} + p_{01})(p_{10} + p_{11})}\right), \nonumber \\
&& \hspace{10mm}
\frac{P_{\bar{\san{X}}}(0)}{2} (
1 - H(P_{\san{XZ}}^\prime) ) ] ,
\label{eq-key-rate-vs-error-rate}
\end{eqnarray} 
where $h(p) := - p\log p - (1-p) \log (1-p)$ is the 
binary entropy function,
\begin{eqnarray*}
P_{\bar{\san{X}}}(0) &:=& (p_{00} + p_{01})^2
+ (p_{10} + p_{11})^2, \\
P_{\bar{\san{X}}}(1) &:=& 2 (p_{00} + p_{01})
(p_{10} + p_{11}),
\end{eqnarray*}
and
\begin{eqnarray*}
P_{\san{XZ}}^\prime(0,0) &:=&
\frac{p_{00}^2 + p_{01}^2}{(p_{00} + p_{01})^2 + (p_{10}+p_{11})^2}, \\
P_{\san{XZ}}^\prime(1,0) &:=& 
\frac{2 p_{00} p_{01}}{(p_{00} + p_{01})^2 + (p_{10}+p_{11})^2}, \\
P_{\san{XZ}}^\prime(0,1) &:=&
\frac{p_{10}^2 + p_{11}^2}{(p_{00} + p_{01})^2 + (p_{10}+p_{11})^2}, \\
P_{\san{XZ}}^\prime(1,1) &:=&
\frac{2 p_{10}p_{11} }{(p_{00} + p_{01})^2 + (p_{10}+p_{11})^2}.
\end{eqnarray*}
\end{theorem}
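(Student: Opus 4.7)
The plan is to evaluate the two quantum conditional entropies on the left-hand side of (\ref{eq-key-rate-vs-error-rate}) in closed form using the structure of Bell diagonal states, and then verify algebraically that each branch matches the corresponding argument of the $\max$ on the right-hand side. Writing the LHS as $\frac{1}{2}\max[A,B]$ and the RHS as $\max[A',B']$, it is enough to prove $A = 2A'$ and $B = 2B'$ separately.

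I would first fix the purification $\ket{\Psi}_{ABE} = \sum_{\san{x},\san{z}}\sqrt{p_{\san{xz}}}\,\ket{\psi(\san{x},\san{z})}_{AB}\ket{\san{x},\san{z}}_E$. After Alice and Bob measure in the computational basis, each bit-flip error $\san{X}_i = X_i + Y_i$ is stored classically in Eve's $\san{X}$-register, so $W_1 = \san{X}_1 + \san{X}_2$ is a deterministic function of $E_1 E_2$, which already collapses $H_\sigma(U_1 U_2 | W_1 E_1 E_2)$ to $H_\sigma(U_1 U_2 | E_1 E_2)$. Conditioned on $\san{X}=\san{x}$ and $X=x$, Eve's residual state is the pure vector proportional to $\sum_{\san{z}}\sqrt{p_{\san{x}\san{z}}}(-1)^{x\san{z}}\ket{\san{z}}$, and the overlap of the two $x$-dependent states is $\alpha_{\san{x}} = (p_{\san{x}0}-p_{\san{x}1})/P_{\san{X}}(\san{x})$. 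Two identities, both from the eigenvalues $(1\pm|\alpha|)/2$ of a uniform mixture of two pure states with overlap $\alpha$, drive the rest:
\begin{eqnarray*}
 h\!\left(\frac{1+|\alpha_{\san{x}}|}{2}\right) &=& H(P_{\san{Z}|\san{X}=\san{x}}), \\
 h\!\left(\frac{1+\alpha_{\san{x}}^{2}}{2}\right) &=& h\!\left(P_{\san{Z}|\san{X}=\san{x}}(0)^{2} + P_{\san{Z}|\san{X}=\san{x}}(1)^{2}\right).
\end{eqnarray*}

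For the first branch I would split $H_\sigma(U_1U_2|E_1E_2)$ by the value of $W_1$. The $W_1=0$ slice uses the bijection $(X_1,X_2)\leftrightarrow(U_1,U_2)$ and cross-pair independence to give $2[1-H(P_{\san{Z}|\san{X}=\san{x}})]$ for each $\san{x}$. The $W_1 = 1$ slice has $U_2 = 0$ deterministically, so only $H(X_1+X_2|\san{X}_1\ne\san{X}_2,E_1E_2)$ remains; the uniform mixture of four tensor-product pure states yields
\begin{equation*}
 H(X_1+X_2 | \san{X}_1=0,\san{X}_2=1,E_1E_2) = 1 + h\!\left(\frac{1+\alpha_0\alpha_1}{2}\right) - h\!\left(\frac{1+|\alpha_0|}{2}\right) - h\!\left(\frac{1+|\alpha_1|}{2}\right),
\end{equation*}
and the algebraic identity $\frac{1+\alpha_0\alpha_1}{2} = \frac{p_{00}p_{10}+p_{01}p_{11}}{(p_{00}+p_{01})(p_{10}+p_{11})}$ produces exactly the argument of $h$ appearing in $A'$.

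For the second branch, $H_\sigma(U_2|W_1 U_1 E_1 E_2)$ receives contributions only from $W_1=0$, and the chain rule rewrites it as $H(X_1 X_2 | W_1=0, E_1 E_2) - H(U_1 | W_1=0, E_1 E_2)$; both terms are instances of the same two-copy mixture formula now at $\san{X}_1=\san{X}_2$, and the second identity above makes the four atoms of $P'_{\san{XZ}}$ appear by inspection. The final bookkeeping uses $H(P_{W_1}) + P_{W_1}(0) H(P_{W_2|W_1=0}) = H(W_1 W_2)$ together with the elementary identity $2 H(P_{\san{X}}) = H(W_1 W_2) + P_{W_1}(1)$, which together cancel the residual $H(P_{\san{X}})$ terms produced by the quantum-side reductions. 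I expect the main obstacle to be the two-copy mixture entropy in the $W_1=1$ branch: one must notice that the single-pair overlaps multiply across independent pairs and that the resulting product simplifies, after elementary algebra in the $p_{\san{xz}}$, into the specific combinatorial form stated in the theorem.
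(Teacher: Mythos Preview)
Your argument is correct and follows the same overall skeleton as the paper's proof in Appendix~E: both start from the purification $\sum_{\san{x},\san{z}}\sqrt{p_{\san{xz}}}\,\ket{\psi(\san{x},\san{z})}\ket{\san{x},\san{z}}$, both exploit that $\san{X}_1,\san{X}_2$ (hence $W_1$) are classical in Eve's registers, and both split the computation according to $W_1$. The difference is in how the $W_1=1$ block is diagonalized. The paper proves a general lemma (its Lemma~\ref{lemma-for-key-rate}) about averaging the rank-one operators $\sigma_{E^m}^{\vec{x},\vec{\san{x}}}$ over a coset of an arbitrary linear code $C\subset\mathbb{F}_2^m$, then specializes to $m=2$, $C=\{00,11\}$; it then computes the three joint entropies $H(\sigma_{U_1U_2W_1E_1E_2})$, $H(\sigma_{W_1E_1E_2})$, $H(\sigma_{U_1W_1E_1E_2})$ separately and subtracts. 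You bypass the coset lemma entirely by recognizing every mixture that appears as an equal mixture of two pure states with a computable overlap (namely $\alpha_{\san{x}}$, $\alpha_0\alpha_1$, or $\alpha_{\san{x}}^2$), and you work directly with conditional entropies rather than joint ones. Your route is more elementary and shorter for this particular theorem; the paper's coset lemma is more general machinery that would scale to longer blocks or other linear-code preprocessing but is not needed here. Your bookkeeping identity $2H(P_{\san{X}})=H(W_1W_2)+P_{W_1}(1)$ is exactly what the paper obtains implicitly when it collapses $H(P_{\vec{\san{X}}\vec{\san{Z}}})$ to $2H(P_{\san{XZ}})$ in its final step.
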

The theorem is proved by a straight forward calculation.
Thus, the proof is presented in the Appendix E.

%%%%%%%%%%%%%%%%%%%%%%%%%%%%%%%%%%%%%%%%
%%%%%%% six-state and BB84 %%%%%%%%%%%%%
The six-state protocol \cite{bruss:98} uses three different bases defined by 
$z$-basis $\{ \ket{0_z}, \ket{1_z} \}$, $x$-basis 
$\{ 1/\sqrt{2}( \ket{0_z} \pm \ket{1_z}) \}$,
and $y$-basis $\{ 1/\sqrt{2} ( \ket{0_z} \pm i \ket{1_z}) \}$.
When Alice and Bob obtain an error rate, $e$, 
the set $\Gamma(Q)$ consists of states whose Bell diagonal entries 
$p_{00},p_{10},p_{01}, p_{11}$ satisfy conditions
$p_{10} + p_{11} = e$, $p_{01} + p_{11} = e$, and $p_{01} + p_{10} = e$.
Together with the normalization condition, we find
$p_{00} = 1 - \frac{3e}{2}$ and $p_{10} = p_{01} = p_{11} =
\frac{e}{2}$.
Since it is sufficient only to minimize over the Bell diagonal states
(see the Appendix F),
the key rate of the six-state protocol for the error rate $e$ is given
by substituting $p_{00} = 1 - \frac{3e}{2}$ and $p_{10} = p_{01} = p_{11} =
\frac{e}{2}$ into Eq.~(\ref{eq-key-rate-vs-error-rate}).
The key rate of the six-state protocol that uses the proposed IR protocol 
is plotted in Fig. \ref{Fig-six-state}.
%%%%%%%%%%%%
\begin{figure}
\centering
\includegraphics[width=\linewidth]{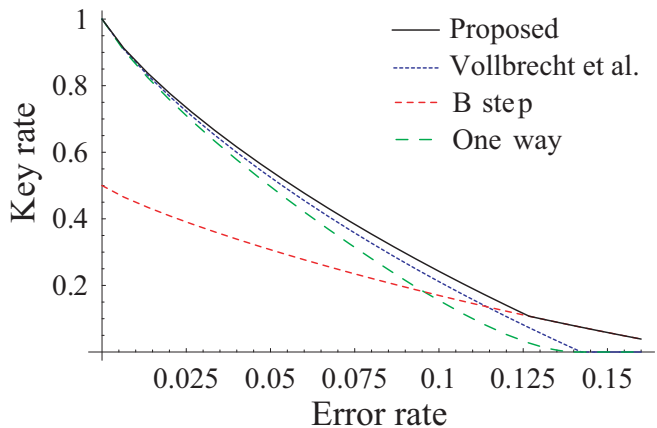}
\caption{ (Color online)
Comparison of the key rates of the  
six-state protocols. 
``Proposed'' is
the key rate of the six-state protocol that uses
the proposed IR protocol. 
``Vollbrecht et al.'' is the key rate of the two-way six-state protocol
of \cite{ma:06, watanabe:06}.
``B-step'' is the key rate of the two-way six-state protocol of
 \cite{gottesman:03}.
``One-way'' is the key rate of
the one-way six-state protocol with the noisy preprocessing \cite{renner:05}.
It should be noted that the key rates of two-way six-state protocols
of \cite{renner:05b, gottesman:03, chau:02} are slightly higher
than that of the proposed protocol for much higher error rate.
}
\label{Fig-six-state}
\end{figure}
%%%%%%%%%%

The BB84 protocol is similar to the six-state protocol, but only uses 
the $z$-basis and the $x$-basis to transmit a bit sequence.
Thus, we only obtain two conditions on the four coefficients $p_{00}, p_{10}, p_{01}, p_{11}$.
Thus, the set $\Gamma(Q)$ consists of states whose Bell diagonal entries
satisfy conditions
$p_{10} + p_{11} = e$ and $p_{01} + p_{11} = e$.
The resulting candidates for Bell diagonal states 
in $\Gamma(Q)$ have coefficients
$p_{00} = 1 - 2e + p_{11}$, $p_{10} = p_{01} = e - p_{11}$, and $p_{11} \in [0, e]$,
and we have to minimize 
the key rate formula of Eq.~(\ref{eq-key-rate-vs-error-rate})
over the free parameter, $p_{11} \in [0,e]$.
The key rate of the BB84 protocol that uses the proposed IR protocol 
is plotted in Fig. \ref{Fig-bb84}.
%%%%%%%%%%%%
\begin{figure}
\centering
\includegraphics[width=\linewidth]{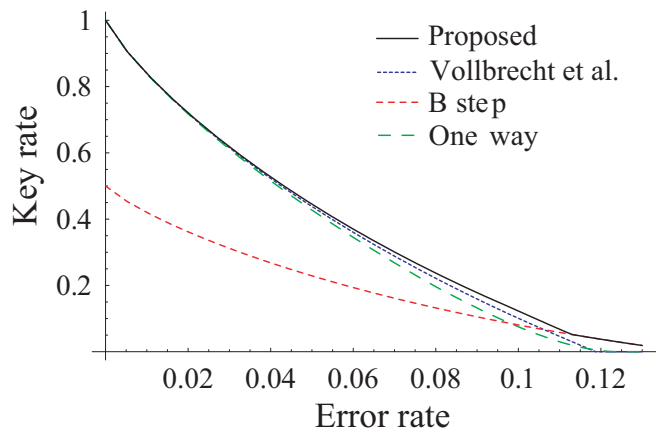}
\caption{ (Color online)
Comparison of the key rates of the  
BB84 protocols. 
``Proposed'' is
the key rate of the BB84 protocol that uses
the proposed IR protocol. 
``Vollbrecht et al.'' is the key rate of the two-way BB84 protocol
of \cite{ma:06, watanabe:06}.
``B-step'' is the key rate of the two-way BB84 protocol of
 \cite{gottesman:03}.
``One-way'' is the key rate of
the one-way BB84 protocol with the noisy preprocessing \cite{renner:05}.
}
\label{Fig-bb84}
\end{figure}
%%%%%%%%%%%%%%%%%%%%%%%%%%%%%%%%%%%%%%%%%%%%%
%%%%%%% Remarks %%%%%%%%%%%%%%%%%%%%%%%%%%%%%
\begin{remark}
     \label{remark-relation-to-b-step}
By using the chain rule of von Neumann entropy, 
we can rewrite the l.h.s. of 
Eq.~(\ref{eq-key-rate-vs-error-rate}) as
\begin{eqnarray}
\lefteqn{ \frac{1}{2}\{
 \max[ H_{\sigma}(U_1|W_1 E_1 E_2) - H(P_{W_1}), 0] } \nonumber \\
&+& \hspace{-3mm} H_{\sigma}(U_2| W_1 U_1 E_1 E_2) -
P_{W_1}(0) H(P_{W_2|W_1=0})
\}.
\label{eq-key-rate-rewritten}
\end{eqnarray}
We can interpret this formula as follows.
If Bob's ambiguity, $H(P_{W_1})$, about  bit $U_1$,
i.e., the amount of transmitted syndrome per bit, is smaller
than Eve's ambiguity, $H_{\sigma}(U_1|W_1 E_1 E_2)$, about
bit $U_1$, then Eve cannot decode sequence $\mbf{U}_1$
\cite{slepian:73, devetak:03}, and there exists some remaining ambiguity
about bit $U_1$ for Eve.
We can thus distill some secure key from bit $U_1$.
On the other hand, if Bob's ambiguity, $H(P_{W_1})$,
about bit $U_1$,
i.e., the amount of transmitted syndrome per bit, is larger
than Eve's ambiguity, $H_{\sigma}(U_1|W_1 E_1 E_2)$,
about $U_1$, then Eve might be able
to decode sequence $\mbf{U}_1$ from
her side information, $W_1$, $E_1$, $E_2$, and 
the transmitted syndrome \cite{slepian:73, devetak:03}.
Thus, there exists  the possibility that Eve can completely know
bit $U_1$, and we can distill no secure key from bit $U_1$,
because we have to consider the worst case in a cryptographic 
scenario.
Consequently, sending the hashed version (syndrome) of sequence
 $\mbf{U}_1$ instead of $\mbf{U}_1$ itself is
not always effective, and the slopes of the key rate curves 
in Figs.~\ref{Fig-six-state} and \ref{Fig-bb84} change
when Eve becomes able to decode $\mbf{U}_1$.

The second and third terms of
Eq.~(\ref{eq-key-rate-rewritten})  are the same
as the key rate formula of the protocol that
uses Gottesman and Lo's  B-step \cite{gottesman:03} followed
by error correction and privacy amplification.
Even though Alice sends the sequence $\mbf{U}_1$ itself
instead of its hashed version in the B-step,
the key rate of the protocol with the B-step
is equal to that
of the proposed protocol for high error rates, because
Eve can decode sequence $\mbf{U}_1$ from her side
information and the transmitted syndrome.
\end{remark}
%%%%%%%%%%%%%%%%%%
\begin{remark}
     \label{remark-relation-to-vollbrecht}
The yield of Vollbrecht and Vestraete's EDP \cite{vollbrecht:05} and the key 
rate of the QKD protocols \cite{ma:06, watanabe:06} are given by
\begin{eqnarray}
\lefteqn{ 1 - H(P_{\san{XZ}}) } \nonumber \\
&+& \frac{P_{\bar{\san{X}}}(1)}{4} \left\{
 h\left( 
\frac{p_{01}}{p_{00} + p_{01} }
\right) 
+ h\left(
\frac{p_{11}}{p_{10} + p_{11}}
\right) \right\}.
      \label{eq-rate-of-volbrecht}
\end{eqnarray}
We can find by the concavity of the binary entropy function
that the first argument in the maximum of the r.h.s.
of Eq.~(\ref{eq-key-rate-vs-error-rate}) is larger than
the value in Eq.~(\ref{eq-rate-of-volbrecht}).
To explain why the key rate of the proposed protocol is 
higher than that of \cite{ma:06, watanabe:06},
we need to review the EDP \cite{vollbrecht:05} by
using the notations in Section
 \ref{two-way-information-reconciliation-protocol}.
Assume that Alice and Bob share Bell diagonal states,
$\sigma_{AB}^{\otimes 2n}$. First, Alice and Bob
divide $2n$ pairs into $n$ blocks of length $2$, and locally carry out
CNOT operation  on each block,
where the $2i$-th pair is the source and $(2i-1)$-th
pair is the target.
Then, Alice and Bob undertake the breeding protocol \cite{bennett:96} 
to guess bit flip errors in the $(2i-1)$-th pair for all $i$.
The guessed bit flip errors can be described by a sequence,
$\hat{\mbf{w}}_1$. Note that two-way communication is used in this step.
According to sequence $\hat{\mbf{w}}_1$, Alice and Bob classify
 indices of blocks into two sets, $\hat{\san{T}}_0$ and
 $\hat{\san{T}}_1$.
For a collection of $2i$-th pairs such that $i \in \hat{\san{T}}_0$,
Alice and Bob conduct the breeding protocol 
to correct bit flip errors.
For a collection of $2i$-th pairs such that $i \in \hat{\san{T}}_1$,
Alice and Bob perform measurements by $\{ \ket{0_z}, \ket{1_z} \}$
basis, and obtain measurement results,
$\mbf{x}_{2, \hat{\san{T}}_1}$ and $\mbf{y}_{2, \hat{\san{T}}_1}$.
Alice sends $\mbf{x}_{2, \hat{\san{T}}_1}$ to Bob.
Alice and Bob correct the phase errors
for the remaining pairs 
by using information $\hat{\san{T}}_0$ and $\hat{\san{T}}_1$,
and bit flip error 
$\mbf{x}_{2, \hat{\san{T}}_1} + \mbf{y}_{2, \hat{\san{T}}_1}$.

If we convert this EDP into a QKD protocol, 
the difference between that QKD protocol and ours is as follows.
In the protocol converted from \cite{vollbrecht:05}, after Step (\ref{step3}), Alice
reveals the sequence, $\mbf{x}_{2, \hat{\san{T}}_1}$,
which consists of the second bit, $x_{i2}$, of the $i$-th
block such that the parity of discrepancies $\hat{w}_{i1}$
is $1$. 
However, Alice discards $\mbf{x}_{2, \hat{\san{T}}_1}$
in the proposed IR protocol of Section
 \ref{two-way-information-reconciliation-protocol}.
Since sequence $\mbf{x}_{2,\hat{\san{T}}_1}$ has 
some correlation to sequence $\mbf{u}_1$ from
the view point of Eve, Alice should not reveal 
$\mbf{x}_{2, \hat{\san{T}}_1}$ 
to achieve a higher key rate.

In the EDP context, on the other hand, since the bit flip error,
$\mbf{x}_{2, \hat{\san{T}}_1} + \mbf{y}_{2, \hat{\san{T}}_1}$, has some
 correlation to the phase flip errors in the $(2i-1)$-th pair with 
$i \in \hat{\san{T}}_1$, Alice should send the measurement results,
$\mbf{x}_{2, \hat{\san{T}}_1}$, to Bob. If Alice discards measurement
results $\mbf{x}_{2, \hat{\san{T}}_1}$ without telling Bob what
the result is, then the yield of the resulting EDP is worse than
Eq.~(\ref{eq-rate-of-volbrecht}). 
Consequently, there seems to be no correspondence between the EDP
and our proposed classical processing.
\end{remark}

\section{Conclusion}

We proposed an information reconciliation protocol that uses
two-way classical communication.
For the BB84 and six-state protocols,
the key rates of QKD protocols that uses our information
reconciliation protocol are higher than previously known
protocols for a wide range of error rates.
Furthermore, we showed the relation between the proposed
protocol and the B-step of \cite{gottesman:03}
(Remark \ref{remark-relation-to-b-step}).
We clarified why 
the key rate of our protocol is higher than
those of \cite{vollbrecht:05, ma:06, watanabe:06}
(Remark \ref{remark-relation-to-vollbrecht}),
and found that there does not seem to be any EDP that corresponds to
our proposed QKD protocol.

\section*{Acknowledgment}

The first author partly contributed to this work during
his internship at Nippon Telegraph Communication
Science Laboratories. 
This research was also partly
supported by the Japan Society for the Promotion
of Science under a Grants-in-Aid for Young
Scientists, No.~18760266,  and a Grants-in-Aid for
JSPS Fellows.

\appendix
\section{Notations}
\label{notation-and-review-of-known-results}

These appendices are suplementary materials, in which
we prove Theorem 2, Theorem 3, and the fact that the key rate formula
evaluated for a Bell-diagonal state is the worst case.
The proof of Theorem 2 
is based on the proof method of
\cite{kraus:05,renner:05, renner:05b},
especially \cite{renner:05b}.
In Section \ref{notation-and-review-of-known-results},
we review notations and fundamental results that
are used in subsequent sections.
Notations in this paper is almost the same as those in \cite{renner:05b}.
In Section \ref{sec-privacy-amplification},
we review notions of the (smooth) min-entropy,
the (smooth) max-entropy, and the privacy amplification.
Furthermore, we additionally show some lemmas, which are
used to prove Theorem 2 in Section \ref{sec-proof-of-theorem}.
In Section \ref{sec-symmetric-states},
we review the property of symmetric states
and the de Finetti style representation theorem \cite{renner:05b,
renner:07}.
We prove Theorem 2
in Section \ref{sec-proof-of-theorem}.
Section \ref{proof-of-theorem-3} presents a proof of
Theorem 3.
We show the fact that 
the key rate formula
evaluated for a Bell-diagonal state is the worst case
in Section \ref{bell-worst}.

\subsection{Fundamentals}

For a finite set ${\cal X}$, let ${\cal P}({\cal X})$ be 
the set of non-negative functions $P$ on ${\cal X}$, i.e.,
$P(x) \ge 0$ for all $x \in {\cal X}$.
If $P \in {\cal P}({\cal X})$ is normalized, 
i.e., $\sum_{x \in {\cal X}} P(x) = 1$,
then $P$ is a probability distribution on ${\cal X}$.
Unless stated as a probability distribution, 
$P \in {\cal P}({\cal X})$ is not necessarily normalized.

For a finite-dimensional Hilbert space ${\cal H}$, let 
${\cal P}({\cal H})$ be the set of non-negative operator
$\rho$ on ${\cal H}$.
If $\rho \in {\cal P}({\cal H})$ is normalized, i.e., $\rom{Tr} \rho = 1$,
then $\rho$ is called  a  density operator. 
Mathematically, a state of a quantum mechanical system with
$d$-degree of freedom is represented by a density operator on ${\cal H}$
with $\dim {\cal H} = d$.
Unless stated as a density operator or a state, 
$\rho \in {\cal P}({\cal H})$ is not necessarily normalized.
For Hilbert spaces ${\cal H}_A$ and ${\cal H}_B$, the set of non-negative
operators ${\cal P}({\cal H}_A \otimes {\cal H}_B)$ on the tensor product space
${\cal H}_A \otimes {\cal H}_B$ is defined in a similar manner.

The classical random variables can be regarded as a special case of
the quantum states. For a random variable $X$ with a distribution
$P_X \in {\cal P}({\cal X})$, let 
\begin{eqnarray*}
\rho_X := \sum_{x \in {\cal X}} P_X(x) \ket{x}\bra{x},
\end{eqnarray*}
where $\{ \ket{x} \}_{x \in {\cal X}}$ is an orthonormal basis of ${\cal H}_X$.
We call $\rho_X$ the operator representation of the classical
distribution $P_X$.

When a quantum system ${\cal H}_A$ is prepared in a state $\rho^x_A$ according
to a realization $x$ of a random variable $X$ with a probability distribution $P_X$, it is
convenient to denote 
it by a density operator
\begin{eqnarray}
\label{eq-definition-cq-state}
\rho_{XA} := \sum_{x \in {\cal X}} P_X(x) \ket{x}\bra{x} \otimes
 \rho_A^x \in {\cal P}({\cal H}_X \otimes {\cal H}_A),
\end{eqnarray}
where $\{ \ket{x} \}_{x \in {\cal X}}$ is an orthonormal basis of 
${\cal H}_X$.
We call the density operator $\rho_{XA}$ a $\{cq\}$-state \cite{devetak:04},
or we say $\rho_{XA}$ is classical on ${\cal H}_X$.
We call $\rho_A^x$ a conditional operator. 
When a quantum system ${\cal H}_A$ is prepared in a state $\rho_A^{x,y}$
according to a joint random variable $(X, Y)$ with a probability
distribution $P_{XY}$, a state $\rho_{XYA}$ is defined in a similar
manner, and the state $\rho_{XYA}$ is called a $\{ ccq \}$-state. 

In quantum mechanics, the most general state evolution of a
quantum mechanical system is described by a
completely positive (CP) map. It can be shown that
any CP map ${\cal E}$ can be written as
\begin{eqnarray}
	\label{eq-kraus-operator}
{\cal E}(\rho) = \sum_{a \in {\cal A}} E_a \rho E_a^*
\end{eqnarray}
for a family of linear operators
$\{ E_a \}_{a \in {\cal A}}$ from
the initial system ${\cal H}$ to the destination system 
${\cal H}^\prime$.
We usually require the map to be trace preserving (TP),
i.e., $\sum_{a \in {\cal A}} E_a^* E_a = \rom{id}_{\cal H}$,
but if a state evolution involves a measurement,
then the corresponding CP map is not necessarily trace preserving, i.e.,
$\sum_{a \in {\cal A}} E_a^* E_a \le \rom{id}_{\cal H}$.

%%%%%%%%%%%%%%%%%%%%%%%%%%%%%%%%%%%%%%%%%%%%%
%%%%%%%%%%%% Distance %%%%%%%%%%%%%%%%%%%%%%%
\subsection{Distance and fidelity}

In this paper, we use two kind of distances.
One is the variational distance of ${\cal P}({\cal X})$.
For non-negative functions 
$P, P^\prime \in {\cal P}({\cal X})$, the variational distance between
$P$ and $P^\prime$ is defined by
\begin{eqnarray*}
\| P - P^\prime \| := \sum_{x \in {\cal X}} | P(x) - P^\prime(x) |.
\end{eqnarray*}
The other distance used in this paper is 
the trace distance of ${\cal P}({\cal H})$.
For nen-negative operators $\rho, \sigma \in {\cal P}({\cal H})$, the trace distance
between $\rho$ and $\sigma$ is defined by
\begin{eqnarray*}
\| \rho - \sigma \| := \rom{Tr} | \rho - \sigma |,
\end{eqnarray*}
where $|A| := \sqrt{A^* A}$ for a operator on ${\cal H}$, and 
$A^*$ is the adjoint operator of $A$.
The following lemma states that the trace distance 
between (not necessarily normalized operators) does not 
increase by applying a CP map, and it is used several times 
in this paper.
\begin{lemma}
\cite[Lemma A.2.1]{renner:05b}
Let $\rho, \rho^\prime \in {\cal P}({\cal H})$ and let 
${\cal E}$ be a trace-non-increasing CP map, i.e., ${\cal E}$ satisfies
$\rom{Tr}{\cal E}(\sigma) \le \rom{Tr} \sigma$ for any
$\sigma \in {\cal P}({\cal H})$. Then we have
\begin{eqnarray*}
\| {\cal E}(\rho) - {\cal E}(\rho^\prime) \| \le \| \rho - \rho^\prime \|.
\end{eqnarray*}
\end{lemma}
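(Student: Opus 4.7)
The plan is to reduce the problem to positive operators by Jordan--Hahn decomposition of the Hermitian operator $\rho - \rho^\prime$. Since both $\rho$ and $\rho^\prime$ are self-adjoint, so is their difference, and therefore there exist unique positive operators $\Delta_+, \Delta_- \in {\cal P}({\cal H})$ with orthogonal supports such that
\begin{eqnarray*}
\rho - \rho^\prime = \Delta_+ - \Delta_-, \qquad |\rho - \rho^\prime| = \Delta_+ + \Delta_-.
\end{eqnarray*}
Consequently $\| \rho - \rho^\prime \| = \rom{Tr}\Delta_+ + \rom{Tr}\Delta_-$. First I would record this decomposition and its immediate consequence for the trace norm.

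Next I would exploit linearity of ${\cal E}$ to write ${\cal E}(\rho) - {\cal E}(\rho^\prime) = {\cal E}(\Delta_+) - {\cal E}(\Delta_-)$, and then apply the triangle inequality for the trace norm to obtain
\begin{eqnarray*}
\| {\cal E}(\rho) - {\cal E}(\rho^\prime) \| \le \| {\cal E}(\Delta_+) \| + \| {\cal E}(\Delta_-) \|.
\end{eqnarray*}
Now the key observation is that any CP map is in particular positivity preserving (this follows from the Kraus representation in Eq.~(\ref{eq-kraus-operator}), since $\sum_a E_a \Delta_{\pm} E_a^*$ is a sum of positive operators whenever $\Delta_\pm \ge 0$). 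Therefore ${\cal E}(\Delta_+)$ and ${\cal E}(\Delta_-)$ are themselves non-negative, and for any non-negative operator $A$ one has $\| A \| = \rom{Tr} |A| = \rom{Tr} A$.

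Combining these observations yields $\| {\cal E}(\Delta_\pm) \| = \rom{Tr}\, {\cal E}(\Delta_\pm)$, and applying the trace-non-increasing hypothesis to the positive operators $\Delta_+$ and $\Delta_-$ separately gives $\rom{Tr}\, {\cal E}(\Delta_\pm) \le \rom{Tr}\, \Delta_\pm$. Putting everything together,
\begin{eqnarray*}
\| {\cal E}(\rho) - {\cal E}(\rho^\prime) \| &\le& \rom{Tr}\, {\cal E}(\Delta_+) + \rom{Tr}\, {\cal E}(\Delta_-) \\
&\le& \rom{Tr}\, \Delta_+ + \rom{Tr}\, \Delta_- \\
&=& \| \rho - \rho^\prime \|,
\end{eqnarray*}
which is the desired inequality. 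There is no real obstacle here; the only subtlety is to resist the temptation to apply the trace-non-increasing condition directly to $\rho - \rho^\prime$, which is not non-negative, and instead first separate it into its positive and negative parts so that both hypotheses (positivity preservation from CP, and trace-non-increase) can be invoked on genuinely non-negative inputs.
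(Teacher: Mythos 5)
The paper cites this lemma from Renner's thesis (Lemma A.2.1) without reproducing a proof, so there is no in-paper argument to compare against. Your proof is correct and is the standard one for this statement: apply the Jordan--Hahn decomposition to the Hermitian operator $\rho - \rho'$, then use linearity, the triangle inequality for the trace norm, positivity preservation (from the Kraus form of a CP map), the identity $\|A\| = \rom{Tr}\,A$ for $A \ge 0$, and finally the trace-non-increasing hypothesis applied separately to $\Delta_+$ and $\Delta_-$; every step is valid and the subtlety you flag — that one must not apply the trace-non-increasing condition to the non-positive operator $\rho - \rho'$ directly — is precisely the point of the decomposition.
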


%%%%%%%%%%
The following lemma states that, for a $\{cq\}$-state 
$\rho_{XB}$, if two classical messages $v$ and $\bar{v}$ are
computed from $x$ and they are equal with high probability,
then the $\{ccq\}$ state $\rho_{XVB}$ and $\rho_{X\bar{V}B}$ that
involve computed classical messages $v$ and $\bar{v}$ 
are close with respect to the trace distance.
\begin{lemma}
	\label{lemma-error-prob-continuity}
Let 
\begin{eqnarray*}
\rho_{XB} := \sum_{x \in {\cal X}} P_X(x) \ket{x}\bra{x} \otimes \rho_B^x
\end{eqnarray*}
be a $\{cq\}$-state, and let $V := f(X)$ for a function $f$ and
$\bar{V} := g(X)$ for a function $g$. Assume that 
\begin{eqnarray*}
\Pr\{ V \neq \overline{V} \} =
\sum_{\scriptstyle x \in {\cal X}  \atop
f(x) \neq g(x)} P_X(x) 
\le \varepsilon.
\end{eqnarray*}
Then, for $\{ ccq\}$-states
\begin{eqnarray*}
\rho_{XVB} := \sum_{x \in {\cal X} }
P_X(x) \ket{x}\bra{x}  \otimes \ket{f(x)}\bra{f(x)}
\otimes \rho_B^x
\end{eqnarray*}
and 
\begin{eqnarray*}
\rho_{X\overline{V}B} := \sum_{x \in {\cal X} }
P_X(x) \ket{x}\bra{x}  \otimes \ket{g(x)}\bra{g(x)}
\otimes \rho_B^x,
\end{eqnarray*}
we have
\begin{eqnarray*}
\| \rho_{XVB} - \rho_{X\overline{V}B} \| \le 2 \varepsilon.
\end{eqnarray*}
\end{lemma}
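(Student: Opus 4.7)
The plan is to write the difference $\rho_{XVB} - \rho_{X\bar{V}B}$ as a single sum over $x \in \mathcal{X}$, split it according to whether $f(x) = g(x)$ or not, and exploit the block-diagonal structure induced by the orthogonal basis $\{\ket{x}\}$ on $\mathcal{H}_X$. Concretely, I would start from
\begin{eqnarray*}
\rho_{XVB} - \rho_{X\bar{V}B} = \sum_{x \in \mathcal{X}} P_X(x) \ket{x}\bra{x} \otimes \Delta_x,
\end{eqnarray*}
where $\Delta_x := (\ket{f(x)}\bra{f(x)} - \ket{g(x)}\bra{g(x)}) \otimes \rho_B^x$. For every $x$ with $f(x) = g(x)$, the operator $\Delta_x$ vanishes, so only the indices where $f(x)\neq g(x)$ contribute.

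Next I would use the fact that the projectors $\{\ket{x}\bra{x}\}_{x\in\mathcal{X}}$ act on mutually orthogonal subspaces of $\mathcal{H}_X$, which makes the overall operator block-diagonal in $x$. The trace norm of a block-diagonal operator is the sum of the trace norms of the blocks, so
\begin{eqnarray*}
\| \rho_{XVB} - \rho_{X\bar{V}B} \| = \sum_{x : f(x) \neq g(x)} P_X(x) \, \| \Delta_x \|.
\end{eqnarray*}
Then I would bound $\|\Delta_x\|$ using the multiplicativity of the trace norm under tensor products: $\|\Delta_x\| = \|\ket{f(x)}\bra{f(x)} - \ket{g(x)}\bra{g(x)}\| \cdot \mathrm{Tr}\,\rho_B^x$. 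When $f(x) \neq g(x)$, the two rank-one projectors are orthogonal, so their difference has eigenvalues $\pm 1$ and trace norm $2$; combined with $\mathrm{Tr}\,\rho_B^x = 1$, this gives $\|\Delta_x\| = 2$.

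Substituting back yields $\|\rho_{XVB} - \rho_{X\bar{V}B}\| = 2 \sum_{x:f(x)\neq g(x)} P_X(x) = 2 \Pr\{V \neq \bar{V}\} \le 2\varepsilon$, which is the claim. There is really no hard step here; the only subtlety worth double-checking is that the decomposition into blocks indexed by $x$ is genuinely orthogonal (so that trace norms add rather than merely satisfy the triangle inequality), which follows directly from $\bra{x}x'\rangle = \delta_{xx'}$ and makes the bound tight rather than off by a factor.
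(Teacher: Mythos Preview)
Your proof is correct and follows essentially the same approach as the paper's own proof: both exploit the block-diagonal structure in $x$, the multiplicativity of the trace norm under tensor products, and the fact that $\|\ket{f(x)}\bra{f(x)} - \ket{g(x)}\bra{g(x)}\| = 2(1-\delta_{f(x),g(x)})$. Your write-up is in fact slightly more explicit about why the blocks are orthogonal and why the projector difference has trace norm $2$, but the argument is the same.
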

\begin{proof}
We have
\begin{eqnarray*}
&& \| \rho_{XVB} - \rho_{X\overline{V}B} \| \\
&=& \sum_{x \in {\cal X} } P_X(x)
\| \ket{x}\bra{x} \| \\
&& ~~~\cdot \| \ket{f(x)}\bra{f(x)} - \ket{g(x)}\bra{g(x)} \|
\cdot \| \rho_B^x \| \\
&=& \sum_{x \in {\cal X} } P_X(x)
\cdot 2(1- \delta_{f(x), g(x)}) \\
&\le& 2 \varepsilon,
\end{eqnarray*}
where $\delta_{a,b} = 1$ if $a=b$ and $\delta_{a,b} = 0$ if $a \neq b$. 
\end{proof}

%%%%%%%%%%%%%%%%%%%%%%%%%%%%%%%%%%%%%%%%%%%%%%%%
%%%%%%%%%%%% Fidelity %%%%%%%%%%%%%%%%%%%%%%%%%%
The fidelity between two (not necessarily normalized) operators 
$\rho, \sigma \in {\cal P}({\cal H})$
is defined by
\begin{eqnarray*}
F(\rho, \sigma) := \rom{Tr} \sqrt{\sqrt{\rho}\sigma \sqrt{\rho}}.
\end{eqnarray*}
The following lemma is an extension of Uhlmann's theorem to 
non-normalized operators $\rho$ and $\sigma$.
\begin{lemma}
\cite[Theorem A.1.2]{renner:05b}
	\label{uhlman-theorem}
Let $\rho, \sigma \in {\cal P}({\cal H})$,
and let $\ket{\psi} \in {\cal H}_R \otimes {\cal H}$ be a purification of $\rho$. Then
\begin{eqnarray*}
F(\rho, \sigma) = \max_{\ket{\phi}\bra{\phi}} F(\ket{\psi}\bra{\psi}, \ket{\phi}\bra{\phi}),
\end{eqnarray*}
where the maximum is taken over all purifications 
$\ket{\phi} \in {\cal H}_R \otimes {\cal H}$ of $\sigma$.
\end{lemma}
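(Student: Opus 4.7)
\emph{Proof proposal.} The plan is to reduce the unnormalized claim to the standard Uhlmann theorem for density operators by exploiting the homogeneity of the fidelity. The crucial scaling relation is that $F(c\rho, c'\sigma) = \sqrt{c c'}\, F(\rho,\sigma)$ for any $c, c' \ge 0$, which follows immediately from $\sqrt{c\rho} = \sqrt{c}\sqrt{\rho}$ inside the definition $F(\rho,\sigma) = \rom{Tr}\sqrt{\sqrt{\rho}\sigma\sqrt{\rho}}$. A parallel rank-one spectral calculation, exploiting that the only nonzero eigenvalue of $\ket{\psi}\bra{\psi}$ is $\langle\psi|\psi\rangle$, gives without any normalization of $\ket{\psi}$ or $\ket{\phi}$ the identity $F(\ket{\psi}\bra{\psi}, \ket{\phi}\bra{\phi}) = |\langle\psi|\phi\rangle|$.

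First I would dispose of the degenerate case $\rho = 0$ or $\sigma = 0$, where both sides vanish. Otherwise, with $t_\rho := \rom{Tr}\rho$ and $t_\sigma := \rom{Tr}\sigma$ both positive, set $\tilde\rho := \rho/t_\rho$ and $\tilde\sigma := \sigma/t_\sigma$. The purification constraint $\rom{Tr}_R \ket{\psi}\bra{\psi} = \rho$ forces $\|\ket{\psi}\|^2 = t_\rho$, so $\ket{\tilde\psi} := \ket{\psi}/\sqrt{t_\rho}$ is a unit-norm purification of the density operator $\tilde\rho$, and the rescaling $\ket{\phi} \leftrightarrow \ket{\phi}/\sqrt{t_\sigma}$ sets up a bijection between purifications of $\sigma$ and unit-norm purifications of $\tilde\sigma$ on the same reference space ${\cal H}_R$. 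Invoking the standard Uhlmann theorem for $\tilde\rho,\tilde\sigma$ with fixed purification $\ket{\tilde\psi}$ gives $F(\tilde\rho,\tilde\sigma) = \max_{\ket{\tilde\phi}} |\langle\tilde\psi|\tilde\phi\rangle|$, and multiplying through by $\sqrt{t_\rho t_\sigma}$, using the scaling identity on the left and the rank-one identity together with the bijection on the right, yields exactly the stated claim.

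The only non-routine input is the standard (normalized) Uhlmann theorem, which is cited here rather than reproved; its proof proceeds by parametrising all purifications of a fixed density operator as partial isometries on the reference system applied to a canonical purification, rewriting the inner product of two canonical purifications as $\rom{Tr}(\sqrt{\rho}\, V\sqrt{\sigma})$ for some partial isometry $V$ on ${\cal H}_R$, and then applying the polar-decomposition identity $\max_V |\rom{Tr}(V\sqrt{\sigma}\sqrt{\rho})| = \rom{Tr}|\sqrt{\sigma}\sqrt{\rho}| = F(\rho,\sigma)$. Given that this machinery is available off the shelf, the only real work in the present extension is to verify that the rescaling respects simultaneously the purification condition and the fidelity, and the homogeneity relation handles both pieces of bookkeeping uniformly; this is where I expect any minor subtlety to reside, especially in checking that the bijection between purifications preserves the supremum.
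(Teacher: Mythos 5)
The paper does not prove this lemma; it simply cites \cite[Theorem A.1.2]{renner:05b}. Your reduction to the normalized case is a correct and complete argument. The scaling identity $F(c\rho, c'\sigma) = \sqrt{cc'}\,F(\rho,\sigma)$ follows immediately from $\sqrt{c\rho} = \sqrt{c}\sqrt{\rho}$ and the homogeneity of $\sqrt{\cdot}$ and $\rom{Tr}$, and your rank-one computation correctly gives $F(\ket{\psi}\bra{\psi},\ket{\phi}\bra{\phi}) = |\langle\psi|\phi\rangle|$ without assuming either vector is normalized. The purification constraint $\rom{Tr}_R\ket{\psi}\bra{\psi} = \rho$ indeed forces $\|\psi\|^2 = \rom{Tr}\rho$, so the rescaling $\ket{\phi}\mapsto\ket{\phi}/\sqrt{t_\sigma}$ is a genuine bijection between purifications of $\sigma$ on ${\cal H}_R\otimes{\cal H}$ and unit-norm purifications of $\tilde\sigma = \sigma/t_\sigma$ on the same space, which is exactly what is needed to pull the supremum through the scaling. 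The degenerate cases $\rho = 0$ or $\sigma = 0$ are correctly dispatched (in either case the only purification is the zero vector, so both sides vanish).

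A minor comparison: Renner's own proof works directly with non-normalized operators by parametrizing purifications via partial isometries on ${\cal H}_R$ and invoking $\max_V|\rom{Tr}(V\sqrt{\sigma}\sqrt{\rho})| = \rom{Tr}|\sqrt{\sigma}\sqrt{\rho}|$, which you outline in your last paragraph; your route instead defers that machinery entirely to the normalized Uhlmann theorem and handles the general case through two homogeneity observations. Both are sound; yours is arguably cleaner for a reader already comfortable with the standard statement. One implicit hypothesis you inherit from the cited theorem (and which you should be aware of, though it is not your responsibility to fix) is that ${\cal H}_R$ must be large enough to admit a purification of $\sigma$ attaining the maximum; this is tacitly assumed in the lemma as stated.
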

The trace distance and the fidelity have close relationship.
If the trace distance
between two density operators $\rho$ and $\sigma$ is close to $0$,
then the fidelity between $\rho$ and $\sigma$ is close to $1$,
and vise versa.
%%%%%%%%%
\begin{lemma}
\cite[Lemma A.2.4]{renner:05b}
	\label{upper-bound-by-fidelity}
Let $\rho, \sigma \in {\cal P}({\cal H})$. Then, we have
\begin{eqnarray*}
\| \rho - \sigma \| \le 
\sqrt{(\rom{Tr}\rho + \rom{Tr} \sigma)^2 - 4 F(\rho, \sigma)^2 }.
\end{eqnarray*}
\end{lemma}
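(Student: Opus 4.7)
My plan is to reduce the claim to the pure-state case via Uhlmann's theorem (Lemma \ref{uhlman-theorem}) and the monotonicity of the trace distance (Lemma already recalled just before it), and then carry out the pure-state computation by hand in a convenient two-dimensional subspace.

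First, I would invoke Uhlmann's theorem: pick any purification $\ket{\psi}\in\mathcal{H}_R\otimes\mathcal{H}$ of $\rho$, and then select the purification $\ket{\phi}\in\mathcal{H}_R\otimes\mathcal{H}$ of $\sigma$ that attains the maximum in Lemma \ref{uhlman-theorem}, so that $F(\ket{\psi}\bra{\psi},\ket{\phi}\bra{\phi})=F(\rho,\sigma)$. Taking the partial trace over $\mathcal{H}_R$ is a trace-preserving CP map, so the monotonicity lemma gives $\|\rho-\sigma\|\le\|\ket{\psi}\bra{\psi}-\ket{\phi}\bra{\phi}\|$. Noting that $\mathrm{Tr}\ket{\psi}\bra{\psi}=\mathrm{Tr}\rho$ and $\mathrm{Tr}\ket{\phi}\bra{\phi}=\mathrm{Tr}\sigma$, and that for (possibly unnormalized) rank-one operators $F(\ket{\psi}\bra{\psi},\ket{\phi}\bra{\phi})=|\langle\psi|\phi\rangle|$ by direct computation of $\sqrt{\sqrt{\ket{\psi}\bra{\psi}}\ket{\phi}\bra{\phi}\sqrt{\ket{\psi}\bra{\psi}}}$, it suffices to prove the bound for two unnormalized rank-one operators.

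For the pure-state step, set $a=\langle\psi|\psi\rangle$, $b=\langle\phi|\phi\rangle$, $c=|\langle\psi|\phi\rangle|$. The difference $\Delta:=\ket{\psi}\bra{\psi}-\ket{\phi}\bra{\phi}$ is supported on the at-most-two-dimensional subspace $\mathrm{span}\{\ket{\psi},\ket{\phi}\}$, so its trace norm is determined by the two eigenvalues $\lambda_1,\lambda_2$ of the associated $2\times 2$ matrix. I would pick an orthonormal basis of this subspace whose first vector is $\ket{\psi}/\sqrt{a}$, write out the matrix of $\Delta$, and read off $\lambda_1+\lambda_2=a-b$ and $\lambda_1\lambda_2=c^{2}-ab$. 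By Cauchy--Schwarz $c^2\le ab$, so $\lambda_1\lambda_2\le 0$ and the two eigenvalues have opposite signs; hence $\|\Delta\|=|\lambda_1|+|\lambda_2|$ and
\begin{eqnarray*}
\|\Delta\|^{2} &=& (\lambda_1-\lambda_2)^2 \;=\; (\lambda_1+\lambda_2)^2-4\lambda_1\lambda_2 \\
&=& (a-b)^2 - 4(c^2-ab) \;=\; (a+b)^2 - 4c^2.
\end{eqnarray*}
Substituting $a=\mathrm{Tr}\rho$, $b=\mathrm{Tr}\sigma$, $c=F(\rho,\sigma)$ and combining with the monotonicity step yields the stated inequality.

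I do not anticipate a serious obstacle: Uhlmann and CP-monotonicity are provided in the excerpt, and the only real calculation is the two-by-two eigenvalue identity above. The one point I would be careful about is verifying the fidelity formula $F=|\langle\psi|\phi\rangle|$ for unnormalized pure states (since the square roots $\sqrt{\ket{\psi}\bra{\psi}}=\ket{\psi}\bra{\psi}/\sqrt{a}$ now carry nontrivial factors), and confirming that Uhlmann's optimization is over \emph{all} purifications (not just normalized ones) so that the optimal overlap really equals $F(\rho,\sigma)$; both points follow directly from the definitions in the previous subsection.
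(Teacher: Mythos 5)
Your proof is correct, and it follows what I would expect to be the standard Uhlmann-plus-pure-state route (the paper cites this lemma from Renner's thesis without reproducing a proof, so there is nothing in the paper itself to compare against). All the pieces check out: the paper's Lemma~\ref{uhlman-theorem} is explicitly stated for not-necessarily-normalized operators, so picking the Uhlmann-optimal purification $\ket{\phi}$ of $\sigma$ yields $F(\ket{\psi}\bra{\psi},\ket{\phi}\bra{\phi})=F(\rho,\sigma)$ and then monotonicity of the trace distance under the partial trace gives $\|\rho-\sigma\|\le\|\ket{\psi}\bra{\psi}-\ket{\phi}\bra{\phi}\|$. The fidelity identity $F(\ket{\psi}\bra{\psi},\ket{\phi}\bra{\phi})=|\langle\psi|\phi\rangle|$ indeed survives the lack of normalization, since $\sqrt{\ket{\psi}\bra{\psi}}=\ket{\psi}\bra{\psi}/\sqrt{a}$ and the trailing factors cancel when one takes $\mathrm{Tr}\sqrt{\cdot}$. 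For the $2\times2$ block one has $\mathrm{tr}\,\Delta=a-b$ and $\det\Delta = c^2 - ab \le 0$ by Cauchy--Schwarz, so $\|\Delta\| = |\lambda_1-\lambda_2|$ and the identity $(\lambda_1-\lambda_2)^2=(\lambda_1+\lambda_2)^2-4\lambda_1\lambda_2=(a+b)^2-4c^2$ gives exactly the claim; the degenerate cases ($\ket{\psi}\parallel\ket{\phi}$, or one vector zero) are handled automatically since then $\lambda_1\lambda_2=0$ and $\|\Delta\|=|a-b|$. No gaps.
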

\begin{lemma}
\cite[Lemma A.2.6]{renner:05b}
	\label{lower-bound-by-fidelity}
Let $\rho, \sigma \in {\cal P}({\cal H})$. Then, we have
\begin{eqnarray*}
\rom{Tr}\rho + \rom{Tr} \sigma - 2 F(\rho, \sigma) 
\le \| \rho - \sigma \|.
\end{eqnarray*}
\end{lemma}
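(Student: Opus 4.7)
The plan is to prove the inequality by reducing it to its elementary classical analogue via a suitably chosen measurement. The key tool is the measurement characterization of fidelity due to Fuchs: there exists a POVM $\{E_a\}_a$ on $\mathcal{H}$ such that, setting $p_a := \mathrm{Tr}(E_a \rho)$ and $q_a := \mathrm{Tr}(E_a \sigma)$, the classical Bhattacharyya overlap of the outcome distributions coincides with the quantum fidelity,
\[
\sum_a \sqrt{p_a q_a} \;=\; F(\rho, \sigma).
\]
(The standard construction takes an optimal measurement built from $\sigma^{-1/2}\sqrt{\sigma^{1/2}\rho\sigma^{1/2}}\sigma^{-1/2}$; the non-normalized case follows from the normalized one by joint rescaling of $\rho$ and $\sigma$, or equivalently from Uhlmann's theorem applied to purifications $|\psi\rangle,|\phi\rangle$ with $\langle\psi|\phi\rangle = F(\rho,\sigma)$.)

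Given this POVM, I would combine two ingredients. First, a POVM measurement is a trace-preserving CP map from $\mathcal{P}(\mathcal{H})$ to the classical space of probability-like vectors, so by the contractivity of trace distance under CP maps (Lemma~3 in the excerpt) one obtains $\sum_a |p_a - q_a| \le \|\rho - \sigma\|$. Second, for any non-negative reals one has the elementary pointwise bound $\sqrt{p_a q_a} \ge \min(p_a, q_a) = \tfrac{1}{2}(p_a + q_a - |p_a - q_a|)$. Summing over $a$ and using $\sum_a p_a = \mathrm{Tr}\,\rho$ and $\sum_a q_a = \mathrm{Tr}\,\sigma$ gives
\[
F(\rho, \sigma) \;\ge\; \tfrac{1}{2}\!\left( \mathrm{Tr}\,\rho + \mathrm{Tr}\,\sigma - \sum_a |p_a - q_a| \right)
\;\ge\; \tfrac{1}{2}\!\left( \mathrm{Tr}\,\rho + \mathrm{Tr}\,\sigma - \|\rho - \sigma\| \right),
\]
which rearranges to the desired inequality.

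The main obstacle is justifying the measurement characterization of fidelity for not-necessarily-normalized $\rho$ and $\sigma$, since Fuchs' original statement is usually given for density operators. Pure rescaling does not reduce to the normalized case because $\mathrm{Tr}\,\rho$ and $\mathrm{Tr}\,\sigma$ may differ, so I would either cite a non-normalized version directly or derive it via Uhlmann (Lemma~6): choose purifications $|\psi\rangle,|\phi\rangle \in \mathcal{H}_R \otimes \mathcal{H}$ realizing $|\langle\psi|\phi\rangle| = F(\rho,\sigma)$, and measure $\mathcal{H}_R$ in a basis for which the induced classical distributions satisfy $\sum_a \sqrt{p_a q_a} = |\langle\psi|\phi\rangle|$. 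Once this characterization is in hand, the remainder of the argument is essentially the classical Fuchs--van~de~Graaf lower bound, so no further nontrivial step is required.
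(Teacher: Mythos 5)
The paper states this lemma without proof, attributing it to [Lemma A.2.6] of Renner's thesis, so there is no in-paper argument to compare against. Your derivation via Fuchs' measurement characterization of fidelity is correct and is essentially the standard way to obtain this Fuchs--van~de~Graaf lower bound; the chain $F(\rho,\sigma)=\sum_a\sqrt{p_a q_a}\ge\sum_a\min(p_a,q_a)=\tfrac12\bigl(\mathrm{Tr}\,\rho+\mathrm{Tr}\,\sigma-\sum_a|p_a-q_a|\bigr)\ge\tfrac12\bigl(\mathrm{Tr}\,\rho+\mathrm{Tr}\,\sigma-\|\rho-\sigma\|\bigr)$ goes through unchanged for unnormalized operators, and rearranges to the stated inequality.

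Your worry about extending the measurement characterization to the unnormalized case is, however, misplaced, and rescaling \emph{does} suffice. The identity $F(\rho,\sigma)=\min_{\{E_a\}}\sum_a\sqrt{\mathrm{Tr}(E_a\rho)\,\mathrm{Tr}(E_a\sigma)}$ is bilinearly homogeneous: under $\rho\mapsto c\rho$, $\sigma\mapsto d\sigma$ both $F(c\rho,d\sigma)$ and $\sum_a\sqrt{(cp_a)(dq_a)}$ pick up exactly the factor $\sqrt{cd}$. Hence you may rescale $\rho$ and $\sigma$ by \emph{different} constants; the optimal POVM for the density operators $\rho/\mathrm{Tr}\rho$ and $\sigma/\mathrm{Tr}\sigma$ is automatically optimal for $\rho$ and $\sigma$ themselves and still produces a Bhattacharyya overlap equal to $F(\rho,\sigma)$. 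By contrast, the Uhlmann fallback you sketch is the weaker route: measuring the reference system $\mathcal{H}_R$ yields outcome distributions for the reference-side marginals $\mathrm{Tr}_{\mathcal{H}}\ket{\psi}\bra{\psi}$ and $\mathrm{Tr}_{\mathcal{H}}\ket{\phi}\bra{\phi}$, not for $\rho$ and $\sigma$, so the needed contractivity bound $\sum_a|p_a-q_a|\le\|\rho-\sigma\|$ does not come for free, and the equality $\sum_a\sqrt{p_aq_a}=|\langle\psi|\phi\rangle|$ is not guaranteed for a measurement on $\mathcal{H}_R$ alone. Stick with the rescaled Fuchs argument; it is both correct and one line.
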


%%%%%%%%%%%%%%%%%%%%%%%%%%%%%%%%%%%%%%%%%%%%%%%%%%%%%%%
%%%%%%%%%%%%%%%%% Entropy %%%%%%%%%%%%%%%%%%%%%%%%%%%%%
\subsection{Entropy}

For a random variable $X$ on ${\cal X}$ with
a probability distribution $P_X \in {\cal P}({\cal X})$, the entropy
of $X$ is defined by
\begin{eqnarray*}
H(X) = H(P_X) := - \sum_{x \in {\cal X}} P_X(x) \log P_X(x),
\end{eqnarray*}
where the base of $\log$ is $2$.
Especially for a real number $0 \le p \le 1$,
the binary entropy function is defined by 
\begin{eqnarray*}
h(p) := - p \log p - (1-p) \log (1-p).
\end{eqnarray*}

Similarly, for a joint random variables $X$ and $Y$ with
a joint  probability distribution $P_{XY} \in {\cal P}({\cal X} \times {\cal Y})$,
the joint entropy of $X$ and $Y$ is
\begin{eqnarray*}
H(XY) &=& H(P_{XY}) \\
&:=& - \sum_{(x,y) \in {\cal X} \times {\cal Y}}
P_{XY}(x,y) \log P_{XY}(x,y).
\end{eqnarray*}

The conditional entropy of $X$ given $Y$ is defined by
\begin{eqnarray*}
H(X|Y) := H(XY) - H(Y).
\end{eqnarray*}
For a quantum state $\rho \in {\cal P}({\cal H})$, the von Neumann entropy
of the system is defined by
\begin{eqnarray*}
H(\rho) := \rom{Tr} \rho \log \rho.
\end{eqnarray*}

For a quantum state $\rho_{AB} \in {\cal P}({\cal H}_A \otimes {\cal H}_B)$
of the composite system, the von Neumann entropy of the composite system
is $H(\rho_{AB})$.
The conditional von Neaumann entropy of the system $A$ given the system $B$
is defined by
\begin{eqnarray*}
H_{\rho}(A|B) := H(\rho_{AB}) - H(\rho_B),
\end{eqnarray*}
where $\rho_B = \rom{Tr}_A [ \rho_{AB} ]$ is the partial trace of
$\rho_{AB}$ over the system $A$.

\begin{remark}
    \label{remark-convention-2}
In this paper, we denote $\rho_A$ for  $\rom{Tr}_{B}[ \rho_{AB}]$ or
$\rho_{B}$ for $\rom{Tr}_{AC}[ \rho_{ABC}]$ e.t.c. without
declaring them if they are obvious from the context. 
\end{remark}

%%%%%%%%%%%%%%%%%%%%%%%%%%%%%%%%%%%%%%%%%%%%%%%%%
%%%%%%%%% Type %%%%%%%%%%%%%%%%%%%%%%%%%%%%%%%%%%
\subsection{Method of type}

In this section, we review the method of type
that are used in this paper
(see \cite[Chapter 11]{cover} for more detail).

For a sequence $\mbf{x} = (x_1, \ldots, x_n) \in {\cal X}^n$,
the type of $\mbf{x}$ is the 
empirical probability distribution $P_{\mbf{x}} \in {\cal P}({\cal X})$ defined by
\begin{eqnarray*}
P_{\mbf{x}}(a) := \frac{ | \{ i \mid x_i = a \} | }{n}~~~~~~\mbox{for }
a \in {\cal X},
\end{eqnarray*}
where $|A|$ is the cardinality of a set $A$.
Let 
\begin{eqnarray*}
{\cal P}_n({\cal X}) := \{ P_{\mbf{x}} \mid  \mbf{x} \in {\cal X}^n \}
\end{eqnarray*}
be the set of all types on ${\cal X}^n$.
It is easy to confirm that 
\begin{eqnarray*}
|{\cal P}_n({\cal X})| \le (n+1)^{(|{\cal X}| - 1)}.
\end{eqnarray*}
For $Q \in {\cal P}_n({\cal X})$, 
\begin{eqnarray*}
{\cal T}_Q^n({\cal X}) := \{ \mbf{x} \in {\cal X}^n \mid
P_{\mbf{x}} = Q \}
\end{eqnarray*}
is the set of all sequences of type $Q$.

The probability that sequences in the set ${\cal T}_Q^n$ occur can be
expressed in terms of the divergence.
\begin{lemma}
\cite[Theorem 11.1.4]{cover}
For any probability distribution $P \in {\cal P}({\cal X})$ and 
for any type $Q \in {\cal P}_n({\cal X})$, we have
\begin{eqnarray*}
\frac{1}{(n+1)^{(|{\cal X}|-1)}} \exp\{ -n D(Q\| P)\} 
&\le&  P^n({\cal T}_Q^n) \\
&\le&  \exp\{ -n D(Q \| P) \},
\end{eqnarray*}
where $P^n({\cal T}_Q^n) := \sum_{\mbf{x} \in {\cal T}_Q^n}
 P^n(\mbf{x})$, 
the base of $\exp\{\}$ is $2$,
and
$D(Q \| P)$ is the divergence defined by 
\begin{eqnarray*}
D(Q \| P) := \sum_{x \in {\cal X}} Q(x) \log \frac{Q(x)}{P(x)}.
\end{eqnarray*}
\end{lemma}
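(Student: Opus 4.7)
The plan is to reduce the bound on $P^n(\mathcal{T}_Q^n)$ to two classical facts: (i) every sequence of a given type has the same probability, expressible cleanly via entropy and divergence, and (ii) a type class has cardinality close to $2^{nH(Q)}$ up to a polynomial factor.

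First I would compute $P^n(\mathbf{x})$ for a single sequence $\mathbf{x}\in\mathcal{T}_Q^n$. Since $P^n$ is a product measure, $P^n(\mathbf{x})=\prod_{a\in\mathcal{X}}P(a)^{nQ(a)}$, where the exponent only sees the type of $\mathbf{x}$. Taking logarithms and adding and subtracting $\sum_a Q(a)\log Q(a)$, one rewrites this as
\begin{eqnarray*}
P^n(\mathbf{x}) \;=\; \exp\bigl\{-n\bigl(H(Q)+D(Q\|P)\bigr)\bigr\}
\end{eqnarray*}
for every $\mathbf{x}\in\mathcal{T}_Q^n$. Consequently $P^n(\mathcal{T}_Q^n)=|\mathcal{T}_Q^n|\exp\{-n(H(Q)+D(Q\|P))\}$, and the lemma will follow from sandwich bounds on $|\mathcal{T}_Q^n|$.

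Next I would establish the size bounds
\begin{eqnarray*}
\frac{1}{(n+1)^{|\mathcal{X}|-1}}\,2^{nH(Q)} \;\le\; |\mathcal{T}_Q^n| \;\le\; 2^{nH(Q)}.
\end{eqnarray*}
The upper bound is immediate: applying the identity above with $P$ replaced by $Q$ gives $Q^n(\mathbf{x})=2^{-nH(Q)}$ for all $\mathbf{x}\in\mathcal{T}_Q^n$, so $1\ge Q^n(\mathcal{T}_Q^n)=|\mathcal{T}_Q^n|\,2^{-nH(Q)}$. For the lower bound I would show that under $Q^n$, the class $\mathcal{T}_Q^n$ is the most likely type class, i.e.\ $Q^n(\mathcal{T}_Q^n)\ge Q^n(\mathcal{T}_{Q'}^n)$ for every $Q'\in\mathcal{P}_n(\mathcal{X})$; this is a short multinomial-coefficient calculation comparing ratios $\binom{n}{nQ}/\binom{n}{nQ'}$ term by term and using $\prod_a Q(a)^{n(Q(a)-Q'(a))}\le 1$ (which is the point where one invokes nonnegativity of the KL divergence $D(Q'\|Q)$ implicitly). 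Combined with $|\mathcal{P}_n(\mathcal{X})|\le (n+1)^{|\mathcal{X}|-1}$, this gives $Q^n(\mathcal{T}_Q^n)\ge (n+1)^{-(|\mathcal{X}|-1)}$ and hence the claimed lower bound on $|\mathcal{T}_Q^n|$.

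Finally I would combine the two ingredients: multiplying the single-sequence probability by the sandwich bound on $|\mathcal{T}_Q^n|$, the $H(Q)$ terms cancel, leaving exactly $(n+1)^{-(|\mathcal{X}|-1)}\exp\{-nD(Q\|P)\}\le P^n(\mathcal{T}_Q^n)\le \exp\{-nD(Q\|P)\}$. The only step with any real content is the most-likely-type-class inequality used for the lower bound on $|\mathcal{T}_Q^n|$; the rest is direct algebra from the product structure of $P^n$ and the definition of type, so I do not expect any substantive obstacle.
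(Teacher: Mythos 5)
Your proof is correct and is essentially the standard argument from Cover and Thomas, which is exactly what the paper invokes by citing Theorem 11.1.4 there; the paper supplies no independent proof. One small caution about your parenthetical aside on the most-likely-type-class step: the factored inequality $\prod_a Q(a)^{n(Q(a)-Q'(a))}\le 1$ that you attribute to nonnegativity of KL divergence is not true in general (take $Q'$ a point mass at the most probable letter of $Q$), and nonnegativity of $D(Q'\|Q)$ is not what drives this step. What actually closes the argument is keeping the combinatorial and probability factors together, namely
\begin{eqnarray*}
\frac{Q^n(\mathcal{T}_Q^n)}{Q^n(\mathcal{T}_{Q'}^n)}
= \prod_{a}\frac{(nQ'(a))!}{(nQ(a))!}\,Q(a)^{n(Q(a)-Q'(a))}
\ge \prod_{a} n^{\,n(Q'(a)-Q(a))} = 1,
\end{eqnarray*}
using the elementary bound $m!/\ell! \ge \ell^{\,m-\ell}$ for nonnegative integers. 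Since you correctly flag this as a term-by-term multinomial-coefficient comparison, the aside does not damage the proof, but as stated it would not compile into a valid justification if someone tried to follow it literally.
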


In the subsequent sections, we especially use the following inequality:
\begin{eqnarray}
	\label{eq-inequality-of-type}
Q^n({\cal T}_Q^n({\cal X})) \ge \frac{1}{(n+1)^{(|{\cal X}|-1)}}
\end{eqnarray}
for any $Q \in {\cal P}_n({\cal X})$,
which follows from the fact that $D(Q \| Q) = 0$.
\begin{lemma}
    \cite[Lemma 11.6.1]{cover}
For any probability distributions $P, P^\prime \in {\cal P}({\cal X})$,
 we have 
\begin{eqnarray*}
D(P \| P^\prime ) \ge \frac{1}{2 \ln 2} \| P - P^\prime \|^2.
\end{eqnarray*}
\end{lemma}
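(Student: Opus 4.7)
The plan is to reduce the general statement to the binary (two-outcome) case and then verify the binary inequality by elementary calculus. Throughout I will write $\ln$ for the natural logarithm and $\log$ for $\log_2$, so that $D$ and $H$ are measured in bits.

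First I would reduce to binary. Set $A := \{x \in \mathcal{X} : P(x) \ge P'(x)\}$, and let $p := P(A)$, $q := P'(A)$. Since $\sum_x (P(x) - P'(x)) = 0$, the positive and negative parts of $P - P'$ have the same total mass, so
\begin{equation*}
\|P - P'\| \;=\; \sum_{x \in \mathcal{X}} |P(x) - P'(x)| \;=\; 2(p - q).
\end{equation*}
Next, the log-sum inequality (equivalently, the data-processing inequality for divergence applied to the partition $\{A, \mathcal{X}\setminus A\}$) yields
\begin{equation*}
D(P \,\|\, P') \;\ge\; p \log\frac{p}{q} + (1-p)\log\frac{1-p}{1-q} \;=:\; d(p\|q).
\end{equation*}
So it suffices to establish the binary Pinsker bound $d(p\|q) \ge \frac{2}{\ln 2}(p-q)^2$, since combining it with the two displays above gives $D(P\|P') \ge \frac{2}{\ln 2}\bigl(\tfrac{1}{2}\|P-P'\|\bigr)^2 = \frac{1}{2\ln 2}\|P-P'\|^2$.

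The core analytic step is therefore the binary case. Fix $q \in (0,1)$ and define, for $p \in [0,1]$,
\begin{equation*}
f(p) \;:=\; d(p\|q) - \frac{2}{\ln 2}\,(p-q)^2.
\end{equation*}
I compute $f(q) = 0$ directly, and
\begin{equation*}
f'(p) \;=\; \frac{1}{\ln 2}\log\frac{p(1-q)}{q(1-p)} - \frac{4}{\ln 2}(p-q),
\end{equation*}
so $f'(q) = 0$. Differentiating once more,
\begin{equation*}
f''(p) \;=\; \frac{1}{\ln 2}\cdot\frac{1}{p(1-p)} - \frac{4}{\ln 2} \;\ge\; 0,
\end{equation*}
because $p(1-p) \le 1/4$ on $[0,1]$. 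Thus $f$ is convex on $[0,1]$ with a stationary point at $p=q$, so $f(p) \ge f(q) = 0$, which is exactly the binary Pinsker inequality. The boundary cases $q\in\{0,1\}$ or $p\in\{0,1\}$ are handled by the usual conventions $0\log 0 = 0$ and $D(P\|P')=+\infty$ whenever $P'$ fails to dominate $P$, in which case the inequality is trivial.

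The hard part is the convexity computation, which is the only non-formal ingredient: once the elementary bound $p(1-p)\le 1/4$ is in hand, the whole argument is bookkeeping. The reduction from the multi-outcome case to the binary case is then just an application of the log-sum inequality, which is standard (see, e.g., the cited text). I expect no obstacles beyond keeping the conversion between $\ln$ and $\log_2$ straight, which is the source of the factor $\frac{1}{2\ln 2}$ in the final bound.
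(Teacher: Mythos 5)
Your proof is correct and is essentially the standard argument for Pinsker's inequality; note, though, that the paper itself does not prove this lemma---it is cited verbatim from Cover and Thomas [Lemma 11.6.1], so there is no proof in the paper to compare against. Your reduction to the binary case via the data-processing/log-sum inequality, followed by a second-derivative convexity check, is the textbook route (Cover and Thomas differentiate in $q$ rather than $p$, but the two variants are interchangeable). One small slip: with your stated convention $\log = \log_2$, the correct first derivative is $f'(p) = \log\frac{p(1-q)}{q(1-p)} - \frac{4}{\ln 2}(p-q)$ with no extra $\frac{1}{\ln 2}$ prefactor on the logarithm; your $f''$ is consistent with this corrected $f'$, so the conclusion is unaffected.
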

The following corollary states that sequences whose types are not close
to $P$ rarely occur as $n$ increases.
\begin{corollary}
 \label{corollary-type}
For any probability distribution $P \in {\cal P}({\cal X})$ and a
set ${\cal B}^{\varepsilon}(P) := \{ \mbf{x} \in {\cal X}^n \mid \| P_{\mbf{x}}
 - P \| \le \varepsilon \}$, we have
\begin{eqnarray*}
\sum_{\mbf{x} \notin {\cal B}^\varepsilon(P)} P^n(\mbf{x}) \le (n+1)^{(|{\cal
 X}| -  1)} \exp \left\{ - \frac{\varepsilon^2 n}{2 \ln 2} \right\}.
\end{eqnarray*}
\end{corollary}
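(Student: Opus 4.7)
The plan is to prove this by combining the three ingredients already assembled in the excerpt: the type partition of ${\cal X}^n$, the upper bound on the probability of a type class in terms of divergence, and Pinsker's inequality. The target sum is taken over all $\mbf{x}$ with $\|P_{\mbf{x}} - P\| > \varepsilon$, and every such sequence belongs to exactly one type class ${\cal T}_Q^n({\cal X})$ whose type $Q \in {\cal P}_n({\cal X})$ itself satisfies $\|Q - P\| > \varepsilon$.

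Accordingly, the first step I would carry out is to rewrite
\begin{eqnarray*}
\sum_{\mbf{x} \notin {\cal B}^\varepsilon(P)} P^n(\mbf{x})
= \sum_{\scriptstyle Q \in {\cal P}_n({\cal X}) \atop \|Q - P\| > \varepsilon}
P^n({\cal T}_Q^n({\cal X})).
\end{eqnarray*}
Then I would invoke the upper bound $P^n({\cal T}_Q^n({\cal X})) \le \exp\{-n D(Q\|P)\}$ from the preceding lemma, and bound the exponent using Pinsker's inequality $D(Q\|P) \ge \frac{1}{2 \ln 2}\|Q - P\|^2$. Since each surviving type satisfies $\|Q - P\| > \varepsilon$, this immediately gives $P^n({\cal T}_Q^n({\cal X})) \le \exp\{-\varepsilon^2 n/(2\ln 2)\}$ for every term in the sum.

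The last step is to bound the number of terms. Because $|{\cal P}_n({\cal X})| \le (n+1)^{|{\cal X}|-1}$, the number of types $Q$ with $\|Q - P\| > \varepsilon$ is at most $(n+1)^{|{\cal X}|-1}$, and multiplying by the uniform per-type bound yields the desired inequality.

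There is no real obstacle: every intermediate fact is already in the preceding two lemmas of the appendix, and the argument is a textbook Sanov-type tail bound. The only minor point of care is to state that Pinsker's inequality is applied per type (so that the $\varepsilon^2$ in the exponent is uniform across all surviving $Q$) before the crude counting bound on $|{\cal P}_n({\cal X})|$ is used; otherwise one would be tempted to pull the divergence outside the sum, which would not be valid since different types have different divergences from $P$.
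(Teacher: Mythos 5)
Your proof is correct and is exactly the argument implied by the paper, which states the corollary immediately after the type-class probability bound and Pinsker's inequality without writing out the proof: partition ${\cal X}^n \setminus {\cal B}^\varepsilon(P)$ into type classes, bound each class's probability by $\exp\{-nD(Q\|P)\} \le \exp\{-n\varepsilon^2/(2\ln 2)\}$ via Pinsker's applied per type, and multiply by the count $|{\cal P}_n({\cal X})| \le (n+1)^{|{\cal X}|-1}$. Your cautionary note about applying Pinsker's inequality per type before counting is well taken and correctly placed.
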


\section{Privacy Amplification}
\label{sec-privacy-amplification}

In this section, we review the privacy amplification.
First, we review notions of the (smooth) min-entropy and 
the (smooth) max-entropy. The (smooth) min-entropy and the (smooth) max-entropy are
useful tool to prove the security of QKD protocol 
\cite{kraus:05, renner:05, renner:05b}.
Especially, (smooth) min-entropy is much more important,
because it is related to the length of the securely distillable
key by the privacy amplification.
The privacy amplification \cite{bennett:95} is a technique
to  distill a secret key from partially secret data, on
which an adversary might have some information.
Later, the privacy amplification was extended to the case that
an adversary have information encoded into a state of a quantum system
\cite{christandl:04, konig:05, renner:05c, renner:05b}).
Most of the following results can be found in \cite[Sections 3 and
5]{renner:05b},
but lemmas without citations are additionally proved in this
paper. We need Lemma \ref{lemma-weak-monotonicity}
to apply the results in \cite{renner:05b} to our proposed two-way
QKD protocol (QKD protocol with our proposed IR protocol). 
More specifically,
Eq.~(3.22) in \cite[Theorem 3.2.12]{renner:05b} plays an important
role to show a statement similar as Corollary
\ref{lemma-weak-chain-rule} in the case of one-way QKD protocol
(QKD protocol with one-way IR protocol).
However, the condition of Eq.~(3.22) in \cite[Theorem
3.2.12]{renner:05b}
is too restricted, and cannot be applied to our protocol.
Thus, we showed Corollary \ref{lemma-weak-chain-rule}
via Lemma \ref{lemma-weak-monotonicity}.
Lemmas \ref{neighbor-purification} and 
\ref{neighbor-purification-classical} are needed
 to prove Lemma \ref{lemma-weak-monotonicity}.
Lemmas \ref{lemma-matrix-inequality-1}--\ref{lemma-continuity}
are implicitly used in \cite{renner:05b} without proof,
which are also used in our proof in Section \ref{sec-proof-of-theorem}.

\subsection{Min- and Max- Entropy}

%For $\rho, \sigma \in {\cal P}({\cal H})$, we denote
%$\rho \ge \sigma$ or $\rho - \sigma \ge 0$ 
%if $(\rho - \sigma) \in {\cal P}({\cal H})$.
The (smooth) min-entropy and (smooth) max-entropy are formally defined as follows.
%%%%%%%%%%%
\begin{definition}
\cite[Definition 3.1.1]{renner:05b}
Let $\rho_{AB} \in {\cal P}({\cal H}_A \otimes {\cal H}_B)$ and
$\sigma_B \in {\cal P}({\cal H}_B)$. The min-entropy of 
$\rho_{AB}$ relative to $\sigma_B$ is defined by
\begin{eqnarray*}
H_{\min}(\rho_{AB}|\sigma_B) := - \log \lambda,
\end{eqnarray*}
where $\lambda$ is the minimum real number such that 
$\lambda \cdot \rom{id}_A \otimes \sigma_B - \rho_{AB} \ge 0$,
where $\rom{id}_A$ is the identity operator on ${\cal H}_A$.
When the condition $\rom{supp}(\rho_B) \subset \rom{supp}(\sigma_B)$ does not hold,
there is no $\lambda$ satisfying the condition 
$\lambda \cdot \rom{id}_A \otimes \sigma_B - \rho_{AB} \ge 0$, thus we define
$H_{\min}(\rho_{AB}|\sigma_B) := - \infty$.

The max-entropy of $\rho_{AB}$ relative to $\sigma_B$ is defined by
\begin{eqnarray*}
H_{\max}(\rho_{AB}|\sigma_B) := \log \rom{Tr} \left( (\rom{id}_A \otimes \sigma_B) 
\rho_{AB}^0 \right),
\end{eqnarray*}
where $\rho_{AB}^0$ denotes the projector onto the support of $\rho_{AB}$.

The min-entropy and the max-entropy of $\rho_{AB}$ given ${\cal H}_B$ are defined by
\begin{eqnarray*}
H_{\min}(\rho_{AB}|B) &:=& \sup_{\sigma_B} H_{\min}(\rho_{AB}|\sigma_B) \\
H_{\max}(\rho_{AB}|B) &:=& \sup_{\sigma_B} H_{\max}(\rho_{AB}|\sigma_B),
\end{eqnarray*}
where the supremum ranges over all $\sigma_B \in {\cal P}({\cal H}_B)$
with $\rom{Tr} \sigma_B = 1$.
\end{definition}
%%%%%%%%%%%%

When ${\cal H}_B$ is the trivial space $\mathbb{C}$, the min-entropy and
the max-entropy of $\rho_A$ is 
\begin{eqnarray*}
H_{\min}(\rho_A) &=& - \log \lambda_{\max}( \rho_A) \\
H_{\max}(\rho_A) &=& \log \rom{rank}(\rho_A),
\end{eqnarray*}
where $\lambda_{\max}(\cdot)$ denotes the maximum eigenvalue of the
argument.

\begin{definition}
\cite[Definitions 3.2.1 and 3.2.2]{renner:05b}
Let $\rho_{AB} \in {\cal P}({\cal H}_A \otimes {\cal H}_B)$,
$\sigma_B \in {\cal P}({\cal H}_B)$, and $\varepsilon \ge 0$.
The $\varepsilon$-smooth min-entropy and the $\varepsilon$-smooth max-entropy
of $\rho_{AB}$ relative to $\sigma_B$ are defined by
\begin{eqnarray*}
H_{\min}^{\varepsilon}(\rho_{AB}|\sigma_B) &:=&
\sup_{\overline{\rho}_{AB}} H_{\min}(\overline{\rho}_{AB}|\sigma_B) \\
H_{\max}^{\varepsilon}(\rho_{AB}|\sigma_B) &:=&
\inf_{\overline{\rho}_{AB}} H_{\max}(\overline{\rho}_{AB}|\sigma_B),
\end{eqnarray*}
where the supremum and infimum ranges over the set
${\cal B}^{\varepsilon}(\rho_{AB})$ of all operators
$\overline{\rho}_{AB} \in {\cal P}({\cal H}_A \otimes {\cal H}_B)$ such
that $\| \overline{\rho}_{AB} - \rho_{AB} \| \le (\rom{Tr} \rho_{AB}) \varepsilon$.

The conditional 
$\varepsilon$-smooth min-entropy and the $\varepsilon$-smooth max-entropy
of $\rho_{AB}$ given ${\cal H}_B$ are defined by
\begin{eqnarray*}
H_{\min}^{\varepsilon}(\rho_{AB}|B) &:=& \sup_{\sigma_B} H_{\min}^{\varepsilon}(\rho_{AB}|\sigma_B) \\
H_{\max}^{\varepsilon}(\rho_{AB}|B) &:=& \sup_{\sigma_B} H_{\max}^{\varepsilon}(\rho_{AB}|\sigma_B),
\end{eqnarray*}
where the supremum ranges over all $\sigma_B \in {\cal P}({\cal H}_B)$
with $\rom{Tr} \sigma_B = 1$.
\end{definition}

%%%%%%%%%%% Chain rule %%%%%%%%%%%%%%%%%%%%
The following lemma is a kind of chain rule for the smooth Min-entropy.
\begin{lemma}
     \label{lemma-chain-rule}
\cite[Theorem 3.2.12]{renner:05b}
For a tripartite operator 
$\rho_{ABC} \in {\cal P}({\cal H}_A \otimes {\cal H}_B \otimes {\cal
 H}_C)$, we have
\begin{eqnarray}
   \label{chain-rule}
H_{\min}^{\varepsilon}(\rho_{ABC}|C)
\le H_{\min}^{\varepsilon}(\rho_{ABC}|BC) + H_{\max}(\rho_B).
\end{eqnarray}
\end{lemma}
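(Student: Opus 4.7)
The plan is to upper bound $H_{\min}^{\varepsilon}(\rho_{ABC}|C)$ by directly exhibiting, for any near-optimal smoothing $\bar{\rho}_{ABC}$ and operator $\sigma_C$ witnessing the left-hand side, a smoothing $\tilde{\rho}_{ABC}$ and an operator $\sigma_{BC}$ that witness a correspondingly good bound for $H_{\min}^{\varepsilon}(\rho_{ABC}|BC)$. The bridge between the two witnesses will be provided by the projector $\Pi := \rom{id}_A \otimes \rho_B^0 \otimes \rom{id}_C$ onto the support of $\rho_B$ in the $B$-factor, together with the elementary bound $\rho_B^0 \le \rom{rank}(\rho_B) \cdot \tau_B$ for $\tau_B := \rho_B^0 / \rom{rank}(\rho_B)$.

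Concretely, I will proceed as follows. Fix $\delta > 0$. By the definition of the conditional smooth min-entropy, I choose $\bar{\rho}_{ABC} \in {\cal B}^{\varepsilon}(\rho_{ABC})$ and a normalized $\sigma_C \in {\cal P}({\cal H}_C)$ with
\begin{eqnarray*}
\bar{\rho}_{ABC} \le 2^{-H_{\min}^{\varepsilon}(\rho_{ABC}|C) + \delta} \cdot \rom{id}_{AB} \otimes \sigma_C.
\end{eqnarray*}
I then define $\tilde{\rho}_{ABC} := \Pi \, \bar{\rho}_{ABC} \, \Pi$. A first lemma I will verify is that $\rho_{ABC} = \Pi \, \rho_{ABC} \, \Pi$; this follows from an eigendecomposition argument showing that any vector $|v\rangle$ orthogonal to $\rom{supp}(\rho_B)$ is annihilated by every eigenvector of $\rho_{ABC}$ in the $B$-slot, because the marginal on $B$ has no weight on $|v\rangle$. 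Combining this identity with the fact that $X \mapsto \Pi X \Pi$ is a trace-non-increasing CP map (hence contracts the trace distance by Lemma~A.2.1 in the excerpt), I obtain $\tilde{\rho}_{ABC} \in {\cal B}^{\varepsilon}(\rho_{ABC})$.

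Conjugating the above operator inequality by $\Pi$ and invoking $\rho_B^0 \le \rom{rank}(\rho_B) \cdot \tau_B = 2^{H_{\max}(\rho_B)} \tau_B$ then yields
\begin{eqnarray*}
\tilde{\rho}_{ABC} \le 2^{-H_{\min}^{\varepsilon}(\rho_{ABC}|C) + \delta + H_{\max}(\rho_B)} \cdot \rom{id}_A \otimes \sigma_{BC},
\end{eqnarray*}
with $\sigma_{BC} := \tau_B \otimes \sigma_C$ a valid normalized witness on ${\cal H}_B \otimes {\cal H}_C$. This shows $H_{\min}(\tilde{\rho}_{ABC}|\sigma_{BC}) \ge H_{\min}^{\varepsilon}(\rho_{ABC}|C) - \delta - H_{\max}(\rho_B)$, and taking the supremum over smoothings and $\sigma_{BC}$ on the left (which only makes the quantity larger) and letting $\delta \to 0$ gives the chain-rule inequality.

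The main obstacle is the smoothing bookkeeping: one must ensure that after projecting $\bar{\rho}_{ABC}$ by $\Pi$, the resulting operator still lies in the $\varepsilon$-ball around the \emph{original} $\rho_{ABC}$ (not around $\bar{\rho}_{ABC}$). The key observation that unlocks this step is the identity $\Pi \rho_{ABC} \Pi = \rho_{ABC}$, without which one could only guarantee closeness to $\Pi \rho_{ABC} \Pi$; once that identity is in hand, monotonicity of the trace distance under the CP map $X \mapsto \Pi X \Pi$ finishes the argument. The rest of the proof is a straightforward chaining of operator inequalities and definitions.
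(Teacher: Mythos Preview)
Your argument is correct. The paper does not give its own proof of this lemma; it is quoted verbatim from \cite[Theorem~3.2.12]{renner:05b} and used as a black box. So there is no in-paper proof to compare against.

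For what it is worth, the route you take---choose a near-optimal $(\bar\rho_{ABC},\sigma_C)$, compress with $\Pi=\rom{id}_A\otimes\rho_B^0\otimes\rom{id}_C$, observe $\Pi\rho_{ABC}\Pi=\rho_{ABC}$ so that the compressed operator stays in ${\cal B}^\varepsilon(\rho_{ABC})$, and then replace $\rho_B^0$ by $\rom{rank}(\rho_B)\cdot\tau_B$---is essentially the argument Renner uses. The only point worth flagging is that your ``$\rho_B^0\le\rom{rank}(\rho_B)\tau_B$'' is in fact an equality by your choice of $\tau_B$, but of course the inequality is what you need. Everything else (the CP-contraction step via $X\mapsto\Pi X\Pi$, the operator-inequality conjugation, and the final supremum/limit bookkeeping) is straightforward and in line with the paper's definitions.
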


%%%%%%%%%%%%% Monotonicity %%%%%%%%%%%%%%%%
The following lemma states that removing the classical system 
only decreases the Min-entropy.
%%%%%
\begin{lemma}
\label{monotonicity-of-min-entropy}
\cite[Lemma 3.1.9]{renner:05b}
(monotonicity of min-entropy)
Let $\rho_{XBC} \in {\cal P}({\cal H}_X \otimes {\cal H}_B \otimes {\cal H}_C)$ be
classical on ${\cal H}_X$, and let $\sigma_C \in {\cal P}({\cal H}_C)$.
Then, we have
\begin{eqnarray*}
H_{\min}(\rho_{XBC}|\sigma_C) \ge H_{\min}(\rho_{BC}|\sigma_C).
\end{eqnarray*}
\end{lemma}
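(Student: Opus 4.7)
The plan is to unfold the definition of min-entropy relative to $\sigma_C$ and show directly that any $\lambda$ witnessing the bound for $\rho_{BC}$ also witnesses the bound for $\rho_{XBC}$, which by definition gives $H_{\min}(\rho_{XBC}|\sigma_C)\ge H_{\min}(\rho_{BC}|\sigma_C)$. So the goal becomes: whenever $\lambda\,\mathrm{id}_B\otimes\sigma_C-\rho_{BC}\ge0$, one also has $\lambda\,\mathrm{id}_{XB}\otimes\sigma_C-\rho_{XBC}\ge0$.

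Since $\rho_{XBC}$ is classical on $\mathcal{H}_X$, I would write it in the block form
\begin{eqnarray*}
\rho_{XBC}=\sum_{x\in\mathcal{X}}|x\rangle\langle x|\otimes\tilde{\rho}_{BC}^{x},
\end{eqnarray*}
with $\tilde{\rho}_{BC}^{x}\ge0$, and recall the convention of Remark \ref{remark-convention-2} that $\rho_{BC}=\mathrm{Tr}_X[\rho_{XBC}]=\sum_{x}\tilde{\rho}_{BC}^{x}$. The key elementary fact to invoke is that a positive operator that dominates a sum of positive operators dominates each summand: if $A\ge\sum_x B_x$ with $B_x\ge0$, then $A-B_{x_0}=(A-\sum_x B_x)+\sum_{x\neq x_0}B_x\ge0$ for every $x_0$. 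Applied to $A=\lambda\,\mathrm{id}_B\otimes\sigma_C$ and $B_x=\tilde{\rho}_{BC}^{x}$, this gives $\lambda\,\mathrm{id}_B\otimes\sigma_C-\tilde{\rho}_{BC}^{x}\ge0$ for every $x$.

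Now, the operator whose positivity we want is block-diagonal in the $|x\rangle$ basis:
\begin{eqnarray*}
\lambda\,\mathrm{id}_{XB}\otimes\sigma_C-\rho_{XBC}
=\sum_{x\in\mathcal{X}}|x\rangle\langle x|\otimes\bigl(\lambda\,\mathrm{id}_B\otimes\sigma_C-\tilde{\rho}_{BC}^{x}\bigr),
\end{eqnarray*}
and each block on the right is $\ge0$ by the previous step, so the whole operator is $\ge0$. Thus every $\lambda$ admissible for $\rho_{BC}$ is admissible for $\rho_{XBC}$, the infimum over admissible $\lambda$ for $\rho_{XBC}$ is at most that for $\rho_{BC}$, and taking $-\log$ reverses the inequality to yield the claim. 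If the support condition $\mathrm{supp}(\rho_B)\subset\mathrm{supp}(\sigma_B)$ fails for $\rho_{BC}$ with respect to $\sigma_C$, then $H_{\min}(\rho_{BC}|\sigma_C)=-\infty$ and the inequality is trivial, so no separate case analysis is needed.

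The proof is essentially bookkeeping; there is no real obstacle. The one point that deserves attention is making the classical decomposition explicit and carefully invoking the ``positive dominates summand'' lemma, since it is this step (rather than anything about quantum entropies per se) that makes the monotonicity hold. Everything else is just reading off the definition of $H_{\min}(\cdot|\sigma_C)$.
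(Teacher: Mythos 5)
Your proof is correct. The paper itself cites this as Lemma 3.1.9 of Renner's thesis without reproducing a proof, and your argument — decomposing $\rho_{XBC}$ into blocks $\ket{x}\bra{x}\otimes\tilde{\rho}^x_{BC}$, noting that $\lambda\,\mathrm{id}_B\otimes\sigma_C\ge\sum_x\tilde{\rho}^x_{BC}$ dominates each summand separately, and then observing that $\lambda\,\mathrm{id}_{XB}\otimes\sigma_C-\rho_{XBC}$ is block-diagonal with nonnegative blocks — is exactly the standard argument for this lemma.
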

%%%%%%%
In order to extend Lemma \ref{monotonicity-of-min-entropy} to the
smooth min-entropy, we need Lemmas \ref{neighbor-purification} 
and \ref{neighbor-purification-classical}.
\begin{lemma}
\label{neighbor-purification}
Let $\rho_{AB} \in {\cal P}({\cal H}_A \otimes {\cal H}_B)$ be a density operator.
For $\varepsilon \ge 0$, let 
$\hat{\rho}_B \in {\cal B}^\varepsilon(\rho_B)$. Then, there exists
a operator $\hat{\rho}_{AB} \in {\cal B}^{\bar{\varepsilon}}(\rho_{AB})$ such
that $\rom{Tr}_A[\hat{\rho}_{AB}] = \hat{\rho}_B$,
where $\bar{\varepsilon} := \sqrt{8 \varepsilon}$.
\end{lemma}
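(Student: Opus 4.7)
The plan is to transport the perturbation from the $B$-marginal to the joint $AB$ system through purifications, using Uhlmann's theorem to control how much the joint state can change. First, fix a purification $\ket{\psi}$ of $\rho_{AB}$ on some reference space ${\cal H}_R$; viewed with purifying system ${\cal H}_R \otimes {\cal H}_A$, this same vector is a purification of $\rho_B$. By Lemma \ref{uhlman-theorem}, enlarging ${\cal H}_R$ if necessary, there exists a purification $\ket{\hat{\psi}}$ of $\hat{\rho}_B$ on the same total space satisfying $|\langle \psi | \hat{\psi}\rangle| = F(\rho_B, \hat{\rho}_B)$. I then define $\hat{\rho}_{AB} := \rom{Tr}_R[\ket{\hat{\psi}}\bra{\hat{\psi}}]$; by construction $\rom{Tr}_A[\hat{\rho}_{AB}] = \hat{\rho}_B$, so half of the conclusion is immediate.

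Next I bound $\|\hat{\rho}_{AB} - \rho_{AB}\|$. Monotonicity of the trace distance under the partial trace $\rom{Tr}_R$ (the CP-map inequality recalled in Section \ref{notation-and-review-of-known-results}) reduces the task to controlling $\|\ket{\hat{\psi}}\bra{\hat{\psi}} - \ket{\psi}\bra{\psi}\|$. Since $\ket{\psi}\bra{\psi} - \ket{\hat{\psi}}\bra{\hat{\psi}}$ is supported on the two-dimensional subspace spanned by $\ket{\psi}$ and $\ket{\hat{\psi}}$, a direct diagonalization (equivalently, Lemma \ref{upper-bound-by-fidelity} at its equality case for pure sub-normalized vectors) gives the closed form $\|\ket{\hat{\psi}}\bra{\hat{\psi}} - \ket{\psi}\bra{\psi}\| = \sqrt{(1 + \rom{Tr}\hat{\rho}_B)^2 - 4 F(\rho_B, \hat{\rho}_B)^2}$, where I have used $\|\ket{\psi}\|^2 = 1$ and $\|\ket{\hat{\psi}}\|^2 = \rom{Tr}\hat{\rho}_B$.

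Finally I lower-bound the fidelity via Lemma \ref{lower-bound-by-fidelity} applied to the marginals: $2 F(\rho_B, \hat{\rho}_B) \ge \rom{Tr}\rho_B + \rom{Tr}\hat{\rho}_B - \|\hat{\rho}_B - \rho_B\| \ge 1 + \rom{Tr}\hat{\rho}_B - \varepsilon$. Substituting this, together with the elementary estimate $|\rom{Tr}\hat{\rho}_B - 1| \le \varepsilon$, into the closed-form expression above collapses it to $\|\hat{\rho}_{AB} - \rho_{AB}\|^2 \le \varepsilon(4 + \varepsilon) \le 8\varepsilon$, valid whenever $\varepsilon \le 4$; outside that regime $\bar{\varepsilon} = \sqrt{8\varepsilon} \ge 2$ already exceeds every trace distance between density operators and the bound is trivial. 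I expect the main, though mild, obstacle to be tracking the sub-normalization of $\hat{\rho}_B$ throughout: because $\hat{\rho}_B$ need not be a density operator, Uhlmann's theorem and the pure-state trace-distance formula must both be applied to sub-normalized vectors, and the trace of $\hat{\rho}_B$ must be carried explicitly in order to keep the coefficient $\sqrt{8}$ tight rather than picking up a larger constant.
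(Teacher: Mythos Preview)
Your proof is correct and follows essentially the same route as the paper: purify $\rho_{AB}$, invoke Uhlmann's theorem (Lemma~\ref{uhlman-theorem}) to pick a purification of $\hat{\rho}_B$ achieving the fidelity, bound the pure-state trace distance via Lemma~\ref{upper-bound-by-fidelity}, and trace out the reference. If anything, you are slightly more careful than the paper in tracking $\rom{Tr}\hat{\rho}_B$ explicitly rather than tacitly treating it as $1$; your one soft spot is the throwaway remark for $\varepsilon > 4$, since $\hat{\rho}_{AB}$ need not be a density operator and so the ``trace distance $\le 2$'' argument does not literally apply---but this pathological regime is irrelevant to every use of the lemma.
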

%%%%%%%
\begin{proof}
Since $\hat{\rho}_B \in {\cal B}^\varepsilon(\rho_B)$, we have
\begin{eqnarray*}
\| \hat{\rho}_B \| \ge \| \rho_B \| - \| \rho_B - \hat{\rho}_B \| \ge 1 - \varepsilon.
\end{eqnarray*}
Then, from Lemma \ref{lower-bound-by-fidelity}, we have
\begin{eqnarray*}
F(\rho_B, \hat{\rho}_B) &\ge& \frac{1}{2} (
\rom{Tr}\rho_B + \rom{Tr} \hat{\rho}_B - \| \rho_B - \hat{\rho}_B \| ) \\
&\ge& 1 - \varepsilon.
\end{eqnarray*}
Let $\ket{\Psi} \in {\cal H}_R \otimes {\cal H}_A \otimes {\cal H}_B$ be a purification
of $\rho_{AB}$. Then, from Theorem \ref{uhlman-theorem}, there exists a purification
$\ket{\Phi} \in {\cal H}_R \otimes {\cal H}_A \otimes {\cal H}_B$ of
$\hat{\rho}_B$ such that
\begin{eqnarray*}
F(\ket{\Psi}, \ket{\Phi}) = F(\rho_B, \hat{\rho}_B) \ge 1 - \varepsilon.
\end{eqnarray*}
By noting that $F(\ket{\Psi}, \ket{\Phi})^2 \ge 1 - 2\varepsilon$, from
Lemma \ref{upper-bound-by-fidelity}, we have
\begin{eqnarray*}
\| \ket{\Psi}\bra{\Psi} - \ket{\Phi}\bra{\Phi} \| \le
\sqrt{8 \varepsilon}.
\end{eqnarray*}
Let $\hat{\rho}_{AB} := \rom{Tr}_R[ \ket{\Phi}\bra{\Phi} ]$.
Then, since the trace distance does not increase by the partial trace, we have
\begin{eqnarray*}
\| \rho_{AB} - \hat{\rho}_{AB} \| \le \sqrt{8 \varepsilon}.
\end{eqnarray*}
\end{proof}
%%%%%%%%%%%%%
\begin{remark}
In Lemma \ref{neighbor-purification}, if the density operator
$\rho_{AB}$ is classical with respect to both systems 
${\cal H}_A  \otimes {\cal H}_B$, then we can easily replace
$\bar{\varepsilon}$ by $\varepsilon$.
Then, $\bar{\varepsilon}$ in Lemma \ref{neighbor-purification-classical},
\ref{lemma-weak-monotonicity} and Corollary
\ref{lemma-weak-chain-rule} can also be replaced by
$\varepsilon$.
\end{remark}
%%%%%%%%%%%%%%
\begin{lemma}
\label{neighbor-purification-classical}
Let $\rho_{XB} \in {\cal P}({\cal H}_X \otimes {\cal H}_B)$ be a density operator
that is classical on ${\cal H}_X$.
For $\varepsilon \ge 0$, let $\hat{\rho}_B \in {\cal B}^\varepsilon(\rho_B)$.
Then, there exists a operator $\hat{\rho}_{XB} \in {\cal B}^{\bar{\varepsilon}}(\rho_{XB})$
such that $\rom{Tr}_X[ \hat{\rho}_{XB} ] = \hat{\rho}_B$ and $\hat{\rho}_{XB}$ is classical on
${\cal H}_X$, where $\bar{\varepsilon} := \sqrt{8 \varepsilon}$.
\end{lemma}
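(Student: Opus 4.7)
The plan is to first invoke Lemma \ref{neighbor-purification} to produce any (not-necessarily-classical) extension with the correct marginal, and then to \emph{dephase} it along the classical basis of ${\cal H}_X$ to enforce classicality without paying any extra distance. Concretely, since $\rho_B = \rom{Tr}_X[\rho_{XB}]$, Lemma \ref{neighbor-purification} yields some $\tilde{\rho}_{XB} \in {\cal P}({\cal H}_X \otimes {\cal H}_B)$ with $\rom{Tr}_X[\tilde{\rho}_{XB}] = \hat{\rho}_B$ and $\| \tilde{\rho}_{XB} - \rho_{XB} \| \le \bar{\varepsilon}$.

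Let $\{ \ket{x} \}_{x \in {\cal X}}$ be the orthonormal basis of ${\cal H}_X$ with respect to which $\rho_{XB}$ is classical, and define the dephasing CP map ${\cal D}_X(\sigma) := \sum_{x} \ket{x}\bra{x} \sigma \ket{x}\bra{x}$. Set
\begin{eqnarray*}
\hat{\rho}_{XB} := ({\cal D}_X \otimes \rom{id}_B)(\tilde{\rho}_{XB}).
\end{eqnarray*}
By construction $\hat{\rho}_{XB}$ is classical on ${\cal H}_X$. Moreover, since $\rho_{XB}$ is already classical on ${\cal H}_X$, it is a fixed point of ${\cal D}_X \otimes \rom{id}_B$, so by the monotonicity of the trace distance under trace-preserving CP maps (Lemma A.2.1),
\begin{eqnarray*}
\| \hat{\rho}_{XB} - \rho_{XB} \|
 &=& \| ({\cal D}_X \otimes \rom{id}_B)(\tilde{\rho}_{XB} - \rho_{XB}) \| \\
 &\le& \| \tilde{\rho}_{XB} - \rho_{XB} \| ~\le~ \bar{\varepsilon}.
\end{eqnarray*}

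Finally, ${\cal D}_X$ is trace preserving on ${\cal H}_X$, so $\rom{Tr}_X \circ ({\cal D}_X \otimes \rom{id}_B) = \rom{Tr}_X$ as maps into ${\cal P}({\cal H}_B)$; hence
\begin{eqnarray*}
\rom{Tr}_X[ \hat{\rho}_{XB} ] = \rom{Tr}_X[ \tilde{\rho}_{XB} ] = \hat{\rho}_B,
\end{eqnarray*}
which gives all three properties claimed. The main conceptual point is the harmless step of dephasing, which converts the generic extension provided by Lemma \ref{neighbor-purification} into a classical one without increasing the distance to $\rho_{XB}$; no further obstacle arises since the required marginal is untouched by ${\cal D}_X$.
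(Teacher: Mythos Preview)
Your proof is correct and follows essentially the same approach as the paper: invoke Lemma \ref{neighbor-purification} to obtain an extension with the right marginal, then apply the dephasing (projection measurement) map on ${\cal H}_X$, using that $\rho_{XB}$ is a fixed point of dephasing and that the partial trace over ${\cal H}_X$ is unaffected. The only differences are notational (the paper writes ${\cal E}_X$ for your ${\cal D}_X$ and $\rho'_{XB}$ for your $\tilde{\rho}_{XB}$).
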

%%%%%%%%
\begin{proof}
From Lemma \ref{neighbor-purification}, there exists a operator 
$\rho^\prime_{XB} \in {\cal B}^{\bar{\varepsilon}}(\rho_{XB})$ such
that $\rom{Tr}_X[ \rho_{XB}^\prime ] = \hat{\rho}_B$.
Let ${\cal E}_X$ be a projection measurement CP map on ${\cal H}_X$, i.e.,
\begin{eqnarray*}
{\cal E}_X(\rho) := \sum_{x \in {\cal X}} \ket{x}\bra{x} \rho \ket{x}\bra{x},
\end{eqnarray*}
where $\{ \ket{x} \}_{x \in {\cal X}}$ is an orthonormal basis of ${\cal H}_X$.
Let $\hat{\rho}_{XB} := ({\cal E}_X \otimes \rom{id}_B)(\rho^\prime_{XB})$.
Then, since the trace distance does not increase by the CP map, and 
$({\cal E}_X \otimes \rom{id}_B)(\rho_{XB}) = \rho_{XB}$, we have
\begin{eqnarray*}
\lefteqn{
\| \hat{\rho}_{XB} - \rho_{XB} \| } \\
&=&
\| ({\cal E}_X \otimes \rom{id}_B)(\rho_{XB}^\prime) -
({\cal E}_X \otimes \rom{id}_B)(\rho_{XB}) \| \\
&\le&
\| \rho_{XB}^\prime - \rho_{XB}
\| \\
&\le& \bar{\varepsilon}.
\end{eqnarray*}
Furthermore, we have
$\rom{Tr}_X [ \hat{\rho}_{XB} ] = \rom{Tr}_X [ \rho_{XB}^\prime ] = \hat{\rho}_B$,
and $\hat{\rho}_{XB}$ is classical on ${\cal H}_X$.
\end{proof}

%%%%%%%%%%%%%%
The following lemma states that
the monotonicity of the min-entropy (Lemma \ref{monotonicity-of-min-entropy})
can be extended to the smooth min-entropy 
by adjusting the smoothness $\varepsilon$. 
\begin{lemma}
\label{lemma-weak-monotonicity}
Let $\rho_{XBC} \in {\cal P}({\cal H}_X \otimes {\cal H}_B \otimes {\cal H}_C)$
be a density operator that is classical on ${\cal H}_X$. 
Then, for any $\varepsilon \ge 0$, we have
\begin{eqnarray*}
H_{\min}^{\bar{\varepsilon}}(\rho_{XBC} | C) \ge 
H_{\min}^{\varepsilon}(\rho_{BC} | C),
\end{eqnarray*}
where $\bar{\varepsilon} := \sqrt{8 \varepsilon}$.
\end{lemma}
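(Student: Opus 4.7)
The plan is to reduce this to the non-smooth monotonicity statement (Lemma \ref{monotonicity-of-min-entropy}) by transporting the neighbor that attains the smooth min-entropy on the right-hand side up to a classical-on-$X$ neighbor on the left-hand side, using Lemma \ref{neighbor-purification-classical}.

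Concretely, I would fix an arbitrary $\sigma_C \in \mathcal{P}(\mathcal{H}_C)$ with $\mathrm{Tr}\,\sigma_C = 1$ and an arbitrary $\hat{\rho}_{BC} \in \mathcal{B}^{\varepsilon}(\rho_{BC})$. Viewing $BC$ as the ``$B$'' slot of Lemma \ref{neighbor-purification-classical} and using the fact that $\rho_{XBC}$ is classical on $\mathcal{H}_X$, that lemma produces $\hat{\rho}_{XBC} \in \mathcal{B}^{\bar{\varepsilon}}(\rho_{XBC})$ that is still classical on $\mathcal{H}_X$ and satisfies $\mathrm{Tr}_X[\hat{\rho}_{XBC}] = \hat{\rho}_{BC}$. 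Since $\hat{\rho}_{XBC}$ is classical on $\mathcal{H}_X$, Lemma \ref{monotonicity-of-min-entropy} applied with ${\cal H}_B$ in that lemma replaced by the composite system ${\cal H}_B$ here and $\sigma_C$ gives
\begin{eqnarray*}
H_{\min}(\hat{\rho}_{XBC}|\sigma_C) \ge H_{\min}(\hat{\rho}_{BC}|\sigma_C).
\end{eqnarray*}

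Because $\hat{\rho}_{XBC} \in \mathcal{B}^{\bar{\varepsilon}}(\rho_{XBC})$, the left-hand side is at most $H_{\min}^{\bar{\varepsilon}}(\rho_{XBC}|\sigma_C)$, so
\begin{eqnarray*}
H_{\min}^{\bar{\varepsilon}}(\rho_{XBC}|\sigma_C) \ge H_{\min}(\hat{\rho}_{BC}|\sigma_C).
\end{eqnarray*}
Taking the supremum of the right-hand side over $\hat{\rho}_{BC} \in \mathcal{B}^{\varepsilon}(\rho_{BC})$ yields $H_{\min}^{\bar{\varepsilon}}(\rho_{XBC}|\sigma_C) \ge H_{\min}^{\varepsilon}(\rho_{BC}|\sigma_C)$, and a further supremum over $\sigma_C$ gives the claim.

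The only subtle point is verifying that Lemma \ref{neighbor-purification-classical} applies as stated when its ``$B$'' slot is identified with the composite system $BC$; inspecting its proof, it only uses Uhlmann's theorem and a dephasing CP map on ${\cal H}_X$, neither of which cares about the internal structure of the non-classical side, so this is harmless. Thus no serious obstacle arises — the real content has already been packaged into Lemmas \ref{neighbor-purification-classical} and \ref{monotonicity-of-min-entropy}, and the blow-up from $\varepsilon$ to $\bar{\varepsilon} = \sqrt{8\varepsilon}$ is exactly the cost paid in Lemma \ref{neighbor-purification-classical} when converting a neighbor of the marginal into a neighbor of the joint state.
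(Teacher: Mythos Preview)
Your proposal is correct and follows essentially the same route as the paper's proof: fix $\sigma_C$, lift a near-optimal $\hat{\rho}_{BC}\in\mathcal{B}^{\varepsilon}(\rho_{BC})$ to a classical-on-$X$ neighbor $\hat{\rho}_{XBC}\in\mathcal{B}^{\bar{\varepsilon}}(\rho_{XBC})$ via Lemma~\ref{neighbor-purification-classical}, apply the non-smooth monotonicity Lemma~\ref{monotonicity-of-min-entropy}, and take suprema. The only cosmetic difference is that the paper phrases the supremum step with an auxiliary $\nu>0$, whereas you take the supremum over $\hat{\rho}_{BC}$ directly; both are valid.
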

%%%%%%%%%%
\begin{proof}
We will prove that
\begin{eqnarray*}
H_{\min}^{\bar{\varepsilon}}(\rho_{XBC}|\sigma_C) \ge 
H_{\min}^{\varepsilon}(\rho_{BC} | \sigma_C)
\end{eqnarray*}
holds for any $\sigma_C \in {\cal P}({\cal H}_C)$ with $\rom{Tr} \sigma_C = 1$.
From the definition of the smooth min-entropy,
for any $\nu > 0$, there exists $\hat{\rho}_{BC} \in {\cal B}^{\varepsilon}(\rho_{BC})$
such that
\begin{eqnarray}
\label{eq-proof-of-weak-monotonicity-1}
H_{\min}(\hat{\rho}_{BC} | \sigma_C) \ge H_{\min}^{\varepsilon}(\rho_{BC}|\sigma_C) -\nu.
\end{eqnarray}
From Lemma \ref{neighbor-purification-classical}, there exists a operator
$\hat{\rho}_{XBC} \in {\cal B}^{\bar{\varepsilon}}(\rho_{XBC})$ such that
$\rom{Tr}_X [ \hat{\rho}_{XBC} ] = \hat{\rho}_{BC}$, and
$\hat{\rho}_{XBC}$ is classical on ${\cal H}_X$.
Then, from Lemma \ref{monotonicity-of-min-entropy}, we have
\begin{eqnarray}
\label{eq-proof-of-weak-monotonicity-2}
H_{\min}(\hat{\rho}_{XBC}|\sigma_C) \ge H_{\min}(\hat{\rho}_{BC}|\sigma_C).
\end{eqnarray}
Furthermore, from the definition of smooth min-entropy, we have
\begin{eqnarray}
\label{eq-proof-of-weak-monotonicity-3}
H_{\min}^{\bar{\varepsilon}}(\rho_{XBC} | \sigma_C) \ge 
H_{\min}(\hat{\rho}_{XBC} | \sigma_C).
\end{eqnarray}
Since $\nu > 0$ is arbitrary, combining 
Eqs.~(\ref{eq-proof-of-weak-monotonicity-1})--(\ref{eq-proof-of-weak-monotonicity-3}), we 
have the assertion of the lemma.
\end{proof}

%%%%%%%%%%%%%%%%%%%%%%%%%%%%%%%%%%%%%%%%%%%%%%%%%%%%%%%%%%%
Combining Eq.~(\ref{chain-rule}) of Lemma \ref{lemma-chain-rule}
and Lemma \ref{lemma-weak-monotonicity}, we have the following corollary,
which states that the condition decreases 
the smooth min-entropy by at most the amount of
the max-entropy of the condition, and 
plays an important role to prove security of QKD protocols.
\begin{corollary}
\label{lemma-weak-chain-rule}
Let $\rho_{XBC} \in {\cal P}({\cal H}_X \otimes {\cal H}_B \otimes {\cal H}_C)$
be a density operator that is classical on ${\cal H}_X$. 
Then, for any $\varepsilon \ge 0$, we have
\begin{eqnarray*}
H_{\min}^{\bar{\varepsilon}}(\rho_{XBC} | XC) \ge 
H_{\min}^{\varepsilon}(\rho_{BC} | C) - H_{\max}(\rho_X),
\end{eqnarray*}
where $\bar{\varepsilon} := \sqrt{8 \varepsilon}$.
\end{corollary}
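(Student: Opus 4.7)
My plan is to combine the two ingredients in a single short chain of inequalities. First I would invoke the chain rule of Lemma \ref{lemma-chain-rule} applied to $\rho_{XBC}$, but with the role of the ``middle'' subsystem played by $X$ (i.e.\ take $A \to B$ and $B \to X$ in the statement of that lemma), working at smoothness parameter $\bar{\varepsilon}$. This yields
\[
H_{\min}^{\bar{\varepsilon}}(\rho_{XBC}\,|\,C) \le H_{\min}^{\bar{\varepsilon}}(\rho_{XBC}\,|\,XC) + H_{\max}(\rho_X),
\]
and rearranging produces a lower bound on $H_{\min}^{\bar{\varepsilon}}(\rho_{XBC}\,|\,XC)$ in terms of $H_{\min}^{\bar{\varepsilon}}(\rho_{XBC}\,|\,C)$ minus $H_{\max}(\rho_X)$.

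Next I would apply Lemma \ref{lemma-weak-monotonicity}, which is available because $\rho_{XBC}$ is assumed classical on ${\cal H}_X$. That lemma delivers exactly
\[
H_{\min}^{\bar{\varepsilon}}(\rho_{XBC}\,|\,C) \ge H_{\min}^{\varepsilon}(\rho_{BC}\,|\,C),
\]
with the same relationship $\bar{\varepsilon} = \sqrt{8\varepsilon}$ that appears in the statement of the corollary. Substituting this into the rearranged chain-rule inequality from the previous paragraph yields the desired bound.

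There is essentially no difficulty; the corollary is a packaging statement combining Lemma \ref{lemma-chain-rule} and Lemma \ref{lemma-weak-monotonicity}. The only point that requires care is the bookkeeping of smoothness parameters: one must apply the chain rule at smoothness $\bar{\varepsilon}$ rather than at $\varepsilon$, so that the smoothness produced by the weak monotonicity step matches. The real work was done in establishing Lemma \ref{lemma-weak-monotonicity}, where the classicality of $\rho_{XBC}$ on ${\cal H}_X$ was used via Lemma \ref{neighbor-purification-classical} to lift an $\varepsilon$-neighbor of $\rho_{BC}$ to an $\bar{\varepsilon}$-neighbor of $\rho_{XBC}$ that remains classical on ${\cal H}_X$; here that work is simply inherited.
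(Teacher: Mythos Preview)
Your proposal is correct and matches the paper's approach exactly: the paper states the corollary as an immediate consequence of combining Eq.~(\ref{chain-rule}) of Lemma~\ref{lemma-chain-rule} with Lemma~\ref{lemma-weak-monotonicity}, and your argument spells out precisely this combination, including the correct handling of the smoothness parameter (applying the chain rule at $\bar{\varepsilon}$ so that it matches the output of the weak monotonicity lemma).
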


%%%%%%%%%%%%%%%%%%%%%%%%%%%%%%%%%%%%%%%%%%%%%%%%%%%%%%%%
The following lemmas are also used in Section \ref{sec-proof-of-theorem}.

\begin{lemma}
\cite[Theorem 3.2.12]{renner:05b}
	\label{lemma-properties-of-min-max-entropy}
The following inequalities hold:
\begin{itemize}
\item Strong sub-additivity:
\begin{eqnarray}
\label{strong-subadditivity}
H_{\min}^{\varepsilon}(\rho_{ABC}|BC) \le H_{\min}^{\varepsilon}(\rho_{AB}|B)
\end{eqnarray}
for $\rho_{ABC} \in {\cal P}({\cal H}_A \otimes {\cal H}_B \otimes {\cal H}_C)$.

\item Conditioning on classical information:
\begin{eqnarray}
\label{conditioning-on-classical-information-2}
H_{\min}^{\varepsilon}(\rho_{ABZ}|BZ) \ge \min_{z \in {\cal Z}}
H_{\min}^{\varepsilon}(\rho_{AB}^z|B)
\end{eqnarray}
for $\rho_{ABZ} \in {\cal P}({\cal H}_A \otimes {\cal H}_B \otimes {\cal H}_Z)$
normalized and
classical on ${\cal H}_Z$, and 
for conditional operators
$\rho_{AB}^z \in {\cal P}({\cal H}_A \otimes {\cal H}_B)$
and 
$\rho_B^z \in {\cal P}({\cal H}_B)$.
\end{itemize}
\end{lemma}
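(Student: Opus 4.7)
The plan is to prove each of the two inequalities first in the non-smooth ($\varepsilon=0$) case directly from the operator inequality defining $H_{\min}$, and then bootstrap to the smooth case using properties of the trace distance under partial trace and classical mixing.

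For the strong sub-additivity, I would start from the definition $H_{\min}(\rho_{ABC}|\sigma_{BC})=-\log\lambda$ where $\lambda$ is the smallest real number with $\lambda\cdot\mathrm{id}_A\otimes\sigma_{BC}\ge\rho_{ABC}$. Given any normalized $\sigma_{BC}$ and any valid $\lambda$, taking the partial trace over ${\cal H}_C$ preserves positivity, so $\lambda\cdot\mathrm{id}_A\otimes\sigma_B\ge\rho_{AB}$ with $\sigma_B:=\mathrm{Tr}_C\sigma_{BC}$ (still normalized). Hence $H_{\min}(\rho_{AB}|\sigma_B)\ge H_{\min}(\rho_{ABC}|\sigma_{BC})$, and taking suprema on both sides yields the non-smooth version. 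To smooth: for any $\hat\rho_{ABC}\in{\cal B}^{\varepsilon}(\rho_{ABC})$ define $\hat\rho_{AB}:=\mathrm{Tr}_C\hat\rho_{ABC}$; since partial trace is trace-non-increasing CP, Lemma~A.2.1-type monotonicity gives $\|\hat\rho_{AB}-\rho_{AB}\|\le\|\hat\rho_{ABC}-\rho_{ABC}\|\le\varepsilon\cdot\mathrm{Tr}\rho_{ABC}$, so $\hat\rho_{AB}\in{\cal B}^{\varepsilon}(\rho_{AB})$. Applying the non-smooth inequality to $\hat\rho_{ABC}$ and taking suprema gives the smooth statement.

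For conditioning on classical information, the plan is to build a witness $\sigma_{BZ}$ from per-branch witnesses. Fix $\nu>0$; for each $z$ choose normalized $\sigma_B^z$ with $H_{\min}(\rho_{AB}^z|\sigma_B^z)\ge H_{\min}(\rho_{AB}^z|B)-\nu$, and set $\lambda_z:=2^{-H_{\min}(\rho_{AB}^z|\sigma_B^z)}$ and $\lambda:=\max_z\lambda_z$. Writing $\rho_{ABZ}=\sum_z\ket{z}\bra{z}\otimes\rho_{AB}^z$ (with the $P_Z(z)$ weights absorbed into $\rho_{AB}^z$) and setting $\sigma_{BZ}:=\sum_z\ket{z}\bra{z}\otimes\sigma_B^z/|{\cal Z}|$ (normalized), the block-diagonal structure over $z$ gives $\lambda|{\cal Z}|\cdot\mathrm{id}_A\otimes\sigma_{BZ}\ge\rho_{ABZ}$ branch-by-branch. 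This yields $H_{\min}(\rho_{ABZ}|BZ)\ge\min_z H_{\min}(\rho_{AB}^z|B)-\nu-\log|{\cal Z}|$; the $|{\cal Z}|$ factor is the main nuisance and must be absorbed by instead taking $\sigma_{BZ}:=\sum_z p_z\ket{z}\bra{z}\otimes\sigma_B^z$ with weights $p_z$ chosen proportional to $\lambda_z/\lambda$ times the branch weight, so that the operator inequality tightens to $\lambda\cdot\mathrm{id}_A\otimes\sigma_{BZ}\ge\rho_{ABZ}$ block-by-block (this is where the classical structure on ${\cal H}_Z$ is essential---off-diagonal terms cannot spoil the inequality). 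Letting $\nu\to0$ gives the non-smooth bound.

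To pass to the smoothed version, for each $z$ choose $\hat\rho_{AB}^z\in{\cal B}^{\varepsilon}(\rho_{AB}^z)$ almost attaining $H_{\min}^{\varepsilon}(\rho_{AB}^z|B)$, and form $\hat\rho_{ABZ}:=\sum_z\ket{z}\bra{z}\otimes\hat\rho_{AB}^z$. The trace distance decomposes along the classical register as $\|\hat\rho_{ABZ}-\rho_{ABZ}\|=\sum_z\|\hat\rho_{AB}^z-\rho_{AB}^z\|$, which is at most $\varepsilon\sum_z\mathrm{Tr}\rho_{AB}^z=\varepsilon\mathrm{Tr}\rho_{ABZ}$, so $\hat\rho_{ABZ}\in{\cal B}^{\varepsilon}(\rho_{ABZ})$. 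Applying the non-smooth inequality to $\hat\rho_{ABZ}$ gives exactly Eq.~(\ref{conditioning-on-classical-information-2}).

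The main obstacle I anticipate is the book-keeping in the second part: one has to combine per-branch $\sigma_B^z$ into a single $\sigma_{BZ}$ without incurring a $\log|{\cal Z}|$ penalty, which requires exploiting that $\rho_{ABZ}$ has no cross-$z$ coherences. The first part (strong sub-additivity) is essentially automatic once one observes that partial trace preserves both the defining operator inequality and the smoothing ball.
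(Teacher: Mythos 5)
Your argument for strong sub-additivity is correct: the operator inequality $\lambda\,\rom{id}_A\otimes\sigma_{BC}\ge\rho_{ABC}$ is preserved under $\rom{Tr}_C$, the induced $\sigma_B=\rom{Tr}_C\sigma_{BC}$ is still normalized, and the passage to the smooth version via $\hat\rho_{AB}:=\rom{Tr}_C\hat\rho_{ABC}$ uses monotonicity of the trace distance together with $\rom{Tr}\rho_{AB}=\rom{Tr}\rho_{ABC}$, so the smoothing ball maps into the smoothing ball. (Note that the paper cites \cite[Theorem~3.2.12]{renner:05b} for this lemma and does not supply its own proof, so there is no in-paper argument to compare against; I am only assessing correctness.)

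The second part has a genuine gap in the construction of $\sigma_{BZ}$. You absorb the weights $P_Z(z)$ into $\rho_{AB}^z$, which already changes the meaning of the right-hand side of the lemma: in the paper's convention the $\rho_{AB}^z$ are the \emph{normalized} conditional operators of $\rho_{ABZ}=\sum_z P_Z(z)\ket{z}\bra{z}\otimes\rho_{AB}^z$, and scaling $\rho_{AB}^z$ by $P_Z(z)$ shifts its min-entropy by $-\log P_Z(z)>0$, so the weighted version of the claimed inequality is in fact \emph{false} (take $P_Z(0)=P_Z(1)=1/2$ with identical conditional states to see it fail by one bit). More importantly, the proposed normalization ``$p_z$ proportional to $\lambda_z/\lambda$ times the branch weight'' does not deliver the operator inequality you assert. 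You need $\lambda\,p_z\ge\lambda_z$ for every $z$; with $\lambda=\max_z\lambda_z$ and $p_z\propto(\lambda_z/\lambda)P_Z(z)$ one gets $\lambda p_z=\lambda\lambda_z P_Z(z)/\sum_{z'}\lambda_{z'}P_Z(z')$, and the requirement $\lambda P_Z(z)\ge\sum_{z'}\lambda_{z'}P_Z(z')$ fails already when all $\lambda_z$ are equal and $|{\cal Z}|\ge 2$ (it reduces to $P_Z(z)\ge 1$). The same problem arises with $p_z\propto\lambda_z$, which would force $\lambda\ge\sum_z\lambda_z$.

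The correct choice is much simpler and avoids the reweighting gymnastics entirely. Keep the paper's normalized convention, pick normalized $\sigma_B^z$ with $\lambda_z\,\rom{id}_A\otimes\sigma_B^z\ge\rho_{AB}^z$ nearly attaining $H_{\min}(\rho_{AB}^z|B)$, set $\lambda:=\max_z\lambda_z$, and take
\begin{eqnarray*}
\sigma_{BZ}:=\sum_{z\in{\cal Z}}P_Z(z)\,\ket{z}\bra{z}\otimes\sigma_B^z,
\end{eqnarray*}
which is automatically normalized since $\rom{Tr}\sigma_B^z=1$. The factors $P_Z(z)$ then appear identically on both sides of $\lambda\,\rom{id}_A\otimes\sigma_{BZ}\ge\rho_{ABZ}$ and cancel block-by-block, reducing the check to $\lambda\,\rom{id}_A\otimes\sigma_B^z\ge\rho_{AB}^z$, which holds because $\lambda\ge\lambda_z$. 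This gives $H_{\min}(\rho_{ABZ}|\sigma_{BZ})\ge-\log\lambda=\min_z H_{\min}(\rho_{AB}^z|\sigma_B^z)$, with no $\log|{\cal Z}|$ loss and no need to tune $p_z$. Your smoothing step for the second part (choosing $\hat\rho_{AB}^z\in{\cal B}^\varepsilon(\rho_{AB}^z)$ per branch and reassembling) is then fine once you reinsert the $P_Z(z)$ weights, since $\|\hat\rho_{ABZ}-\rho_{ABZ}\|=\sum_z P_Z(z)\|\hat\rho_{AB}^z-\rho_{AB}^z\|\le\varepsilon$.
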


%%%%%%%%%%%%%%%%%%%%%%%%%%%%%%%%%%%%%%%%%%%%%%%%%%%%%%%
In order to prove that removing the (not necessarily classical) system
increases the min-entropy at most the max entropy of the removed system
(Lemma \ref{lemma-chain-1}), we need the following lemma.
\begin{lemma}
	\label{lemma-matrix-inequality-1}
Let $\rho_{AB} \in {\cal P}({\cal H}_A \otimes {\cal H}_B)$ 
be a density operator, and
let $r_A := \rom{rank}(\rho_A)$.
Then, we have
\begin{eqnarray*}
r_A \rom{id}_A \otimes \rho_B - \rho_{AB} \ge 0,
\end{eqnarray*}
where $\rom{id}_A$ is the identity operator on ${\cal H}_A$.
\end{lemma}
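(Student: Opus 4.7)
My plan is to establish the operator inequality by reducing it to a pure-state statement through the spectral decomposition of $\rho_{AB}$. Write $\rho_{AB} = \sum_k \mu_k \ket{\psi_k}\bra{\psi_k}$ with $\mu_k > 0$. Since each summand is positive, $\mu_k \rom{Tr}_B\ket{\psi_k}\bra{\psi_k} \le \rho_A$, so the $A$-marginal of each $\ket{\psi_k}$ is supported on the support of $\rho_A$; consequently its rank---equivalently the Schmidt rank $r_k$ of $\ket{\psi_k}$---is bounded by $r_A$.

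The key sub-lemma to verify is the pure-state version: whenever $\ket{\psi}$ has Schmidt rank $r$ and $B$-marginal $\rho_B^\psi := \rom{Tr}_A\ket{\psi}\bra{\psi}$, one has $\ket{\psi}\bra{\psi} \le r\, \rom{id}_A \otimes \rho_B^\psi$. I would prove this by taking the Schmidt decomposition $\ket{\psi} = \sum_{i=1}^r \sqrt{\lambda_i}\ket{e_i}\ket{f_i}$, testing the inequality against an arbitrary $\ket{\phi}$ whose coefficients $\phi_{\alpha\beta}$ are written in orthonormal bases extending $\{\ket{e_i}\}$ and $\{\ket{f_j}\}$, and applying Cauchy--Schwarz to the $r$-term sum $\langle \phi | \psi \rangle = \sum_{i=1}^r \overline{\phi_{ii}}\sqrt{\lambda_i}$. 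This gives $|\langle \phi | \psi \rangle|^2 \le r \sum_{i=1}^r \lambda_i |\phi_{ii}|^2$, which is in turn dominated by $r \bra{\phi}\rom{id}_A \otimes \rho_B^\psi\ket{\phi} = r \sum_\alpha \sum_{j=1}^r \lambda_j |\phi_{\alpha j}|^2$ because the latter is the former plus non-negative off-diagonal contributions.

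To conclude, the pure-state bound applied to each $\ket{\psi_k}$ yields $\ket{\psi_k}\bra{\psi_k} \le r_k\,\rom{id}_A \otimes \rom{Tr}_A\ket{\psi_k}\bra{\psi_k} \le r_A\,\rom{id}_A \otimes \rom{Tr}_A\ket{\psi_k}\bra{\psi_k}$, and summing against the weights $\mu_k$ gives $\rho_{AB} \le r_A\,\rom{id}_A \otimes \rho_B$ as required. The only non-routine step is the pure-state case, which essentially reduces to a single Cauchy--Schwarz estimate in the diagonal subspace spanned by $\{\ket{e_i}\ket{f_i}\}_{i=1}^r$; the support/rank bookkeeping and the spectral decomposition are straightforward. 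An alternative route would bound the operator $(\rom{id}_A \otimes \rho_B^{-1/2})\rho_{AB}(\rom{id}_A \otimes \rho_B^{-1/2})$ directly, but controlling its maximum eigenvalue seems to require essentially the same Schmidt-based argument, so the decomposition approach is cleanest.
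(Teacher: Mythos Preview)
Your proposal is correct and follows essentially the same approach as the paper: both reduce to the pure-state case via the spectral decomposition of $\rho_{AB}$, use the Schmidt decomposition, test the inequality against an arbitrary vector, and close with a single Cauchy--Schwarz estimate. The only cosmetic difference is that the paper parametrizes the test vector as $\ket{\Phi}=\sum_i \beta_i \ket{\hat{\phi}_i}\ket{\psi_i}$ with normalized (not necessarily orthogonal) $\ket{\hat{\phi}_i}$, whereas you expand it in full orthonormal coordinates $\phi_{\alpha\beta}$; the two computations are equivalent.
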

\begin{proof}
First, we prove the assertion for pure state $\rho_{AB} = \ket{\Psi}\bra{\Psi}$.
Let 
\begin{eqnarray}
     \label{schidt-decomposition}
\ket{\Psi} = \sum_{i=1}^{r_A} \sqrt{\alpha_i} \ket{\phi_i} \otimes
\ket{\psi_i}
\end{eqnarray}
be a Schmidt decomposition of $\ket{\Psi}$.
Let $\{ \ket{\phi_i} \}_{i=1}^{d_A}$ and $\{ \ket{\psi_i} \}_{i=1}^{d_B}$
be orthonormal bases of ${\cal H}_A$ and ${\cal H}_B$ that are 
extensions of vectors in Eq.~(\ref{schidt-decomposition}).
For any vector $\ket{\Phi} \in {\cal H}_A \otimes {\cal H}_B$, we can write
\begin{eqnarray*}
\ket{\Phi} = \sum_{i=1}^{d_B} \beta_i \ket{\hat{\phi}_i} \otimes 
\ket{\psi_i},
\end{eqnarray*}
where $\{ \ket{\hat{\phi}_i} \}_{i = 1}^{d_B}$ is normalized but not
necessarily orthogonal. Then, we have
\begin{eqnarray*}
\bra{\Phi} \rho_{AB} \ket{\Phi} = |\bra{\Psi}\ket{\Phi} |^2
&=& \left| \sum_{i = 1}^{r_A} \sqrt{\alpha_i} \beta_i 
\bra{\phi_i}\ket{\hat{\phi}_i} \right|^2 \\
&\le& \left| \sum_{i =1}^{r_A} \sqrt{\alpha_i} |\beta_i| \right|^2
\end{eqnarray*} 
and 
\begin{eqnarray*}
\bra{\Phi} ( r_A \rom{id}_A \otimes \rho_B) \ket{\Phi}
&=& r_A \| \sum_{i = 1}^{r_A} \sqrt{\alpha_i} \beta_i \ket{\hat{\phi}_i} 
\otimes \ket{\psi_i} \|^2 \\
&=& r_A \sum_{i=1}^{r_A} \alpha_i |\beta_i|^2.
\end{eqnarray*}
Using the Cauchy-Schwartz inequality for two vectors
$(1,\ldots, 1)$ and 
$(\sqrt{\alpha_1} |\beta_1|$, $\ldots$, $\sqrt{\alpha_{r_A}} |\beta_{r_A}|)$, 
we have
\begin{eqnarray*}
\bra{\Phi} \rho_{AB} \ket{\Phi}
\le \left| \sum_{i =1}^{r_A} \sqrt{\alpha_i} |\beta_i| \right|^2 
&\le& r_A \sum_{i=1}^{r_A} \alpha_i |\beta_i|^2 \\
&=& \bra{\Phi} ( r_A \rom{id}_A \otimes \rho_B) \ket{\Phi}.
\end{eqnarray*}
Thus, the assertion holds for a pure state $\rho_{AB} = \ket{\Psi}\bra{\Psi}$.
For a mixed state $\rho_{AB}$, let 
$\rho_{AB} = \sum_{i=1}^{m} p_i \ket{\Psi_i}\bra{\Psi_i}$ be 
an eigenvalue decomposition. Let $\rho_B^i = \rom{Tr}_A \ket{\Psi_i}\bra{\Psi_i}$.
Noting that 
$\rom{rank}(\rom{Tr}_B \ket{\Psi_i}\bra{\Psi_i}) \le \rom{rank}(\rom{Tr}_B \rho_{AB})
= r_A$ for all $1 \le i \le m$,
we have
\begin{eqnarray*}
r_A \rom{id}_A \otimes \rho_{B} - \rho_{AB} 
= \sum_{i=1}^m p_i
( r_A \rom{id}_A \otimes \rho_B^i - \ket{\Psi_i}\bra{\Psi_i} ) \ge 0.
\end{eqnarray*}
\end{proof}
%
%%%%%%%%%
\begin{lemma}
	\label{lemma-chain-1}
Let $\rho_{ABC} \in {\cal P}({\cal H}_A \otimes {\cal H}_B \otimes {\cal H}_C)$ and
$\sigma_C \in {\cal P}({\cal H}_C)$. Then
\begin{eqnarray*}
H_{\min}(\rho_{ABC}|\sigma_C) \ge H_{\min}(\rho_{BC}|\sigma_C) -
H_{\max}(\rho_A).
\end{eqnarray*}
\end{lemma}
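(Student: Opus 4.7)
The plan is to combine Lemma \ref{lemma-matrix-inequality-1} with the definition of the min-entropy in a single operator-inequality argument. Writing $\lambda := 2^{-H_{\min}(\rho_{BC}|\sigma_C)}$, the definition of min-entropy gives
\begin{eqnarray*}
\lambda \cdot \rom{id}_B \otimes \sigma_C - \rho_{BC} \ge 0,
\end{eqnarray*}
and tensoring on the left with the positive operator $\rom{id}_A$ (and multiplying by the positive scalar $r_A := \rom{rank}(\rho_A)$) preserves positivity, yielding
\begin{eqnarray*}
r_A \lambda \cdot \rom{id}_{AB} \otimes \sigma_C - r_A \cdot \rom{id}_A \otimes \rho_{BC} \ge 0.
\end{eqnarray*}

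Next I would invoke Lemma \ref{lemma-matrix-inequality-1} on $\rho_{ABC}$, regarded as a bipartite operator on ${\cal H}_A \otimes ({\cal H}_B \otimes {\cal H}_C)$, to obtain
\begin{eqnarray*}
r_A \cdot \rom{id}_A \otimes \rho_{BC} - \rho_{ABC} \ge 0.
\end{eqnarray*}
Adding the two displayed inequalities (the sum of positive semidefinite operators is positive semidefinite) the middle term cancels, leaving
\begin{eqnarray*}
r_A \lambda \cdot \rom{id}_{AB} \otimes \sigma_C - \rho_{ABC} \ge 0,
\end{eqnarray*}
so $H_{\min}(\rho_{ABC}|\sigma_C) \ge -\log(r_A \lambda) = H_{\min}(\rho_{BC}|\sigma_C) - \log r_A$, which is exactly the claimed bound once we recall $H_{\max}(\rho_A) = \log r_A$.

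The only delicate point is that Lemma \ref{lemma-matrix-inequality-1} is stated for a density operator, whereas here I am applying it to the possibly non-normalized $\rho_{ABC}$. This is not a real obstacle: the operator inequality in Lemma \ref{lemma-matrix-inequality-1} is homogeneous of degree one in the state, so rescaling $\rho_{ABC}$ to $\rho_{ABC}/\rom{Tr}\rho_{ABC}$, applying the lemma, and multiplying back by $\rom{Tr}\rho_{ABC}$ gives the required inequality (noting that $\rom{rank}$ is scale-invariant). I would also flag the degenerate case $\rom{supp}(\rho_B) \not\subset \rom{supp}(\sigma_C)$, where $H_{\min}(\rho_{BC}|\sigma_C) = -\infty$ by definition and the bound holds trivially; otherwise the finite-$\lambda$ argument above applies verbatim.
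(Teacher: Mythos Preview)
Your proof is correct and follows essentially the same route as the paper: both combine the operator inequality $r_A\,\rom{id}_A\otimes\rho_{BC}\ge\rho_{ABC}$ from Lemma~\ref{lemma-matrix-inequality-1} with the defining inequality $\lambda\,\rom{id}_B\otimes\sigma_C\ge\rho_{BC}$ to obtain $r_A\lambda\,\rom{id}_{AB}\otimes\sigma_C\ge\rho_{ABC}$. Your added remarks on the normalization issue (homogeneity of Lemma~\ref{lemma-matrix-inequality-1}) and the degenerate support case are legitimate refinements that the paper leaves implicit.
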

\begin{proof}
Let $\lambda$ is such that
$H_{\min}(\rho_{BC}|\sigma_C) = - \log \lambda$, i.e., $\lambda$ is the minimum number 
satisfying
\begin{eqnarray*}
\lambda \rom{id}_B \otimes \sigma_C - \rho_{BC} \ge 0.
\end{eqnarray*}
Let $r_A := \rom{rank}(\rho_A)$. Then, we want to show that
\begin{eqnarray*}
H_{\min}(\rho_{ABC}|\sigma_C) \ge - \log \lambda - \log r_A = - \log r_A \lambda,
\end{eqnarray*}
i.e.,
$r_A \lambda \rom{id}_{AB} \otimes \sigma_C - \rho_{ABC} \ge 0$.
From Lemma \ref{lemma-matrix-inequality-1}, we have
\begin{eqnarray*}
\lefteqn{ r_A \lambda \rom{id}_{AB} \otimes \sigma_C - \rho_{ABC} } \\
&\ge& r_A \lambda \rom{id}_{AB} \otimes \sigma_C - r_A \rom{id}_A \otimes \rho_{BC} \\
&=& r_A \rom{id}_A \otimes ( \lambda \rom{id}_B \otimes \sigma_C - \rho_{BC} ) \ge 0.
\end{eqnarray*}
\end{proof}

The following lemma states that Lemma \ref{lemma-chain-1}
can be extended to the smooth Min-entropy by
adjusting the smoothness $\varepsilon$.
\begin{lemma}
\label{lemma-chain-2}
Let $\varepsilon \ge 0$ and 
$\rho_{ABC} \in {\cal P}({\cal H}_A \otimes {\cal H}_B \otimes {\cal H}_C)$
be a density operator.
Then, we have
\begin{eqnarray*}
H_{\min}^{\bar{\varepsilon}}(\rho_{ABC}|C) \ge H_{\min}^\varepsilon(\rho_{BC}|C) 
- \log \dim {\cal H}_A,
\end{eqnarray*}
where $\bar{\varepsilon} := \sqrt{8 \varepsilon}$.
\end{lemma}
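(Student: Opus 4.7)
The plan is to mirror exactly the strategy that was used to smooth the monotonicity statement (Lemma \ref{lemma-weak-monotonicity}), but with the non-smooth chain-type bound (Lemma \ref{lemma-chain-1}) replacing the classical monotonicity bound (Lemma \ref{monotonicity-of-min-entropy}). The idea is that Lemma \ref{lemma-chain-1} already gives the desired inequality at the level of the (non-smooth) min-entropy, with $H_{\max}(\hat\rho_A) \le \log\dim{\cal H}_A$ handling the cost of removing the $A$-system; all that is needed is to show that an $\varepsilon$-close approximation of $\rho_{BC}$ can be lifted to a $\bar\varepsilon$-close approximation of $\rho_{ABC}$ whose $BC$-marginal matches, and this is precisely what Lemma \ref{neighbor-purification} supplies (via Uhlmann's theorem), which is where the blow-up from $\varepsilon$ to $\bar\varepsilon=\sqrt{8\varepsilon}$ enters.

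More concretely, I would fix an arbitrary normalized $\sigma_C \in {\cal P}({\cal H}_C)$ and prove
\begin{eqnarray*}
H_{\min}^{\bar\varepsilon}(\rho_{ABC}|\sigma_C)
\ge H_{\min}^{\varepsilon}(\rho_{BC}|\sigma_C) - \log\dim{\cal H}_A,
\end{eqnarray*}
from which the lemma follows by taking the supremum over $\sigma_C$ on both sides. For any $\nu>0$, choose $\hat\rho_{BC} \in {\cal B}^{\varepsilon}(\rho_{BC})$ with $H_{\min}(\hat\rho_{BC}|\sigma_C) \ge H_{\min}^{\varepsilon}(\rho_{BC}|\sigma_C) - \nu$, which exists by the definition of smooth min-entropy. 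Since $\rho_{ABC}$ is a density operator, applying Lemma \ref{neighbor-purification} with the role of the ``$A$''-system there played by our ${\cal H}_A$ and the ``$B$''-system played by ${\cal H}_B \otimes {\cal H}_C$ produces $\hat\rho_{ABC} \in {\cal B}^{\bar\varepsilon}(\rho_{ABC})$ with $\mathrm{Tr}_A[\hat\rho_{ABC}] = \hat\rho_{BC}$.

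Now invoke Lemma \ref{lemma-chain-1} on $\hat\rho_{ABC}$ to obtain
\begin{eqnarray*}
H_{\min}(\hat\rho_{ABC}|\sigma_C)
\ge H_{\min}(\hat\rho_{BC}|\sigma_C) - H_{\max}(\hat\rho_A).
\end{eqnarray*}
Bound $H_{\max}(\hat\rho_A) = \log\mathrm{rank}(\hat\rho_A) \le \log\dim{\cal H}_A$, and note $H_{\min}^{\bar\varepsilon}(\rho_{ABC}|\sigma_C) \ge H_{\min}(\hat\rho_{ABC}|\sigma_C)$ by definition of the smooth min-entropy applied to the $\bar\varepsilon$-close operator $\hat\rho_{ABC}$. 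Chaining these three inequalities and letting $\nu \downarrow 0$ gives the desired bound relative to $\sigma_C$.

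I do not expect any serious obstacle: the construction is essentially a template. The only point requiring care is making sure that Lemma \ref{neighbor-purification} applies in the direction needed (i.e.\ that $\rho_{ABC}$ itself is normalized so that Uhlmann's theorem can be invoked on its purification) and that the resulting $\hat\rho_{ABC}$ actually lies in the smoothing ball ${\cal B}^{\bar\varepsilon}(\rho_{ABC})$ with the right normalization convention; both are immediate from the hypotheses and the statement of Lemma \ref{neighbor-purification}. In contrast to Lemma \ref{lemma-weak-monotonicity}, there is no need to invoke Lemma \ref{neighbor-purification-classical}, since here no classicality structure has to be preserved in the approximation.
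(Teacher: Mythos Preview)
Your proposal is correct and matches the paper's own proof essentially step for step: fix $\sigma_C$, pick a near-optimal $\hat\rho_{BC}\in{\cal B}^\varepsilon(\rho_{BC})$, lift it to $\hat\rho_{ABC}\in{\cal B}^{\bar\varepsilon}(\rho_{ABC})$ via Lemma~\ref{neighbor-purification}, apply Lemma~\ref{lemma-chain-1}, bound $H_{\max}(\hat\rho_A)\le\log\dim{\cal H}_A$, and let $\nu\to 0$. The paper even writes the $\log\dim{\cal H}_A$ bound directly into its invocation of Lemma~\ref{lemma-chain-1}, so your extra line making that estimate explicit is the only cosmetic difference.
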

\begin{proof}
We will prove that
\begin{eqnarray*}
H_{\min}^{\bar{\varepsilon}}(\rho_{ABC}|\sigma_C) \ge
H_{\min}^{\varepsilon}(\rho_{BC}|\sigma_C) - \log \dim {\cal H}_A
\end{eqnarray*}
holds for any $\sigma_C \in {\cal P}({\cal H}_C)$ with $\rom{Tr} \sigma_C = 1$.
For any $\nu > 0$, there exists $\hat{\rho}_{BC} \in {\cal B}^{\varepsilon}(\rho_{BC})$
such that
\begin{eqnarray}
\label{eq-proof-of-chain-2-1}
H_{\min}(\hat{\rho}_{BC}|\sigma_C) \ge H_{\min}^{\varepsilon}(\rho_{BC}|\sigma_C) - \nu.
\end{eqnarray}
From Lemma \ref{neighbor-purification}, there exists a operator
$\hat{\rho}_{ABC} \in {\cal B}^{\bar{\varepsilon}}(\rho_{ABC})$
such that $\rom{Tr}_A[ \hat{\rho}_{ABC}] = \hat{\rho}_{BC}$.
Then from Lemma \ref{lemma-chain-1}, we have
\begin{eqnarray}
\label{eq-proof-of-chain-2-2}
H_{\min}(\hat{\rho}_{ABC}|\sigma_C) \ge
H_{\min}(\hat{\rho}_{BC} | \sigma_C) - \log \dim {\cal H}_A.
\end{eqnarray}
Furthermore, from the definition of the smooth-min-entropy, we have
\begin{eqnarray}
\label{eq-proof-of-chain-2-3}
H_{\min}^{\bar{\varepsilon}}(\rho_{ABC}|\sigma_C) 
\ge H_{\min}(\hat{\rho}_{ABC}|\sigma_C).
\end{eqnarray}
Since $\nu > 0$ is arbitrary, combining 
Eqs.~(\ref{eq-proof-of-chain-2-1})--(\ref{eq-proof-of-chain-2-3}),
we have the assertion of the lemma.
\end{proof}

%%%%%%%%%%%
\begin{lemma}
	\label{lemma-continuity}
For density operators
$\rho_{AB}, \overline{\rho}_{AB} \in {\cal P}({\cal H}_A \otimes {\cal H}_B)$ such 
that $\| \rho_{AB} - \overline{\rho}_{AB} \| \le \varepsilon^\prime$, we have
\begin{eqnarray*}
H_{\min}^{\varepsilon}(\rho_{AB}|B) \le 
H_{\min}^{\varepsilon + \varepsilon^\prime}(\overline{\rho}_{AB}|B) 
\end{eqnarray*}
\end{lemma}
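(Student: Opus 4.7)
The plan is to prove this by a direct triangle-inequality argument applied to the $\varepsilon$-ball that defines the smooth min-entropy. Since both $\rho_{AB}$ and $\overline{\rho}_{AB}$ are density operators, we have $\rom{Tr}\rho_{AB} = \rom{Tr}\overline{\rho}_{AB} = 1$, so the two balls ${\cal B}^{\varepsilon}(\rho_{AB})$ and ${\cal B}^{\varepsilon+\varepsilon^\prime}(\overline{\rho}_{AB})$ are simply trace-distance balls of radii $\varepsilon$ and $\varepsilon+\varepsilon^\prime$, respectively. This normalization observation is crucial, because otherwise the radii would scale with the traces and the inclusion below would require extra care.

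First I would fix an arbitrary $\sigma_B \in {\cal P}({\cal H}_B)$ with $\rom{Tr}\sigma_B = 1$, and show the inequality
\begin{eqnarray*}
H_{\min}^{\varepsilon}(\rho_{AB}|\sigma_B) \le
H_{\min}^{\varepsilon+\varepsilon^\prime}(\overline{\rho}_{AB}|\sigma_B)
\end{eqnarray*}
at the level of the relative smooth min-entropy. Taking the supremum over $\sigma_B$ on both sides then yields the claim.

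Next I would establish the ball inclusion ${\cal B}^{\varepsilon}(\rho_{AB}) \subset {\cal B}^{\varepsilon+\varepsilon^\prime}(\overline{\rho}_{AB})$. Given any $\hat{\rho}_{AB} \in {\cal B}^{\varepsilon}(\rho_{AB})$, so that $\|\hat{\rho}_{AB} - \rho_{AB}\| \le \varepsilon$, the triangle inequality for the trace distance gives
\begin{eqnarray*}
\|\hat{\rho}_{AB} - \overline{\rho}_{AB}\|
&\le& \|\hat{\rho}_{AB} - \rho_{AB}\| + \|\rho_{AB} - \overline{\rho}_{AB}\| \\
&\le& \varepsilon + \varepsilon^\prime,
\end{eqnarray*}
hence $\hat{\rho}_{AB} \in {\cal B}^{\varepsilon+\varepsilon^\prime}(\overline{\rho}_{AB})$. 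Consequently the supremum of $H_{\min}(\hat{\rho}_{AB}|\sigma_B)$ over the smaller ball cannot exceed the supremum over the larger one, which is exactly the desired inequality at fixed $\sigma_B$.

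There is essentially no hard step: the argument is a one-line inclusion of balls followed by monotonicity of the supremum. The only place requiring a small amount of attention is confirming that the ball radius in the definition of ${\cal B}^{\varepsilon}(\cdot)$ coincides with $\varepsilon$ under the hypothesis that both operators are normalized states; once this is noted, the triangle inequality on $\|\cdot\|$ closes the proof immediately.
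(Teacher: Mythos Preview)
Your proposal is correct and follows essentially the same argument as the paper: a triangle-inequality inclusion ${\cal B}^{\varepsilon}(\rho_{AB}) \subset {\cal B}^{\varepsilon+\varepsilon^\prime}(\overline{\rho}_{AB})$, hence the supremum defining the smooth min-entropy can only increase, first at fixed $\sigma_B$ and then after taking the supremum over $\sigma_B$. Your explicit remark that $\rom{Tr}\rho_{AB}=\rom{Tr}\overline{\rho}_{AB}=1$ makes the ball radii literally $\varepsilon$ and $\varepsilon+\varepsilon^\prime$ is a nice clarification that the paper leaves implicit.
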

\begin{proof}
For all $\hat{\rho}_{AB} \in {\cal B}^{\varepsilon}(\rho_{AB})$, by the triangle inequality, we have
\begin{eqnarray*}
\| \overline{\rho}_{AB} - \hat{\rho}_{AB} \| \le
\| \rho_{AB} - \hat{\rho}_{AB} \| +  \| \overline{\rho}_{AB} - \rho_{AB} \|
\le \varepsilon + \varepsilon^\prime.
\end{eqnarray*}
Thus, we have $\hat{\rho}_{AB} \in {\cal B}^{\varepsilon + \varepsilon^\prime}(\overline{\rho}_{AB})$,
and 
\begin{eqnarray*}
H_{\min}^{\varepsilon}(\rho_{AB}|\sigma_B) \le 
H_{\min}^{\varepsilon + \varepsilon^\prime}(\overline{\rho}_{AB}|\sigma_B) 
\end{eqnarray*}
for all $\sigma_B \in {\cal P}({\cal H}_B)$. Then we have the assertion of the lemma.
\end{proof}

\subsection{Privacy amplification}

The following definition is used to state the 
security of the distilled key by the privacy amplification.

\begin{definition}
\cite[Definition 5.2.1]{renner:05b}
\label{definition-distance-from-uniform}
Let $\rho_{AB} \in {\cal P}({\cal H}_A \otimes {\cal H}_B)$.
Then the trace distance from the uniform of $\rho_{AB}$ given $B$ is
defined by
\begin{eqnarray*}
d(\rho_{AB}|B) := \| \rho_{AB} - \rho_A^{\rom{mix}} \otimes \rho_B \|,
\end{eqnarray*}
where $\rho_A^{\rom{mix}} := \frac{1}{\dim {\cal H}_A} \rom{id}_A$ is the fully
mixed state on ${\cal H}_A$ and 
$\rho_B := \rom{Tr}_A[\rho_{AB}]$.
\end{definition}
%%%%%%%%
\begin{definition}
\cite{carter:79}
\label{definition-two-universal-hash}
Let ${\cal F}$ be a family of hash functions from ${\cal X}$ to
${\cal Z}$, and let $P_F$ be the uniform probability distribution on ${\cal F}$.
The family ${\cal F}$ is called two-universal if
$\Pr \{ f(x) = f(x^\prime) \} \le \frac{1}{|{\cal Z}|}$ for
any distinct $x, x^\prime \in {\cal X}$.
\end{definition}
%%%%%%%%

Consider an operator $\rho_{XB} \in {\cal P}({\cal H}_X \otimes {\cal H}_B)$ that is classical with respect to
an orthonormal basis $\{ \ket{x} \}_{x \in {\cal X}}$ of ${\cal H}_X$,
and assume that $f$ is a function from ${\cal X}$ to ${\cal Z}$.
The operator describing the classical function output together
with the quantum system ${\cal H}_B$ is then given by
\begin{eqnarray}
\label{eq-state-key}
\rho_{f(X)B} := \sum_{z \in {\cal Z}} \ket{z}\bra{z} \otimes \rho_B^z
~\mbox{for } \rho_B^z := \sum_{x \in f^{-1}(z)} \rho_B^x,
\end{eqnarray}
where $\{ \ket{z} \}_{z \in {\cal Z}}$ is an orthonormal basis of ${\cal H}_Z$.

Assume now that the function $f$ is randomly chosen from a family ${\cal
F}$
of function according to the uniform probability distribution $P_F$.
Then the output $f(x)$, the state of the quantum system, and the
choice of the function $f$ is  described by the operator
\begin{eqnarray}
\label{state-distilled-key}
\rho_{F(X)BF} := \sum_{f \in {\cal F}} P_F(f) \rho_{f(X)B} \otimes
\ket{f}\bra{f}
\end{eqnarray}
on ${\cal H}_Z \otimes {\cal H}_B \otimes {\cal H}_F$, where ${\cal H}_F$
is a Hilbert space with orthonormal basis $\{ \ket{f} \}_{f \in {\cal F}}$.
The system ${\cal H}_Z$ describes the distilled key, and 
the system ${\cal H}_B$ and  ${\cal H}_F$ describe the information
which an adversary Eve can access. 
The following lemma states that the length of securely distillable key
is given by the conditional smooth min-entropy
$H_{\min}^{\varepsilon}(\rho_{XB}|B)$.

%%%%%%%%%%%
\begin{lemma}
\cite[Corollary 5.6.1]{renner:05b}
      \label{lemma-privacy-amplification}
Let $\rho_{XB} \in {\cal P}({\cal H}_X \otimes {\cal H}_B)$ be a density
operator which is classical with respect to an orthonormal basis 
$\{ \ket{x} \}_{x \in {\cal X}}$ of ${\cal H}_X$.
Let ${\cal F}$ be a two-universal family of hash functions from
${\cal X}$ to $\{ 0,1\}^\ell$, and let $\varepsilon \ge 0$.
Then we have
\begin{eqnarray*}
d(\rho_{F(X)BF}|BF) \le 2 \varepsilon + 
2^{-\frac{1}{2} ( H_{\min}^{\varepsilon}(\rho_{XB}|B) - \ell)}
\end{eqnarray*}
for $\rho_{F(X)BF} \in {\cal P}({\cal H}_Z \otimes {\cal H}_B \otimes {\cal H}_F)$
defined by Eq.~(\ref{state-distilled-key}).
\end{lemma}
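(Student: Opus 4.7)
The plan is to prove the lemma in two stages: first a non-smooth version ($\varepsilon = 0$) via a quantum leftover hash lemma, and then upgrade to the smooth version by a triangle-inequality argument.

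For the non-smooth case, fix a normalized $\sigma_B \in {\cal P}({\cal H}_B)$ approaching the supremum in the definition of $H_{\min}(\rho_{XB}|B)$, and set $\lambda := 2^{-H_{\min}(\rho_{XB}|\sigma_B)}$, so that $\lambda\,\rom{id}_X \otimes \sigma_B - \rho_{XB} \ge 0$. Since $\rho_{XB}$ is classical on ${\cal H}_X$, this operator inequality reduces to $\rho_B^x \le \lambda \sigma_B$ for each $x$. The technical heart is to bound the trace distance $d(\rho_{F(X)BF}|BF)$ by a $\sigma_B$-weighted Hilbert--Schmidt norm on the hashed register of dimension $2^\ell$, via Cauchy--Schwarz in the form $\| A \|_1 \le \sqrt{\dim}\, \| A \|_2$; the purpose of the $\sigma_B$-weighting is to ensure that only $2^\ell$ (not $\dim {\cal H}_B$) enters the dimensional prefactor. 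Taking the expectation of the squared weighted $2$-norm over the uniform choice of $F$ and expanding, two-universality gives $\Pr_F[F(x) = F(x')] \le 2^{-\ell}$ for $x \ne x'$, so the off-diagonal contributions cancel against the subtraction of $\rho_Z^{\rom{mix}} \otimes \rho_B$; what remains is $\sum_x \rom{Tr}[\sigma_B^{-1/2} \rho_B^x \sigma_B^{-1/2} \rho_B^x]$, which the operator inequality $\rho_B^x \le \lambda \sigma_B$ bounds by $\lambda\,\rom{Tr}\rho_B^x$. Summing over $x$ and taking the square root yields $d(\rho_{F(X)BF}|BF) \le 2^{-\frac{1}{2}(H_{\min}(\rho_{XB}|B) - \ell)}$.

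For the smoothing step, let $\bar\rho_{XB} \in {\cal B}^\varepsilon(\rho_{XB})$ (approximately) attain the supremum in the definition of $H_{\min}^\varepsilon(\rho_{XB}|B)$, and form the corresponding hashed operator $\bar\rho_{F(X)BF}$ by the recipe of Eq.~(\ref{state-distilled-key}). Applying the non-smooth result to $\bar\rho$ gives $d(\bar\rho_{F(X)BF}|BF) \le 2^{-\frac{1}{2}(H_{\min}^\varepsilon(\rho_{XB}|B) - \ell)}$. Inserting $\bar\rho_{F(X)BF}$ and $\rho_Z^{\rom{mix}} \otimes \bar\rho_B$ into the triangle inequality for $\| \rho_{F(X)BF} - \rho_Z^{\rom{mix}} \otimes \rho_B \|$, combined with monotonicity of the trace distance under the hashing CP map and the partial trace (which turn $\|\rho_{XB} - \bar\rho_{XB}\| \le \varepsilon$ into bounds of $\varepsilon$ on the two correction terms), produces the additional $2\varepsilon$ contribution.

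The hardest part will be the leftover-hash calculation: choosing the $\sigma_B$-weighted Hilbert--Schmidt inner product so that Cauchy--Schwarz extracts a dimensional prefactor of exactly $2^\ell$ (rather than $|{\cal X}|$ or $\dim {\cal H}_B$), and verifying that two-universality together with the subtraction of $\rho_Z^{\rom{mix}} \otimes \rho_B$ eliminates the off-diagonal $x \ne x'$ terms. Once this bookkeeping is handled, the min-entropy inequality supplies the remaining factor and the smoothing step is just triangle inequality plus CP-map monotonicity.
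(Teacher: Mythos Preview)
The paper does not supply its own proof of this lemma; it simply cites it as \cite[Corollary 5.6.1]{renner:05b}. Your two-stage outline (weighted Hilbert--Schmidt/leftover-hash argument for the non-smooth case, then triangle inequality plus CP-monotonicity for the smoothing) is precisely the standard proof from that reference, so there is nothing substantive to compare. One minor technicality you glossed over: the near-optimizer $\bar\rho_{XB}\in{\cal B}^\varepsilon(\rho_{XB})$ need not be classical on ${\cal H}_X$ or normalized, so before applying the non-smooth bound you should project it onto the diagonal in $\{\ket{x}\}$ (this does not increase the trace distance to $\rho_{XB}$ and does not decrease the min-entropy), and check that the leftover-hash calculation goes through for sub-normalized operators.
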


\section{Symmetric states}
\label{sec-symmetric-states}

In this section, we review the property of symmetric states
and the de Finetti style representation theorem \cite{renner:05b, renner:07}.
For more detail, refer to \cite[Section 4]{renner:05b}.

Let ${\cal H}$ be a Hilbert space and let ${\cal S}_n$ be the set of 
permutations on $\{1,\ldots,n\}$.
For any $\pi \in {\cal S}_n$, we denote by the same letter $\pi$ the
unitary operation on ${\cal H}^{\otimes n}$ which permutes the $n$
subsystems, that is,
\begin{eqnarray*}
\pi(\ket{\theta_1} \otimes \cdots \otimes \ket{\theta_n}) :=
\ket{\theta_{\pi^{-1}(1)}} \otimes \cdots \otimes \ket{\theta_{\pi^{-1}(n)}},
\end{eqnarray*}
for any $\ket{\theta_1}, \ldots, \ket{\theta_n} \in {\cal H}$.
%%%%%%%%
\begin{definition}
\cite[Definition 4.1.1]{renner:05b}
     \label{definition-symmetric-subspace}
The symmetric subspace $\mbox{Sym}({\cal H}^{\otimes n})$ of ${\cal H}^{\otimes n}$
is the subspace of ${\cal H}^{\otimes n}$ spanned by all vectors
which are invariant under permutations of the subsystems, that is,
\begin{eqnarray*}
\mbox{Sym}({\cal H}^{\otimes n}) := \{
\ket{\Psi} \in {\cal H}^{\otimes n} \mid \pi \ket{\Psi} =
\ket{\Psi},~\forall \pi \in {\cal S}_n \}.
\end{eqnarray*}
\end{definition}
%%%%%%%%%%

\begin{definition}
\cite[Definition 4.1.4]{renner:05b}
Let $\ket{\theta} \in {\cal H}$ be fixed, and let $0 \le m \le n$.
We denote by ${\cal V}({\cal H}^{\otimes n}, \ket{\theta}^{\otimes m})$ the
set of vectors $\ket{\Psi} \in {\cal H}^{\otimes n}$ which,
after some reordering of the subsystems, are of the form
$\ket{\theta}^{\otimes m} \otimes \ket{\tilde{\Psi}}$, that is,
\begin{eqnarray*}
\lefteqn{ 
{\cal V}({\cal H}^{\otimes n}, \ket{\theta}^{\otimes m}) } \\
&:=&
\{ \pi ( \ket{\theta}^{\otimes m} \otimes \ket{\tilde{\Psi}} ) \mid 
\pi \in {\cal S}_n,~ \ket{\tilde{\Psi}} \in {\cal H}^{\otimes n-m} \}.
\end{eqnarray*}
The symmetric subspace $\mbox{Sym}({\cal H}^{\otimes n}, \ket{\theta}^{\otimes m})$
of ${\cal H}^{\otimes n}$ along $\ket{\theta}^{\otimes m}$ is
\begin{eqnarray*}
\mbox{Sym}({\cal H}^{\otimes n}, \ket{\theta}^{\otimes m}) :=
\mbox{Sym}({\cal H}^{\otimes n}) \cap \mbox{span } 
{\cal V}({\cal H}^{\otimes n}, \ket{\theta}^{\otimes m}).
\end{eqnarray*}
\end{definition}
If $m \ll  n$, then we can consider that a state 
$\ket{\Psi} \in \mbox{Sym}({\cal H}^{\otimes n}, \ket{\theta}^{\otimes m})$ is 
almost the same as the product state $\ket{\theta}^{\otimes n}$.

The following lemma states that a permutation invariant mixed states have a 
purification in the symmetric space of a extended systems.
%%%%%%%%%%
\begin{lemma}
\cite[Lemma 4.2.2]{renner:05b}
	\label{symmetric-purification}
Let $\rho_n \in {\cal P}({\cal H}^{\otimes n})$ be permutation-invariant.
Then, there exists a purification 
$\ket{\Psi} \in \mbox{Sym}(({\cal H} \otimes {\cal H})^{\otimes n})$
of $\rho_n$.
\end{lemma}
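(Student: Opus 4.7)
The plan is to construct a canonical purification built from $\rho_n$ itself and verify that permutation invariance of $\rho_n$ is automatically inherited by this purification once we identify the $n$ reference subsystems with the $n$ original subsystems. Concretely, pick any orthonormal basis $\{\ket{j}\}_{j}$ of ${\cal H}^{\otimes n}$ and form the (unnormalized) maximally entangled vector $\ket{\Omega} := \sum_j \ket{j}\otimes\ket{j}$ in ${\cal H}^{\otimes n}\otimes{\cal H}^{\otimes n}$. Then define
\begin{eqnarray*}
\ket{\Psi} := (\rho_n^{1/2} \otimes \rom{id})\ket{\Omega}
\in {\cal H}^{\otimes n}\otimes{\cal H}^{\otimes n}.
\end{eqnarray*}
A direct computation using $\rom{Tr}_R[(A\otimes\rom{id})\ket{\Omega}\bra{\Omega}(A^*\otimes\rom{id})] = AA^*$ shows $\rom{Tr}_R[\ket{\Psi}\bra{\Psi}] = \rho_n$, so $\ket{\Psi}$ is a purification.

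Next I would verify permutation symmetry. For any $\pi\in{\cal S}_n$, the basis $\{\ket{j}\}$ can be chosen so that $\pi$ is represented by a real (in fact $0/1$) permutation matrix, hence $\pi^T = \pi^{-1}$ and the "transpose trick" $(A\otimes \rom{id})\ket{\Omega} = (\rom{id}\otimes A^T)\ket{\Omega}$ yields $(\pi\otimes\pi)\ket{\Omega}=(\pi\pi^{-T}\otimes\rom{id})(\pi\otimes\pi^T)\ket{\Omega}=\ket{\Omega}$. Combining this with the assumption $\pi\rho_n\pi^{-1}=\rho_n$, which implies $\pi\rho_n^{1/2}=\rho_n^{1/2}\pi$, I get
\begin{eqnarray*}
(\pi\otimes\pi)\ket{\Psi}
&=& (\pi\rho_n^{1/2}\otimes\pi)\ket{\Omega} \\
&=& (\rho_n^{1/2}\otimes\rom{id})(\pi\otimes\pi)\ket{\Omega} \\
&=& (\rho_n^{1/2}\otimes\rom{id})\ket{\Omega} = \ket{\Psi}.
\end{eqnarray*}

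Finally I would translate this into the claim. The Hilbert space $({\cal H}\otimes{\cal H})^{\otimes n}$ is canonically isomorphic to ${\cal H}^{\otimes n}\otimes{\cal H}^{\otimes n}$ by pairing the $i$-th original subsystem with the $i$-th reference subsystem, and under this reordering the permutation of the $n$ composite subsystems ${\cal H}\otimes{\cal H}$ acts precisely as $\pi\otimes\pi$. Hence the invariance just shown places $\ket{\Psi}$ in $\mbox{Sym}(({\cal H}\otimes{\cal H})^{\otimes n})$ as required.

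The only real subtlety, which I expect to be the "main obstacle" (though it is bookkeeping rather than deep), is being rigorous about the reordering isomorphism and making sure that permutations acting on the single-copy "double" subsystems ${\cal H}\otimes{\cal H}$ indeed correspond to the diagonal action $\pi\otimes\pi$ on the regrouped space ${\cal H}^{\otimes n}\otimes{\cal H}^{\otimes n}$; once that identification is fixed, the transpose trick for $\ket{\Omega}$ together with the basis-independence of the purification formula does all the work, and no further structural input beyond permutation invariance of $\rho_n$ is needed.
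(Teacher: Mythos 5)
Your argument is correct, and it is essentially the standard one: the canonical purification $\ket{\Psi}=(\rho_n^{1/2}\otimes\rom{id})\ket{\Omega}$ inherits the symmetry of $\rho_n$ because (i) a fixed product basis makes every $\pi\in{\cal S}_n$ a real permutation matrix, so the transpose trick gives $(\pi\otimes\pi)\ket{\Omega}=(\pi\pi^T\otimes\rom{id})\ket{\Omega}=\ket{\Omega}$, (ii) $\pi\rho_n\pi^{-1}=\rho_n$ implies $\pi\rho_n^{1/2}=\rho_n^{1/2}\pi$, and (iii) the regrouping isomorphism ${\cal H}^{\otimes n}\otimes{\cal H}^{\otimes n}\cong({\cal H}\otimes{\cal H})^{\otimes n}$ intertwines the diagonal action $\pi\otimes\pi$ with permutation of the $n$ two-qubit blocks. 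The paper itself does not prove this lemma but simply cites Renner's thesis, whose proof runs along the same lines. One nit: the displayed chain $(\pi\otimes\pi)\ket{\Omega}=(\pi\pi^{-T}\otimes\rom{id})(\pi\otimes\pi^T)\ket{\Omega}$ is garbled as written; the clean version is $(\pi\otimes\pi)\ket{\Omega}=(\pi\otimes\rom{id})(\pi^T\otimes\rom{id})\ket{\Omega}=(\pi\pi^T\otimes\rom{id})\ket{\Omega}=\ket{\Omega}$. Also worth a one-line remark for completeness: the transpose in the transpose trick and the $\ket{\Omega}$ you build must be defined with respect to the \emph{same} fixed product basis, which simultaneously makes every permutation operator a $0/1$ matrix; you gesture at this but should pin it down, since otherwise the two ingredients are not guaranteed to be compatible.
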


The following lemma states that a pure state on a symmetric space can be
approximated by a convex combination of pure states that are
close to a product state.
%%%%%%%
\begin{lemma}
\cite[Theorem 4.3.2]{renner:05b}
	\label{de-finneti-theorem}
Let $\rho_{n+k}$ be a pure state density operator on 
$\mbox{Sym}({\cal H}^{\otimes n+k})$ and let $0 \le r \le n$.
Then, there exists  a measure $\nu$ on 
${\cal S}_1({\cal H}) := \{ \ket{\theta} \in {\cal H} \mid \| \ket{\theta} \| = 1 \}$
and a pure
density operator $\bar{\rho}_n^{\ket{\theta}}$ on
$\mbox{Sym}({\cal H}^{\otimes n}, \ket{\theta}^{\otimes n-r})$ 
for each $\ket{\theta} \in {\cal S}_1({\cal H})$
such that
\begin{eqnarray*}
\left\| \rom{Tr}_k( \rho_{n+k}) -
\int_{{\cal S}_1({\cal H})} \bar{\rho}_n^{\ket{\theta}} \nu(\ket{\theta}) \right\|
\le 2 e^{- \frac{k(r+1)}{2(n+k)} + \frac{1}{2} \dim({\cal H}) \ln k},
\end{eqnarray*}
where the base of $\ln$ is $e$.
\end{lemma}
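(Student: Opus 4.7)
The plan is to realize $\mathrm{Tr}_k(\rho_{n+k})$ as the expectation of an explicit product-state POVM performed on $k$ of the $n+k$ subsystems, and then to control, for each outcome $\ket{\theta}$, how far the resulting conditional state on the remaining $n$ subsystems lies from $\mathrm{Sym}({\cal H}^{\otimes n}, \ket{\theta}^{\otimes n-r})$. The starting point is the well-known resolution of the symmetric projector $\int_{{\cal S}_1({\cal H})}(\ket{\theta}\bra{\theta})^{\otimes k}\,d\theta = d_k^{-1} P_{\mathrm{Sym}({\cal H}^{\otimes k})}$ with $d_k := \dim \mathrm{Sym}({\cal H}^{\otimes k})$, which says that $\{d_k(\ket{\theta}\bra{\theta})^{\otimes k}\,d\theta\}$ is a POVM on $\mathrm{Sym}({\cal H}^{\otimes k})$. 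Since $\rho_{n+k}$ is supported inside ${\cal H}^{\otimes n}\otimes \mathrm{Sym}({\cal H}^{\otimes k})$, applying this POVM to the last $k$ subsystems produces sub-normalized conditional states $\rho_n^{\ket{\theta}}$ on the first $n$ subsystems with $\nu(\ket{\theta}) := \mathrm{Tr}\rho_n^{\ket{\theta}}$, and the POVM resolves the identity on the relevant subspace, giving the exact no-disturbance identity $\int \rho_n^{\ket{\theta}}\,d\theta = \mathrm{Tr}_k \rho_{n+k}$.

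For each fixed $\ket{\theta}$, let $P_r^{\ket{\theta}}$ project $\mathrm{Sym}({\cal H}^{\otimes n})$ onto $\mathrm{Sym}({\cal H}^{\otimes n}, \ket{\theta}^{\otimes n-r})$, i.e.\ onto symmetric vectors with at most $r$ tensor factors differing from $\ket{\theta}$. Decomposing $\mathrm{Sym}({\cal H}^{\otimes n+k})$ into orthogonal subspaces indexed by the number of defects from $\ket{\theta}$, and observing that obtaining $\ket{\theta}^{\otimes k}$ in the covariant measurement is the quantum analogue of drawing $k$ matching-type items in sampling-without-replacement from $n+k$ exchangeable items, I would prove a concentration estimate of the form
\[
\mathrm{Tr}\bigl[(1 - P_r^{\ket{\theta}})\,\rho_n^{\ket{\theta}}\bigr] \;\le\; \nu(\ket{\theta})\,(k+1)^{\dim{\cal H}-1}\,e^{-k(r+1)/(n+k)}.
\]
Setting $\bar{\rho}_n^{\ket{\theta}}$ to be a purification of $P_r^{\ket{\theta}}\rho_n^{\ket{\theta}} P_r^{\ket{\theta}}/\mathrm{Tr}[P_r^{\ket{\theta}}\rho_n^{\ket{\theta}}]$ inside $\mathrm{Sym}({\cal H}^{\otimes n}, \ket{\theta}^{\otimes n-r})$ (enlarging ${\cal H}^{\otimes n}$ by a reference system if necessary), a gentle-measurement-style pointwise estimate $\|\rho_n^{\ket{\theta}} - \nu(\ket{\theta})\bar{\rho}_n^{\ket{\theta}}\| \le 2\sqrt{\nu(\ket{\theta})\,\mathrm{Tr}[(1-P_r^{\ket{\theta}})\rho_n^{\ket{\theta}}]}$, combined with the triangle inequality and Jensen's inequality across the integral over $\theta$, then yields
\[
\left\|\mathrm{Tr}_k\rho_{n+k} - \int \bar{\rho}_n^{\ket{\theta}}\,d\nu(\ket{\theta})\right\| \;\le\; 2\,e^{-k(r+1)/(2(n+k)) + (\dim{\cal H}/2)\ln k},
\]
the polynomial prefactors $d_k$ and $(k+1)^{\dim{\cal H}-1}$ being absorbed into the $\frac{1}{2}\dim({\cal H})\ln k$ correction under Jensen's square root.

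The main obstacle is the concentration step: rigorously establishing that conditioning a symmetric pure state on $n+k$ subsystems on observing $\ket{\theta}^{\otimes k}$ in a covariant product-state measurement on $k$ of them forces the unmeasured $n$ subsystems to be nearly in $\ket{\theta}^{\otimes n}$. This reduces, via the explicit block-decomposition of $\mathrm{Sym}({\cal H}^{\otimes n+k})$ adapted to $\ket{\theta}$, to a quantum hypergeometric tail estimate; obtaining the sharp exponent $k(r+1)/(n+k)$ uniformly in $\ket{\theta}$ requires careful weighting of the defect-blocks against the dimensions of the nested symmetric subspaces and a Chu--Vandermonde-style contraction of the resulting combinatorial sums. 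Once that tail bound is in place, the POVM construction in Step 1 and the final assembly into a trace-distance bound via gentle measurement and Jensen are routine manipulations.
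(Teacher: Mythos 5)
The paper cites this lemma from Renner's thesis and gives no proof of its own, so the only comparison possible is against that reference. Your scaffolding --- resolving the symmetric projector into the covariant POVM $\{d_k(\ket{\theta}\bra{\theta})^{\otimes k}\,d\theta\}$ on the last $k$ systems, extracting sub-normalized conditional states $\rho_n^{\ket{\theta}}$, cutting down by $P_r^{\ket{\theta}}$, and assembling a trace-distance bound via gentle measurement and integration over $\theta$ --- does track Renner's construction. A minor remark first: you do not need a purification. Because $\rho_{n+k}$ is pure and each POVM element is rank one, $\rho_n^{\ket{\theta}}$ is already a sub-normalized pure vector in $\mathrm{Sym}({\cal H}^{\otimes n})$, so $P_r^{\ket{\theta}}\rho_n^{\ket{\theta}}P_r^{\ket{\theta}}$, once normalized, is already a pure state on $\mathrm{Sym}({\cal H}^{\otimes n},\ket{\theta}^{\otimes n-r})$; enlarging by a reference system would in fact break the lemma, which requires $\bar{\rho}_n^{\ket{\theta}}$ to live on that subspace of ${\cal H}^{\otimes n}$ itself.

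The substantive gap is that the pointwise concentration estimate you aim to prove,
\[
\mathrm{Tr}\bigl[(1 - P_r^{\ket{\theta}})\,\rho_n^{\ket{\theta}}\bigr] \;\le\; \nu(\ket{\theta})\,(k+1)^{\dim{\cal H}-1}\,e^{-k(r+1)/(n+k)},
\]
is false as a uniform-in-$\ket{\theta}$ statement. It says that conditional on observing $\ket{\theta}^{\otimes k}$, the unmeasured $n$ systems have exponentially small weight beyond $r$ defects from $\ket{\theta}^{\otimes n}$. Take $\rho_{n+k}=(\ket{1}\bra{1})^{\otimes(n+k)}$ and any $\ket{\theta}$ with $p := |\langle\theta|1\rangle|^2$ bounded away from $1$. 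Then $\rho_n^{\ket{\theta}} = d_k p^k\,(\ket{1}\bra{1})^{\otimes n}$, so the normalized conditional state is $\ket{1}^{\otimes n}$ regardless of $\ket{\theta}$, and its bad fraction is $\Pr[\mathrm{Bin}(n,1-p) > r]$ --- essentially $1$ whenever $r < n(1-p)$, and independent of $k$ --- while the bound you claim on that fraction, $(k+1)^{\dim{\cal H}-1}e^{-k(r+1)/(n+k)}$, becomes exponentially small in $k$ once $r$ is of order $n$. The theorem survives only because for such $\ket{\theta}$ the weight $\nu(\ket{\theta})=d_k p^k$ is itself exponentially small, and that cancellation is invisible pointwise: it appears only after integrating over the POVM outcome. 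What actually holds, and what the thesis proves, is the integrated bound $\int \mathrm{Tr}[(1-P_r^{\ket{\theta}})\rho_n^{\ket{\theta}}]\,d\theta \le \mathrm{poly}(k)\,e^{-k(r+1)/(n+k)}$; Cauchy--Schwarz (not a pointwise Jensen) then converts $\int 2\sqrt{\nu(\ket{\theta})\,\mathrm{Tr}[(1-P_r^{\ket{\theta}})\rho_n^{\ket{\theta}}]}\,d\theta$ into $2\sqrt{\int\mathrm{Tr}[(1-P_r^{\ket{\theta}})\rho_n^{\ket{\theta}}]\,d\theta}$. The ``careful weighting of defect-blocks'' you flag as the main obstacle is real, but it must go into this integrated inequality; as written, your Step 2 targets a statement that cannot be true.
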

%%%%%%%%%%

The following lemma states that the smooth min-entropy of a density operator
that is derived from a pure state on $\rom{Sym}({\cal H}^{\otimes n},\ket{\theta}^{\otimes n -m})$
can be approximated from below by the von Neumann entropy of a density operator that 
is derived from a product state $\ket{\theta}^{\otimes n}$.
\begin{lemma}
\cite[Theorem 4.4.1]{renner:05b}
	\label{lemma-min-entropy-of-symmetric-state}
Let $0 \le r \le \frac{1}{2} n$,  $\ket{\theta} \in {\cal H}$, and 
$\ket{\Psi} \in \mbox{Sym}({\cal H}^{\otimes n}, \ket{\theta}^{\otimes n -r})$
be normalized, and let ${\cal E}$ be a trace-preserving CP map  from
${\cal H}$ to ${\cal H}_X \otimes {\cal H}_B$ that is classical on
 ${\cal H}_X$, i.e., ${\cal E}(\rho)$ is a $\{cq\}$-state for any 
$\rho \in {\cal P}({\cal H})$.
Define $\rho_{X^n B^n} := {\cal E}^{\otimes n}(\ket{\Psi}\bra{\Psi})$
and $\sigma_{XB} := {\cal E}(\ket{\theta}\bra{\theta})$. Then, for any 
$\varepsilon > 0$,
\begin{eqnarray*}
\frac{1}{n} H_{\min}^{\varepsilon}(\rho_{X^n B^n} | B^n) 
\ge H(\sigma_{XB}) - H(\sigma_B) - \delta,
\end{eqnarray*}
where $\delta = ( \frac{5}{2} H_{\max}(\sigma_X) + 4) 
\sqrt{\frac{2 \log(4/\varepsilon)}{n} + h(r/n) }$.
\end{lemma}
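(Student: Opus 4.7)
The plan is to evaluate each of the four quantities on the left-hand side of Eq.~(\ref{eq-key-rate-vs-error-rate}) by direct computation, exploiting the simple structure that Bell diagonal states possess after the $z$-basis measurement. First I would fix the canonical purification $|\Psi\rangle_{ABE} = \sum_{\san{x},\san{z}} \sqrt{p_{\san{xz}}}\,|\psi(\san{x},\san{z})\rangle_{AB}\otimes|\san{x},\san{z}\rangle_E$, where $\{|\san{x},\san{z}\rangle\}$ is a fixed orthonormal basis of ${\cal H}_E$. A direct computation of $\langle x,y|\Psi\rangle$ shows that after the $z$-basis measurement the joint law is $P_{XY}(x,y)=\tfrac12 P_{\san{X}}(x\oplus y)$ with $P_{\san{X}}(\san{x})=p_{\san{x}0}+p_{\san{x}1}$, and that Eve's conditional state $\rho_E^{x,y}$ is pure, supported in the $\san{x}=x\oplus y$ block of $E$ and differing from $\rho_E^{x',y'}$ (with the same $\san{x}$) only through the relative sign $(-1)^{x\san{z}}$ on the $\san{z}=1$ component. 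From this, $X$ is uniform, $Y=X\oplus \san{X}$, $W_1=\san{X}_1\oplus\san{X}_2$ has the distribution $P_{\bar{\san{X}}}$ given in the statement, and conditioning on $W_1=0$ yields $P_{W_2|W_1=0}(\san{x})=P_{\san{X}}(\san{x})^2/P_{\bar{\san{X}}}(0)$, giving closed-form expressions for $H(P_{W_1})$ and $P_{W_1}(0)H(P_{W_2|W_1=0})$ in terms of the $p_{\san{xz}}$.

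For the quantum pieces I would split each conditional entropy along $W_1$ and $\san{X}$. In the $W_1 = 0$ branch, Eve's two pure conditional states live in the same $\san{x}$-block, and once $(\san{X},U_1)$ is fixed only $X_1\in\{0,1\}$ is random, so $H_\sigma(U_2|W_1U_1E_1E_2,W_1=0,\san{X}=\san{x},U_1=u)$ reduces to the von Neumann entropy of an equal mixture of two pure product states whose overlap is $((p_{\san{x}0}-p_{\san{x}1})/P_{\san{X}}(\san{x}))^2$; averaging over $(\san{X},U_1)$ and combining with the Bell-index entropies yields $P_{\bar{\san{X}}}(0)(1-H(P_{\san{XZ}}'))+P_{W_1}(0)H(P_{W_2|W_1=0})$ for $H_\sigma(U_2|W_1U_1E_1E_2)$. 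An analogous expansion of $H_\sigma(U_1U_2|W_1E_1E_2)$, now letting both $X_1$ and $X_2$ be random, reproduces the same $1-H(P_{\san{XZ}}')$ factor in the $W_1=0$ branch and, in the $W_1=1$ branch, the cross-term
\begin{eqnarray*}
h\!\left(\frac{p_{00}p_{10}+p_{01}p_{11}}{(p_{00}+p_{01})(p_{10}+p_{11})}\right),
\end{eqnarray*}
i.e.\ the binary entropy of the mixture of phase-index distributions across the two orderings $(\san{X}_1,\san{X}_2)\in\{(0,1),(1,0)\}$.

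Substituting these four expressions into the two arguments of the max in Eq.~(\ref{eq-key-rate-formula}) makes the terms $P_{W_1}(0)H(P_{W_2|W_1=0})$ cancel in the $U_2$-only argument and the terms $H(P_{W_1})+P_{W_1}(0)H(P_{W_2|W_1=0})$ cancel in the $U_1U_2$ argument, leaving exactly the two expressions on the right-hand side of Eq.~(\ref{eq-key-rate-vs-error-rate}) after pulling out the overall factor $\tfrac12$. The main obstacle is the bookkeeping in the previous paragraph: correctly decomposing Eve's two-block pure-state ensemble into the per-$(\san{X}_1,\san{X}_2)$ branches, and recognising that in the $W_1=1$ branch the surviving non-trivial contribution is the single binary entropy above rather than a sum of per-block phase entropies. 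Once that identification is made, concavity of $h$ immediately gives the inequality used in Remark~\ref{remark-relation-to-vollbrecht} to compare the proposed rate with the rate (\ref{eq-rate-of-volbrecht}).
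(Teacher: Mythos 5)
Your proposal does not address the statement you were asked to prove. The statement is Lemma~\ref{lemma-min-entropy-of-symmetric-state}, a general lower bound on the smooth min-entropy $H_{\min}^{\varepsilon}(\rho_{X^n B^n}|B^n)$ of a $\{cq\}$-state obtained by applying ${\cal E}^{\otimes n}$ to a normalized pure state lying in the symmetric subspace $\mbox{Sym}({\cal H}^{\otimes n},\ket{\theta}^{\otimes n-r})$, in terms of the conditional von Neumann entropy $H(\sigma_{XB})-H(\sigma_B)$ of the ``per-copy'' state $\sigma_{XB}={\cal E}(\ket{\theta}\bra{\theta})$ plus a correction $\delta$ that vanishes as $n\to\infty$ and $r/n\to 0$. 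This is a quantum asymptotic-equipartition / concentration result, and the paper does not prove it; it is quoted verbatim from Renner's thesis (Theorem~4.4.1 of \cite{renner:05b}), where the proof goes via a Chernoff-style large-deviation bound on the eigenvalue distribution of states near $\ket{\theta}^{\otimes n}$, a continuity estimate linking smooth min-entropy to R\'enyi-type entropies, and the chain rule for smooth min-entropy.

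What you have written instead is a sketch of the proof of Theorem~\ref{theorem-key-rate-vs-error-rate}: you fix a purification of a two-qubit Bell-diagonal state, push the $z$-basis measurement through the tensor square, identify the induced classical distributions $P_{W_1}$, $P_{W_2|W_1=0}$ and the conditional Eve states block by block, and then combine the resulting Shannon and von Neumann entropies to reproduce both arguments of the maximum in Eq.~(\ref{eq-key-rate-vs-error-rate}). That computation belongs to Appendix~E (Theorem~3), not to the cited min-entropy lemma. The two statements are logically independent: Lemma~\ref{lemma-min-entropy-of-symmetric-state} holds for arbitrary ${\cal H}$, arbitrary trace-preserving $\{cq\}$-maps ${\cal E}$, and arbitrary symmetric pure states, with no Bell-diagonal structure and no reference to $U_1,U_2,W_1,W_2$, and its content is the non-trivial asymptotic statement $\frac{1}{n}H_{\min}^\varepsilon\gtrsim H(X|B)_\sigma$; nothing in your Bell-diagonal entropy bookkeeping gives any control over a smooth min-entropy of an $n$-fold tensor-like state. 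You would need to start over on this one: either reproduce Renner's typical-subspace / operator-smoothing argument, or cite and carefully adapt the quantum AEP for states supported on $\mbox{Sym}({\cal H}^{\otimes n},\ket{\theta}^{\otimes n-r})$.
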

%%%%%%%%%%
\begin{lemma}
 \cite[Theorem 4.5.2]{renner:05b}
	\label{lemma-statistic-of-symmetric-state}
Let $0 \le r \le \frac{1}{2} n$, $\ket{\theta} \in {\cal H}$, and
$\ket{\Psi} \in \rom{Sym}({\cal H}^{\otimes n}, \ket{\theta}^{\otimes n - r})$
be normalized. Let ${\cal M} = \{ M_z \}_{z \in {\cal Z}}$ be a POVM
on ${\cal H}$, and let $P_Z$ be a probability distribution of the outcomes
of the measurement ${\cal M}$ applied to $\ket{\theta}\bra{\theta}$.
Then we have
\begin{eqnarray*}
\Pr_{\mbf{z}} \left[
\| P_{\mbf{z}} - P_Z \| > \alpha \right] \le \varepsilon
\end{eqnarray*}
for
\begin{eqnarray*}
\alpha := 2 \sqrt{\frac{\log (1/\varepsilon)}{n} + h(r/n) + \frac{|{\cal Z}|}{n}
\log ( \frac{n}{2} + 1)}
\end{eqnarray*}
where the probability is taken over the outcomes
$\mbf{z} = (z_1,\ldots,z_n)$ of the product measurement ${\cal M}^{\otimes n}$
applied to $\ket{\Psi}\bra{\Psi}$.
\end{lemma}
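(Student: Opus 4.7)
The plan is to reduce the symmetric-subspace concentration statement to the classical type concentration of Corollary \ref{corollary-type} applied to the baseline i.i.d.\ state $\ket{\theta}^{\otimes n}$. The structural hypothesis $\ket{\Psi} \in \rom{Sym}({\cal H}^{\otimes n}, \ket{\theta}^{\otimes n-r})$ places $\ket{\Psi}$ in the span of symmetrizations of vectors $\ket{\theta}^{\otimes n-r} \otimes \ket{\tilde\Psi}$, so heuristically at most $r$ of the tensor factors deviate from $\ket{\theta}$; the outcomes on the $n-r$ "good" factors are i.i.d.\ from $P_Z$, while the remaining $r$ factors are arbitrary and can, in the worst case, shift the empirical type of $\mbf{z}$ by at most $2r/n$ in variational distance.

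First I would dispose of the i.i.d.\ baseline $\ket{\theta}^{\otimes n}$: measuring with ${\cal M}^{\otimes n}$ produces outcomes drawn i.i.d.\ from $P_Z$, and Corollary \ref{corollary-type} directly gives
\begin{eqnarray*}
\Pr\bigl\{\|P_{\mbf{z}} - P_Z\| > \alpha_0\bigr\}
\le (n+1)^{|{\cal Z}|-1} \exp\!\left\{-\tfrac{\alpha_0^2 n}{2 \ln 2}\right\}.
\end{eqnarray*}
Equating the right-hand side to $\varepsilon$ and solving for $\alpha_0$ already reproduces the stated $\alpha$ in the case $r = 0$ (up to the constant factor of $2$ that the next step introduces).

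Next I would transfer the bound from $\ket{\theta}^{\otimes n}$ to $\ket{\Psi}$ via an operator inequality on POVM elements. Let $\Pi_{\alpha}^{(n)} := \sum_{\mbf{z} : \|P_{\mbf{z}} - P_Z\| > \alpha} M_{z_1} \otimes \cdots \otimes M_{z_n}$ be the "bad type" effect. The two ingredients I need are: (i) a combinatorial/classical fact that if $\|P_{\mbf{z}} - P_Z\| > \alpha$, then for every subset $S \subset \{1,\dots,n\}$ with $|S| = n - r$ one has $\|P_{\mbf{z}|_S} - P_Z\| > \alpha - 2r/n$, because deleting $r$ of $n$ samples perturbs the empirical distribution by at most $2r/n$; and (ii) an operator inequality, obtained by expanding $\ket{\Psi}$ in the occupation-number basis generated by any orthonormal basis of ${\cal H}$ containing $\ket{\theta}$, that dominates $\langle \Psi | \Pi_\alpha^{(n)} | \Psi\rangle$ by $\binom{n}{r}$ times the analogous "bad type" probability on only $n-r$ coordinates at parameter $\alpha - 2r/n$. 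Absorbing $\binom{n}{r} \le 2^{n h(r/n)}$ into the Corollary \ref{corollary-type} exponent introduces precisely the $h(r/n)$ term inside the square root in the definition of $\alpha$, while the shift $\alpha \mapsto \alpha - 2r/n$ and the square-root inversion account for the outer factor of $2$.

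The main obstacle is step (ii): making rigorous the passage from the superposition $\ket{\Psi}$ to "only $r$ tensor factors are allowed to be bad." A naive expansion over permutations of product vectors produces $\binom{n}{r}^2$ cross terms, which must be collapsed either by Cauchy--Schwarz on the symmetric projector or, more cleanly, by exploiting permutation covariance of both $\ket{\Psi}\bra{\Psi}$ and ${\cal M}^{\otimes n}$ so that averaging over positions of the "$r$ bad" block leaves only the diagonal combinatorial factor $\binom{n}{r}$. Once this operator bound is in place, assembling the three contributions---the target exponent $\log(1/\varepsilon)/n$, the type-polynomial $(|{\cal Z}|/n)\log(n/2+1)$, and the binomial penalty $h(r/n)$---inside the square root and multiplying by $2$ reproduces the stated $\alpha$ verbatim.
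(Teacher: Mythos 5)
The paper does not prove this lemma; it is quoted from Renner's thesis \cite[Theorem 4.5.2]{renner:05b}, so there is no in-paper proof to compare against, and I can only assess your sketch on its own terms. Your high-level reduction---baseline type concentration for $\ket{\theta}^{\otimes n}$ via Corollary~\ref{corollary-type}, a binomial penalty $\binom{n}{r}\le 2^{n h(r/n)}$ for the ``at most $r$ bad factors,'' and a threshold shift absorbing those factors---is the right shape, and you correctly identify where the difficulty sits. But that is also exactly where the argument is left open: your step (ii), dominating $\langle\Psi|\Pi_\alpha^{(n)}|\Psi\rangle$ by $\binom{n}{r}$ copies of an $(n-r)$-coordinate bad-type probability, is the entire content of the lemma and is not established.

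Neither device you float obviously closes it. Writing $\ket{\Psi}$ as a linear combination of permuted products $\pi(\ket{\theta}^{\otimes n-r}\otimes\ket{\tilde\Psi})$ and squaring produces cross terms between distinct placements of the bad block, with coefficients that are complex-valued and not sign-definite (the $\ket{\tilde\Psi}$ components vary across terms); permutation covariance of $\ket{\Psi}\bra{\Psi}$ and of ${\cal M}^{\otimes n}$ lets you symmetrize the measurement but does not diagonalize these cross terms, and Cauchy--Schwarz against the symmetric projector only controls the overlap with the full symmetric subspace, whose dimension is polynomial in $n$ but is not $\binom{n}{r}$. Moreover, if one tries to make the reduction concrete via an operator inequality of the form $\ket{\Psi}\bra{\Psi}\le\sum_{|S|=r}\rom{id}_S\otimes(\ket{\theta}\bra{\theta})^{\otimes n-r}_{S^{\rom{c}}}$, tracing $\Pi_\alpha^{(n)}$ against each summand costs a factor $(\dim{\cal H})^r$ from $\sum_z\rom{Tr}\,M_z=\dim{\cal H}$ on the unconstrained coordinates, which would force an additional $\frac{r}{n}\log\dim{\cal H}$ term inside the square root that is not present in the stated $\alpha$; so the operator inequality you need is strictly sharper than the naive one, and you have not supplied it. Separately, a small quantitative slip in step (i): removing $r$ of $n$ samples perturbs the empirical distribution by at most $2r/(n-r)$ in variational distance, not $2r/n$; since $r\le n/2$ this is $\le 4r/n$ and still fits inside the outer factor of $2$, but the accounting must be adjusted.
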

Lemma \ref{lemma-statistic-of-symmetric-state} states that
if the product measurement ${\cal M}^{\otimes n}$ is applied to $\ket{\Psi}\bra{\Psi}$,
then the probability such that  type 
$P_{\mbf{z}}$ of the outcomes deviates from 
the distribution $P_Z$ is small.

\section{Proof of Theorem 2}
\label{sec-proof-of-theorem}

In this section, we prove 
Theorem 2.
In Section \ref{subsec-security-against-known-adversary},
we first prove the security  agaist known adversary.
In Section \ref{subsec-fluctuation-of-error},
we analyze the parameter estimation protocol.
Then, using results in Sections 
\ref{subsec-security-against-known-adversary} and
\ref{subsec-fluctuation-of-error}, we prove 
Theorem 2.

\subsection{Security against known adversary}
\label{subsec-security-against-known-adversary}

In this section, we analyze a situation after the 
parameter estimation of the QKD protocols, i.e.,
we assume the following situation.
Alice and Bob have $2n$-bit binary sequences 
$(\mbf{x}, \mbf{y}) \in \mathbb{F}_2^{2n} \times \mathbb{F}_2^{2n}$
that is distributed according to a probability distribution
$P_{\mbf{X}\mbf{Y}}$, and Eve can access the quantum system
${\cal H}_E$ whose state $\rho_E^{\mbf{x},\mbf{y}}$ is correlated
to $(\mbf{x}, \mbf{y})$.
This situation can be described by a $\{ccq\}$-state
\begin{eqnarray*}
\rho_{\mbf{X}\mbf{Y}E} := \sum_{(\mbf{x}, \mbf{y})}
P_{\mbf{X}\mbf{Y}}(\mbf{x},\mbf{y}) 
\ket{\mbf{x},\mbf{y}}\bra{\mbf{x},\mbf{y}} \otimes 
\rho_E^{\mbf{x},\mbf{y}}.
\end{eqnarray*}
In the following, we follow the notations of 
Section 2 even though
the distribution $P_{\mbf{X}\mbf{Y}}$ is not necessarily
the product distribution $P_{XY}^{2n}$. 

In order to agree on a secure key pair 
$(S_A, S_B)$, Alice and Bob
perform the procedure as in 
Section 3.
Then, the situation after the IR protocol
and the privacy amplification can be described by a
$\{ccq\}$-state
\begin{eqnarray*}
\lefteqn{ \rho_{S_A S_B C E} 
:=  } \nonumber \\
&&  \sum_{(s_A, s_B)} P_{S_A S_B}(s_A, s_B)
\ket{s_A, s_B}\bra{s_A, s_B} \otimes \rho_{C E}^{s_A, s_B},
      \label{eq-distilled-key}
\end{eqnarray*}
where the classical system $C$ describes the exchanged
messages $(\mbf{T}_1, \hat{\mbf{T}}_2, \hat{\mbf{W}}_1)$
in the IR protocol and the choice $F$ of the hash function
in the PA protocol.
As in Section 3, 
the distilled key pair $(S_A, S_B)$ is said to be
$\varepsilon$-secure  if
\begin{eqnarray}
	\label{definition-of-e-secure}
\frac{1}{2} \| \rho_{S_A S_B E^\prime} - 
\rho_{S_A S_B}^{\rom{mix}} \otimes \rho_{E^\prime} \| \le \varepsilon,
\end{eqnarray}
where $\rho_{S_A S_B}^{\rom{mix}} := \sum_{s \in {\cal S}}
 \frac{1}{|{\cal S}|} \ket{s,s}\bra{s,s}$ is the uniformly distributed
key on ${\cal S}$.
The above security definition for the key distillation protocol can
be subdivided into two parts (see also \cite[Remark 6.1.3]{renner:05b}):
\begin{itemize}
\item The distilled key pair $(S_A, S_B)$ is $\varepsilon_c$-correct
if
\begin{eqnarray*} 
\sum_{s_A \neq s_B} P_{S_A S_B}(s_A, s_B) \le \varepsilon_c.
\end{eqnarray*}

\item The distilled key $S_A$ is $\varepsilon_s$-secret if
$\frac{1}{2} d(\rho_{S_A E^\prime} | E^\prime) \le \varepsilon_s$. 
\end{itemize}
In particular, if the distilled key $(s_A, s_B)$ is 
$\varepsilon_c$-correct and $\varepsilon_s$-secret,
then it is $(\varepsilon_c + \varepsilon_s)$-secure.

%%%%%%%%%%%
The following theorem gives the relation between
the security and the length of distilled key.
%%%%%%%% Theorem %%%%%%%%%%%%%%%%%%%
\begin{theorem}
     \label{theorem-key-distillation-from-raw-key}

Assume that  
Alice and Bob's bit sequence after
the IR protocol are identical to $\mbf{u}$ with
probability at least $1- \varepsilon_1$, i.e.,
\begin{eqnarray}
       \label{eq-probability-of-reconciled-key-agrees}
P_{\mbf{X}\mbf{Y}}(\{ (\mbf{x},\mbf{y}) : \hat{\mbf{u}} = \tilde{\mbf{u}} = \mbf{u} \})
\ge 1 - \varepsilon_1.
\end{eqnarray}
For a given number $R, R_0 > 0$, assume that
the rate of linear codes that are used in the IR protocol satisfy
$\frac{m}{n} \le R$ and $\frac{m_0}{n_0} \le R_0$ for all
$\underline{n}_0 \le n_0 \le \overline{n}_0$.
Furthermore assume that the length $\ell$ of the distilled key by the 
privacy amplification satisfies 
\begin{eqnarray}
\lefteqn{ \ell \le  \max[
H_{\min}^{\varepsilon} (\rho_{\mbf{U} \mbf{W}_1 E} | \mbf{W}_1 E)
- nR - \overline{n}_0 R_0, } \nonumber \\
&&  H_{\min}^{\varepsilon}( \rho_{\mbf{U} 
 \mbf{W}_1 \mbf{U}_1 E} | \mbf{W}_1 \mbf{U}_1 E) - \overline{n}_0 R_0
] - \log(1/8\varepsilon),
\label{eq-the-secure-key-rate}
\end{eqnarray}
where $\rho_{\mbf{U} \mbf{W}_1 E}$ and
$ \rho_{\mbf{U} \mbf{W}_1 \mbf{U}_1 E}$ are derived from
$\rho_{\mbf{X}\mbf{Y}E}$ by using the functions $\xi_1$
and $\xi_2$ in the same way as in 
Section 2.
Then the distilled key pair $(S_A, S_B)$ is 
$(\bar{\varepsilon} + 3 \varepsilon_1)$-secure, where
$\bar{\varepsilon} := \frac{3}{2} \sqrt{8 \varepsilon}$.
\end{theorem}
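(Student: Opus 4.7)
The plan is to verify the two standard sub-properties -- correctness and secrecy -- separately and combine them. Correctness is immediate from hypothesis \eqref{eq-probability-of-reconciled-key-agrees}: since the distilled keys are deterministic functions $s_A=f(\hat{\mbf{u}})$ and $s_B=f(\tilde{\mbf{u}})$, the event $\hat{\mbf{u}}=\tilde{\mbf{u}}=\mbf{u}$ already forces $s_A=s_B$, so $(S_A,S_B)$ is $\varepsilon_1$-correct. It remains to control the secrecy quantity $\frac{1}{2}d(\rho_{S_A CE F}\mid CEF)$ (treating the hash-function register $F$ as part of the adversary's side information).

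For secrecy I first reduce to an idealised state $\hat{\rho}$ in which Bob's guess is replaced by its ideal value $\hat{\mbf{W}}_1=\mbf{W}_1$, so that both $\hat{\mbf{u}}$ and $\tilde{\mbf{u}}$ are realised as $\mbf{u}$. Since the event that the reconciliation differs from $\mbf{u}$ has probability at most $\varepsilon_1$, Lemma \ref{lemma-error-prob-continuity} gives $\|\rho_{\cdot}-\hat{\rho}_{\cdot}\|\le 2\varepsilon_1$, and by CP-monotonicity of the trace distance the secrecy bounds for $\hat{\rho}$ transfer to $\rho$ at the cost of an additive $2\varepsilon_1$. Applying the privacy-amplification bound (Lemma \ref{lemma-privacy-amplification}) to $\hat{\rho}$ then yields
\[
d(\hat{\rho}_{F(\mbf{U}) CE F}\mid CEF)\le 2\varepsilon+2^{-\frac{1}{2}\bigl(H_{\min}^{\varepsilon}(\hat{\rho}_{\mbf{U}CE}\mid CE)-\ell\bigr)},
\]
so a lower bound on $H_{\min}^{\varepsilon}(\hat{\rho}_{\mbf{U}CE}\mid CE)$ matching the right-hand side of \eqref{eq-the-secure-key-rate} is exactly what is needed; the choice $\ell\le H_{\min}^{\varepsilon}-\log(1/8\varepsilon)$ in the hypothesis makes the exponential term at most $\sqrt{8\varepsilon}$.

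The core step is therefore to bound $H_{\min}^{\varepsilon}(\hat{\rho}_{\mbf{U}CE}\mid CE)$ from below by the maximum in \eqref{eq-the-secure-key-rate}. Recalling that $C$ carries $(\mbf{T}_1,\hat{\mbf{T}}_2,\mbf{W}_1)$, for the first argument of the max I apply Corollary \ref{lemma-weak-chain-rule} twice to peel $\mbf{T}_1$ and $\hat{\mbf{T}}_2$ off the conditioning register; each peel costs at most the max-entropy of the syndrome, so the penalty is $H_{\max}(\mbf{T}_1)+H_{\max}(\hat{\mbf{T}}_2)\le nR+\overline{n}_0 R_0$, leaving $H_{\min}^{\varepsilon}(\hat{\rho}_{\mbf{U}\mbf{W}_1 E}\mid \mbf{W}_1 E)$. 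For the second argument I exploit the fact that $\mbf{T}_1=\mbf{U}_1 M_{{\cal C}_{n,m}}^T$ is a deterministic function of $\mbf{U}_1$: conditioning on $(\mbf{U}_1,\hat{\mbf{T}}_2,\mbf{W}_1)$ is at least as strong as conditioning on $(\mbf{T}_1,\hat{\mbf{T}}_2,\mbf{W}_1)$, and by strong sub-additivity (Lemma \ref{lemma-properties-of-min-max-entropy}) the former yields a no-larger smooth min-entropy; a single application of Corollary \ref{lemma-weak-chain-rule} then removes $\hat{\mbf{T}}_2$ at the cost of $\overline{n}_0 R_0$, producing the second term. Taking the better of the two chains, substituting into the PA inequality, and combining with the $2\varepsilon_1$ from the idealisation and the $\varepsilon_1$ from correctness gives the claimed $(\bar{\varepsilon}+3\varepsilon_1)$-security.

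The principal obstacle is bookkeeping: every invocation of Corollary \ref{lemma-weak-chain-rule} inflates the smoothness from $\varepsilon$ to $\sqrt{8\varepsilon}$, so the initial smoothness fed into the PA inequality must be chosen a priori so that the final quantity displayed in \eqref{eq-the-secure-key-rate} is exactly $H_{\min}^{\varepsilon}$; this is what forces the constant $\bar{\varepsilon}=\frac{3}{2}\sqrt{8\varepsilon}$. A secondary technical point is checking that the intermediate states retain the $\{cq\}$-structure required by Corollary \ref{lemma-weak-chain-rule}, which is automatic here because $\mbf{T}_1$, $\hat{\mbf{T}}_2$, $\mbf{W}_1$, and $\mbf{U}_1$ are all classical bit strings.
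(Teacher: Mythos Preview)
Your approach is the same as the paper's: reduce to the ideal variables $(\mbf{U},\mbf{W}_1,\mbf{T}_2)$ via Lemma~\ref{lemma-error-prob-continuity} at cost $2\varepsilon_1$, invoke the privacy-amplification lemma, peel the syndromes from the conditioning with Corollary~\ref{lemma-weak-chain-rule}, and for the second branch replace $\mbf{T}_1$ by the stronger side-information $\mbf{U}_1$. Correctness via $\varepsilon_1$ is also handled identically.

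There is one bookkeeping error that breaks the stated constant. For the first branch you say you apply Corollary~\ref{lemma-weak-chain-rule} \emph{twice}, once for $\mbf{T}_1$ and once for $\hat{\mbf{T}}_2$. Each application inflates the smoothing parameter $\varepsilon\mapsto\sqrt{8\varepsilon}$, so two applications would force the PA smoothness to be $\sqrt{8\sqrt{8\varepsilon}}$, and the resulting secrecy bound would be $\frac{1}{2}d\le \sqrt{8\sqrt{8\varepsilon}}+\tfrac{1}{2}\sqrt{8\varepsilon}$, not $\tfrac{3}{2}\sqrt{8\varepsilon}$. The paper instead treats $(\mbf{T}_1,\mbf{T}_2)$ as a \emph{single} classical register of max-entropy at most $nR+\overline{n}_0R_0$ and applies the corollary once; with PA smoothness $\sqrt{8\varepsilon}$ this gives $d\le 2\sqrt{8\varepsilon}+\sqrt{8\varepsilon}$ and hence exactly $\bar{\varepsilon}=\tfrac{3}{2}\sqrt{8\varepsilon}$. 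Your displayed PA inequality with smoothness $\varepsilon$ (rather than $\sqrt{8\varepsilon}$) is a symptom of the same slip. Once you merge the two syndromes into one peel, both branches use a single application of the corollary, the smoothing lines up, and the constant comes out as claimed.
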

%%%%%%%%%%%%%% Proof %%%%%%%%%%%%%
\begin{proof}
First, we will prove that the dummy key $S := f(\mbf{U})$ is 
$\bar{\varepsilon}$-secret under the condition that
Eve can access $(\mbf{W}_1, \mbf{T}_1, \mbf{T}_2, F, E)$,
i.e.,
\begin{eqnarray}
    \label{eq-security-of-dummy-key}
\frac{1}{2}\|
\rho_{S \mbf{W}_1 \mbf{T}_1 \mbf{T}_2 F E} -
\rho_{S}^{\rom{mix}} \otimes 
\rho_{\mbf{W}_1 \mbf{T}_1 \mbf{T}_2 F E}
\| \le \bar{\varepsilon}.
\end{eqnarray}
The assumption that Alice and Bob's bit sequence
are identical to $\mbf{u}$ with probability
$1 - \varepsilon_1$ implies that
$\hat{\mbf{w}}_1 = \mbf{w}_1$ and $\hat{\mbf{t}}_2 = \mbf{t}_2$
with probability $1- \varepsilon_1$. 
Since $(\mbf{u}, \hat{\mbf{u}})$, $(\mbf{w}_1, \hat{\mbf{w}}_1)$,
and $(\mbf{t}_2, \hat{\mbf{t}}_2)$ can be computed from 
$(\mbf{x}, \mbf{y})$, by using 
Lemma \ref{lemma-error-prob-continuity}, we have
\begin{eqnarray*}
\| 
\rho_{\mbf{X} \mbf{Y} \hat{\mbf{U}} \hat{\mbf{W}}_1 \mbf{T}_1
\hat{\mbf{T}}_2 F E} -
\rho_{\mbf{X} \mbf{Y} \mbf{U} \mbf{W}_1 \mbf{T}_1 \mbf{T}_2 F E} 
\|
\le 2 \varepsilon_1.
\end{eqnarray*}
Since the trace distance does not increase by CP maps, we have
\begin{eqnarray*}
\| 
\rho_{S_A \hat{\mbf{W}}_1 \mbf{T}_1 \hat{\mbf{T}}_2 F E} -
\rho_{S \mbf{W}_1 \mbf{T}_1 \mbf{T}_2 F E} 
\| \le 2 \varepsilon_1.
\end{eqnarray*}
Thus the statement that the dummy key $S$ is $\bar{\varepsilon}$-secret
 implies that the actual key $S_A$ is 
$(\bar{\varepsilon} + 2 \varepsilon_1)$-secret as follows:
\begin{eqnarray*}
\lefteqn{
\| 
\rho_{S_A \hat{\mbf{W}}_1 \mbf{T}_1 \hat{\mbf{T}}_2 F E} -
\rho_{S_A}^{\rom{mix}} \otimes 
\rho_{ \hat{\mbf{W}}_1 \mbf{T}_1 \hat{\mbf{T}}_2 F E}
\|
} \\
&\le& \|
\rho_{S_A \hat{\mbf{W}}_1 \mbf{T}_1 \hat{\mbf{T}}_2 F E} -
\rho_{S \mbf{W}_1 \mbf{T}_1 \mbf{T}_2 F E} 
\| \\
&& +  \|
\rho_{S \mbf{W}_1 \mbf{T}_1 \mbf{T}_2 F E} -
\rho_S^{\rom{mix}} \otimes 
\rho_{\mbf{W}_1 \mbf{T}_1 \mbf{T}_2 F E}
\| \\
&& + \|
\rho_S^{\rom{mix}} \otimes \rho_{\mbf{W}_1 \mbf{T}_1 \mbf{T}_2 F E} -
\rho_{S_A}^{\rom{mix}} \otimes 
\rho_{\hat{\mbf{W}}_1 \mbf{T}_1 \hat{\mbf{T}}_2 F E}
\|,
\end{eqnarray*}
where the first term is upper bounded by $2 \varepsilon_1$,
the second term is upper bounded by $\bar{\varepsilon}$,
and the third term is also upper bounded by
$2 \varepsilon_1$ because $\rho_S^{\rom{mix}} = \rho_{S_A}^{\rom{mix}}$.
The assumption of Eq.~(\ref{eq-probability-of-reconciled-key-agrees})
also implies that the distilled key is $\varepsilon_1$-correct.
Thus the distilled key pair $(S_A, S_B)$ is 
$(\bar{\varepsilon} + 3 \varepsilon_1)$-secure.

In order to prove Eq.~(\ref{eq-security-of-dummy-key}),
we use Lemma \ref{lemma-privacy-amplification},
which gives the relation between the security and
the length of the distilled key.
If the length $\ell$ of the distilled key by 
the privacy amplification satisfies
\begin{eqnarray}
    \label{eq-proof-of-first-argument}
\log(1/8 \varepsilon) + \ell
\le H_{\min}^{\sqrt{8 \varepsilon}}(
\rho_{\mbf{U} \mbf{W}_1 \mbf{T}_1 \mbf{T}_2 E} |
\mbf{W}_1 \mbf{T}_1 \mbf{T}_2 E
),
\end{eqnarray}
then the distilled key $S$ is $\bar{\varepsilon}$-secret.
By using Corollary \ref{lemma-weak-chain-rule}, we can lower
bound the r.h.s. of Eq.~(\ref{eq-proof-of-first-argument}) by
\begin{eqnarray*}
H_{\min}^{\varepsilon}(\rho_{\mbf{U} \mbf{W}_1 E} | \mbf{W}_1 E)
- nR - \overline{n}_0 R_0,
\end{eqnarray*}
because the size of messages $\mbf{T}_1$ and $\mbf{T}_2$ 
are upper bounded by $nR$ and $\overline{n}_0 R_0$ respectively.
Thus we have shown the statement of the theorem
for the first argument of the maximum
in Eq.~(\ref{eq-the-secure-key-rate}).

Since the syndrome $\mbf{T}_1$ is computed from
the sequence $\mbf{U}_1$, if the distilled key
$S$ is $\bar{\varepsilon}$-secret in the case that
Eve can access the sequence $\mbf{U}_1$, then the distilled key
$S$ is $\bar{\varepsilon}$-secret in the case
that Eve can only access the syndrome $\mbf{T}_1$
instead of the sequence $\mbf{U}_1$.
Again using Lemma \ref{lemma-privacy-amplification},
if the length of the distilled key satisfies
\begin{eqnarray}
      \label{eq-proof-of-second-argument}
\log(1/ 8 \varepsilon) + \ell \le
H_{\min}^{\sqrt{8 \varepsilon}}( \rho_{\mbf{U} \mbf{W}_1 \mbf{U}_1
\mbf{T}_2 E} |
\mbf{W}_1 \mbf{U}_1 \mbf{T}_2 E),
\end{eqnarray}
then the distilled key $S$ is $\bar{\varepsilon}$-secret.
Again using Corollary \ref{lemma-weak-chain-rule},
we can lower bound the r.h.s. of Eq.~(\ref{eq-proof-of-second-argument})
by
\begin{eqnarray*}
H_{\min}^{\varepsilon}(\rho_{\mbf{U} \mbf{W}_1 \mbf{U}_1 E} |
\mbf{W}_1 \mbf{U}_1 E) - \overline{n}_0 R_0.
\end{eqnarray*}
Thus we have shown the statement of the theorem for
the second argument of the maximum in
Eq.~(\ref{eq-the-secure-key-rate}).
\end{proof}

\subsection{Fluctuation of the actual error rate}
\label{subsec-fluctuation-of-error}

In this section, we show that the parameter estimation works
with high probability (Lemma \ref{lemma-parameter-estimation}).
Then, we show that the information reconciliation
protocol works for symmetric errors if the protocol universally works
for the i.i.d. errors that are close to the estimated error 
distributions in the parameter estimation protocol
(Lemma \ref{lemma-condition-on-ir}).
 
For the output $Q \in {\cal Q}$ of the parameter estimation
protocol, let 
\begin{eqnarray*}
\Gamma_\mu(Q) := \{
\sigma_{AB} \in {\cal P}({\cal H}_A \otimes {\cal H}_B) \mid
\| P_A^{\sigma_{AB}} - Q \| \le \mu
\}
\end{eqnarray*}
be a set of two-qubit density operators
that are compatible with the output $Q$ with a fluctuation $\mu$,
where $P_A^{\sigma_{AB}}$ denotes the probability distribution
of the outcomes when measuring $\sigma_{AB}$ by the POVM ${\cal M}$,
i.e., $P_A^{\sigma_{AB}}(a) := \rom{Tr} [M_a \sigma_{AB}]$.
When $\rho_m = \sigma_{AB}^{\otimes m}$ is a product state for 
$\sigma_{AB} \notin \Gamma_\mu(Q)$, then by the law of large numbers,
the probability such that the parameter estimation protocol
outputs the type $Q$ is negligible.
The following lemma generalize this statement to 
permutation-invariant states.
\begin{lemma}
      \label{lemma-parameter-estimation}
\cite[Lemma 6.2.2]{renner:05b}
Let $0 \le r \le \frac{1}{2} m$.
Moreover, let 
$\ket{\theta} \in {\cal H}_{ABE} := {\cal H}_A \otimes {\cal H}_B \otimes {\cal H}_E$,
and let $\rho_{A^m B^m E^m}^{\ket{\theta}}$ be a density operator on
$\rom{Sym}({\cal H}_{ABE}^{\otimes m}, \ket{\theta}^{\otimes m - r})$.
For any $\varepsilon_{\san{P}} > 0$,
if $\rom{Tr}_E \ket{\theta}\bra{\theta} \notin \Gamma_\mu(Q)$
for 
\begin{eqnarray}
\label{eq-definition-mu}
\mu = 2 \sqrt{ \frac{\log (1 / \varepsilon_{\san{P}})}{m} + h(r/m) + \frac{|{\cal W}|}{m}
\log (\frac{m}{2} + 1) },
\end{eqnarray}
then the probability such that the
parameter estimation protocol outputs $Q$ is at most $\varepsilon_{\san{P}}$, i.e.,
${\cal E}_Q(\rho_{A^m B^m}^{\ket{\theta}}) \le
 \varepsilon_{\san{P}}$.
\end{lemma}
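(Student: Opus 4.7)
The plan is to reduce this statement to a direct application of Lemma~\ref{lemma-statistic-of-symmetric-state} on the enlarged Hilbert space ${\cal H}_{ABE}$. Since $\rho_{A^m B^m E^m}^{\ket{\theta}}$ is a density operator supported on the subspace $\rom{Sym}({\cal H}_{ABE}^{\otimes m}, \ket{\theta}^{\otimes m-r})$, I would first decompose it as a convex combination $\sum_i p_i \ket{\Psi_i}\bra{\Psi_i}$ of normalized pure states $\ket{\Psi_i}$ lying in the same subspace. By linearity of ${\cal E}_Q$ and of the partial trace $\rom{Tr}_{E^m}$, it suffices to establish the inequality for an arbitrary normalized pure $\ket{\Psi}$ in the symmetric subspace.

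Next, observe that ${\cal E}_Q(\rom{Tr}_{E^m}\ket{\Psi}\bra{\Psi})$ is, by Eq.~(\ref{eq-cpm-describing-parameter-estimation}), exactly the probability that the outcome sequence $\mbf{a}$ of the product POVM ${\cal M}^{\otimes m}$ applied to the $A^m B^m$ marginal has type $P_{\mbf{a}} = Q$. Since tensoring measurement operators with $\rom{id}_E$ does not alter outcome statistics on the measured subsystem, this equals the probability of type $Q$ when the extended product POVM ${\cal M}'^{\otimes m} := \{M_{\mbf{a}} \otimes \rom{id}_{E^m}\}_{\mbf{a} \in {\cal A}^m}$ is applied directly to $\ket{\Psi}\bra{\Psi}$. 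On a single copy, the outcome distribution of ${\cal M}'$ applied to $\ket{\theta}\bra{\theta}$ is precisely $P_A^{\sigma_{AB}}$ with $\sigma_{AB} := \rom{Tr}_E \ket{\theta}\bra{\theta}$.

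Now apply Lemma~\ref{lemma-statistic-of-symmetric-state} on ${\cal H}_{ABE}$ with the POVM ${\cal M}'$, the pure symmetric state $\ket{\Psi}$, and single-copy distribution $P_A^{\sigma_{AB}}$. The parameter $\alpha$ appearing there coincides with $\mu$ of Eq.~(\ref{eq-definition-mu}) under the substitutions $n \to m$, $\varepsilon \to \varepsilon_{\san{P}}$, and $|{\cal Z}| \to |{\cal W}|$, yielding $\Pr[\|P_{\mbf{a}} - P_A^{\sigma_{AB}}\| > \mu] \le \varepsilon_{\san{P}}$. The hypothesis $\sigma_{AB} \notin \Gamma_\mu(Q)$ forces $\|Q - P_A^{\sigma_{AB}}\| > \mu$, so the event $P_{\mbf{a}} = Q$ is contained in the deviation event, and the desired bound ${\cal E}_Q(\rom{Tr}_{E^m}\ket{\Psi}\bra{\Psi}) \le \varepsilon_{\san{P}}$ follows; averaging over $i$ with weights $p_i$ completes the proof.

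There is no substantive obstacle once Lemma~\ref{lemma-statistic-of-symmetric-state} is in hand, since that lemma supplies the symmetric-subspace concentration inequality that replaces the ordinary law of large numbers (this is where the real work lies, and it is already done). The only care points are the convex reduction to pure states in the specified symmetric subspace and the legitimacy of enlarging the POVM by $\rom{id}_E$; both are routine, and the matching of $\mu$ with $\alpha$ is by inspection.
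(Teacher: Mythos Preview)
Your proposal is correct. The paper does not supply its own proof of this lemma; it simply cites \cite[Lemma~6.2.2]{renner:05b}. Your argument---reducing to pure states in the symmetric subspace via the spectral decomposition, extending the POVM by $\rom{id}_E$ so that Lemma~\ref{lemma-statistic-of-symmetric-state} applies directly on ${\cal H}_{ABE}$, and then observing that the event $P_{\mbf{a}}=Q$ is contained in the deviation event $\|P_{\mbf{a}}-P_A^{\sigma_{AB}}\|>\mu$---is exactly the intended derivation, and each step is sound. (In the paper's actual use of this lemma, the states $\rho_{A^m B^m E^m}^{\ket{\theta}}$ arising from Lemma~\ref{de-finneti-theorem} are already pure, so the convex reduction you give is not strictly needed there, but it is the right way to handle the lemma as stated.)
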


For the POVM ${\cal M}_{XY}$, which is used for obtaining the raw
keys in the QKD protocol, let
$P_{XY}^{\sigma_{AB}}$ be the probability distribution
of the outcomes when measuring $\sigma_{AB}$ by the 
POVM ${\cal M}_{XY}$, i.e.,
$P_{XY}^{\sigma_{AB}}(x,y) := \rom{Tr}[ M_{xy} \sigma_{AB}]$.
For 
\begin{eqnarray}
\label{eq-definition-bar-mu}
\bar{\mu} = 2 \sqrt{ \frac{\log (1 / \varepsilon_2)}{2n} + h(r/2n) + 
\frac{\log (n + 1)}{n}  },
\end{eqnarray}
let 
\begin{eqnarray*}
{\cal Q}_{\bar{\mu}}(Q) := \{ P \in {\cal P}_{2n}(\mathbb{F}_2^2)
\mid \min_{\sigma_{AB} \in \Gamma_\mu(Q)}
\| P_{XY}^{\sigma_{AB}} - P \| \le \bar{\mu} \}
\end{eqnarray*}
be a subset of all types on $\mathbb{F}_2^2$.
Note that if we measure a product state $\sigma_{AB}^{\otimes 2n}$
of $\sigma_{AB} \in \Gamma_\mu(Q)$ by the product
POVM ${\cal M}_{XY}^{\otimes 2n}$, then the joint
type $P_{\mbf{x}\mbf{y}}$ of the outcomes is
contained in the set ${\cal Q}_{\bar{\mu}}(Q)$ with
high probability.

\begin{lemma}
    \label{lemma-condition-on-ir}
Let $\rho_{A^{2n}B^{2n}E^{2n}}^{\ket{\theta}}$ be a density operator
on $\rom{Sym}({\cal H}_{ABE}^{\otimes 2n}, \ket{\theta}^{\otimes
 2n-r})$.
Let $P_{\mbf{X}\mbf{Y}}^{\ket{\theta}} \in {\cal P}(\mathbb{F}_2^{2n}
 \times \mathbb{F}_2^{2n})$ be a probability distribution of the
 outcomes when measuring $\rho_{A^{2n} B^{2n}}^{\ket{\theta}}$ by
the POVM ${\cal M}_{XY}^{\otimes 2n}$. 
Assume that Alice and Bob's bit sequence after
the IR protocol are identical to $\mbf{u}$ with
probability at least $1- \varepsilon_1$ for any
probability distribution $P \in {\cal Q}_{\bar{\mu}}(Q)$, i.e.,
\begin{eqnarray}
P^{2n}(\{ (\mbf{x},\mbf{y}) : \hat{\mbf{u}} \neq \mbf{u}~\mbox{or}~ \tilde{\mbf{u}} \neq \mbf{u} \})
\le \varepsilon_1.
\end{eqnarray}
If $\rom{Tr}_E \ket{\theta}\bra{\theta} \in \Gamma_\mu(Q)$, then
we have
\begin{eqnarray}
     \label{eq-probability-of-reconciled-key-agrees-for-symmetric}
P_{\mbf{X}\mbf{Y}}^{\ket{\theta}}(
\{ (\mbf{x},\mbf{y}) : \hat{\mbf{u}} \neq \mbf{u}~\mbox{or}~
\tilde{\mbf{u}} \neq \mbf{u} \}) 
\le  L \varepsilon_1 + \varepsilon_2,
\end{eqnarray}
where $L := (2n+1)^3$, and
$\varepsilon_2$ is given in Eq.~(\ref{eq-definition-bar-mu}).
\end{lemma}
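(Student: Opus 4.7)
The strategy is to reduce the symmetric-state statement to the i.i.d.\ hypothesis by exploiting that both distributions on $(\mathbb{F}_2\times\mathbb{F}_2)^{2n}$ are permutation-invariant. Because $\rho_{A^{2n}B^{2n}E^{2n}}^{\ket{\theta}}$ is supported on $\rom{Sym}({\cal H}_{ABE}^{\otimes 2n},\ket{\theta}^{\otimes 2n-r})$ and the product measurement ${\cal M}_{XY}^{\otimes 2n}$ is permutation-covariant, $P_{\mbf{X}\mbf{Y}}^{\ket{\theta}}$ is permutation-invariant; so is every i.i.d.\ product $P^{2n}$. A permutation-invariant distribution on $(\mathbb{F}_2\times\mathbb{F}_2)^{2n}$ is constant on each joint-type class ${\cal T}_{P'}^{2n}(\mathbb{F}_2^2)$. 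If I set
\begin{eqnarray*}
\gamma(P') := \frac{1}{|{\cal T}_{P'}^{2n}|} \sum_{(\mbf{x},\mbf{y})\in {\cal T}_{P'}^{2n}} \Pr[\hat{\mbf{u}}\neq\mbf{u}\mbox{ or }\tilde{\mbf{u}}\neq\mbf{u} \mid (\mbf{x},\mbf{y})],
\end{eqnarray*}
where the inner probability averages only over the IR protocol's internal randomness, then the error probability of the protocol under any permutation-invariant input distribution $D$ decomposes as $D(E)=\sum_{P'} D({\cal T}_{P'}^{2n})\gamma(P')$. In particular $\gamma(P')$ is one and the same object for both $P_{\mbf{X}\mbf{Y}}^{\ket{\theta}}$ and any i.i.d.\ source.

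The rest is a two-sided estimate of this decomposition. For types $P'\notin{\cal Q}_{\bar\mu}(Q)$, I apply Lemma~\ref{lemma-statistic-of-symmetric-state} to $\rho_{A^{2n}B^{2n}E^{2n}}^{\ket{\theta}}$ with the POVM $\{M_{xy}\otimes\rom{id}_E\}_{(x,y)}$ on ${\cal H}_{ABE}$; the single-copy target distribution is $P_{XY}^{\sigma_{AB}}$ with $\sigma_{AB}:=\rom{Tr}_E\ket{\theta}\bra{\theta}\in\Gamma_\mu(Q)$, and any type within $\bar\mu$ of $P_{XY}^{\sigma_{AB}}$ belongs to ${\cal Q}_{\bar\mu}(Q)$ by its definition. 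The choice of $\bar\mu$ in (\ref{eq-definition-bar-mu}) is exactly the one that turns the lemma's bound into $P_{\mbf{X}\mbf{Y}}^{\ket{\theta}}(\{P'\notin{\cal Q}_{\bar\mu}(Q)\})\le\varepsilon_2$.

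For types $P'\in{\cal Q}_{\bar\mu}(Q)$ I invoke the hypothesis with the i.i.d.\ distribution $P'^{2n}$, noting that a type on $\mathbb{F}_2^2$ is itself a probability distribution, so $P'^{2n}(E)\le\varepsilon_1$. The same type decomposition applied to $P'^{2n}$ and keeping only the $P'$ term yields
\begin{eqnarray*}
\varepsilon_1 \ge P'^{2n}(E) \ge P'^{2n}({\cal T}_{P'}^{2n})\,\gamma(P') \ge \frac{\gamma(P')}{(2n+1)^3}
\end{eqnarray*}
by Eq.~(\ref{eq-inequality-of-type}), hence $\gamma(P')\le (2n+1)^3\varepsilon_1=L\varepsilon_1$ uniformly over $P'\in{\cal Q}_{\bar\mu}(Q)$. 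Combining the two pieces,
\begin{eqnarray*}
P_{\mbf{X}\mbf{Y}}^{\ket{\theta}}(E) \le L\varepsilon_1 \sum_{P'\in{\cal Q}_{\bar\mu}(Q)} P_{\mbf{X}\mbf{Y}}^{\ket{\theta}}({\cal T}_{P'}^{2n}) + \varepsilon_2 \le L\varepsilon_1+\varepsilon_2,
\end{eqnarray*}
which is the claim. The main conceptual hurdle is the permutation-invariance reduction in the first paragraph: one must verify that the joint distribution obtained from the symmetric-subspace state really is constant on type classes, and hence that the same $\gamma(P')$ governs the error probability whether the source is $P_{\mbf{X}\mbf{Y}}^{\ket{\theta}}$ or an i.i.d.\ $P^{2n}$. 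Once this reduction is in place, the remaining ingredients are elementary type counting and a single invocation of Lemma~\ref{lemma-statistic-of-symmetric-state}.
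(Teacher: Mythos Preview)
Your proof is correct and follows essentially the same approach as the paper: both exploit permutation invariance to decompose the error probability over joint-type classes, bound the contribution from types outside ${\cal Q}_{\bar\mu}(Q)$ via Lemma~\ref{lemma-statistic-of-symmetric-state}, and handle the types inside ${\cal Q}_{\bar\mu}(Q)$ by applying the i.i.d.\ hypothesis at the type itself together with the type-counting bound Eq.~(\ref{eq-inequality-of-type}). Your presentation is slightly more explicit about why the same per-type failure fraction $\gamma(P')$ governs both the symmetric-state source and the i.i.d.\ source, but the argument is otherwise identical.
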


%%%%%%%%%%% Proof %%%%%%%%%%%%%%%%%%%%%%%%
\begin{proof}
For each type $P \in {\cal P}_{2n}(\mathbb{F}_2 \times \mathbb{F})$,
let 
\begin{eqnarray*}
\gamma_P := \frac{|\{ (\mbf{x}, \mbf{y}) : 
\hat{\mbf{u}} \neq \mbf{u}~\mbox{or}~ \tilde{\mbf{u}} \neq \mbf{u} \}
\cap {\cal T}_P^{2n}|}{|{\cal T}_P^{2n}|}
\end{eqnarray*}
be the ratio of pairs of sequences in ${\cal T}_P^{2n}$ such that
Alice or Bob's sequences after the IR protocol are
not identical to $\mbf{u}$.
Since the distribution $P_{\mbf{X}\mbf{Y}}^{\ket{\theta}}$ is
permutation invariant,
we can rewrite the l.h.s. of 
Eq.~(\ref{eq-probability-of-reconciled-key-agrees-for-symmetric}) as
\begin{eqnarray}
      \label{eq-decomposition-into-good-bad}
\sum_{P \in {\cal Q}_{\bar{\mu}}(Q) } 
\gamma_P P_{\mbf{X}\mbf{Y}}^{\ket{\theta}}({\cal T}_P^{2n})
+ \sum_{P \notin {\cal Q}_{\bar{\mu}}(Q) }
\gamma_P P_{\mbf{X}\mbf{Y}}^{\ket{\theta}}({\cal T}_P^{2n}).
\end{eqnarray}
Since $\rom{Tr}_E \ket{\theta}\bra{\theta} \in \Gamma_\mu(Q)$,
by using Lemma \ref{lemma-statistic-of-symmetric-state}, 
the second term of Eq.~(\ref{eq-decomposition-into-good-bad})
is upper bounded by $\varepsilon_2$.

On the other hand, by using Eq.~(\ref{eq-inequality-of-type}),
we have
\begin{eqnarray*}
\varepsilon_1 \ge \gamma_P P^{2n}({\cal T}_P^{2n})
\ge \frac{\gamma_P}{(2n+1)^3}
\end{eqnarray*}
for any $P \in {\cal Q}_{\bar{\mu}}(Q)$.
Thus, the first term of Eq.~(\ref{eq-decomposition-into-good-bad})
is upper bounded by $(2n+1)^3 \varepsilon_1$.
\end{proof}

\subsection{Security poof}

In order to save space, we abbreviate $2n+m$ by $K$.
In this section,
if there are two  operators $\rho \in {\cal P}({\cal H})$ and 
$\tilde{\rho} \in {\cal P}({\cal H})$, 
then the former represents the normalized density operator of the latter,
i.e., $\rho = \frac{1}{\rom{Tr}\tilde{\rho}} \tilde{\rho}$. 

%%%%%%%%%% Parameter estimation%%%%%%%%%%%%%%%%%%%
\subsubsection{Parameter estimation}

We first analyze the situation after the parameter estimation
protocol is executed. More specifically, 
by using Lemmas \ref{de-finneti-theorem} and
\ref{lemma-parameter-estimation}, 
we will
show Eq.~(\ref{eq-proof-2}), which states that the density operator
$\rho_{A^{2n}B^{2n}E^{2n}}^Q$ after the parameter
estimation protocol can be approximated by a convex combination
of almost product states.

Since the tripartite state $\rho_{A^N B^N E^N}$ lies on the symmetric
subspace of ${\cal H}_{ABE}^{\otimes N} := ({\cal H}_A \otimes {\cal H}_B
\otimes {\cal H})^{\otimes N}$, by using Lemma \ref{de-finneti-theorem},
the density operator $\rho_{A^K B^K E^K}$ is
approximated by a convex combination of almost product states, i.e.,
\begin{eqnarray*}
\| 
\rho_{A^K B^K E^K} -
\int_{{\cal S}_1} \rho_{A^K B^K E^K}^{\ket{\theta}} \nu(\ket{\theta})
\| \le \kappa,
\end{eqnarray*}
where the integral runs over the set ${\cal S}_1 := {\cal S}_1({\cal
H}_{ABE})$ of normalized vectors on ${\cal H}_{ABE}$, where 
\begin{eqnarray*}
\rho_{A^K B^K E^K}^{\ket{\theta}} \in 
{\cal P}(\rom{Sym}({\cal H}_{ABE}^{\otimes K}, \ket{\theta}^{\otimes K -r}))
\end{eqnarray*}
for any $\ket{\theta} \in {\cal S}_1$, and where 
\begin{eqnarray*}
r := \frac{2N}{k} \{
\ln(2/\kappa) + \dim({\cal H}_A \otimes {\cal H}_B) \cdot
\ln k
\}.
\end{eqnarray*}
Since the trace distance does not increase by applying a CP map,
we have
\begin{eqnarray}
     \label{eq-proof-1}
\|
\tilde{\rho}_{A^{2n}B^{2n}E^{2n}}^Q -
\int_{{\cal S}_1} \tilde{\rho}_{A^{2n} B^{2n} E^{2n}}^{Q, \ket{\theta}}
\nu(\ket{\theta})
\| \le \kappa,
\end{eqnarray}
where 
\begin{eqnarray*}
\tilde{\rho}_{A^{2n} B^{2n} E^{2n}}^{Q, \ket{\theta}} :=
(\rom{id}_{A^{2n} B^{2n}} \otimes {\cal E}_Q
\otimes \rom{id}_{E^{2n}})(\rho_{A^K B^K E^{2n}}^{\ket{\theta}}).
\end{eqnarray*}
Let 
\begin{eqnarray*}
{\cal V}_\mu := \{
\ket{\theta} \in {\cal S}_1 \mid
\rom{Tr}_E \ket{\theta}\bra{\theta} \in \Gamma_\mu
\}
\end{eqnarray*}
be the subset of ${\cal S}_1$ that is compatible with 
the output $Q$ of the parameter estimation protocol
with the fluctuation $\mu$.
From Lemma \ref{lemma-parameter-estimation}, if $\ket{\theta} \notin
\Gamma_\mu(Q)$,
then the probability such that the parameter estimation protocol
outputs $Q$ is at most $\varepsilon_{\san{P}}$, i.e.,
$\| \tilde{\rho}_{A^{2n} B^{2n} E^{2n}}^{Q, \ket{\theta}} \| \le
\varepsilon_{\san{P}}$.
Thus, we can restrict the integral in Eq.~(\ref{eq-proof-1}) to the
set ${\cal V}_\mu$ as 
\begin{eqnarray*}
\lefteqn{ \|
\tilde{\rho}_{A^{2n} B^{2n} E^{2n}}^Q -
\tilde{\rho}_{A^{2n} B^{2n} E^{2n}}^{Q, {\cal V}_\mu}
\| } \\
&\le& \|
\tilde{\rho}_{A^{2n} B^{2n} E^{2n}}^Q -
\int_{{\cal S}_1} 
\tilde{\rho}_{A^{2n} B^{2n} E^{2n}}^{Q, \ket{\theta}} \nu(\ket{\theta}) 
\| \\
&& +
\|
\int_{{\cal V}_\mu^{\rom{c}}} 
\tilde{\rho}_{A^{2n} B^{2n} E^{2n}}^{Q, \ket{\theta}} \nu(\ket{\theta}) 
\| 
\le \kappa + \varepsilon_{\san{P}},
\end{eqnarray*}
where we set
\begin{eqnarray*}
\tilde{\rho}_{A^{2n} B^{2n} E^{2n}}^{Q, {\cal V}_\mu} :=
\int_{{\cal V}_\mu} 
\tilde{\rho}_{A^{2n} B^{2n} E^{2n}}^{Q, \ket{\theta}} \nu(\ket{\theta}),
\end{eqnarray*}
and ${\cal V}_\mu^{\rom{c}}$ is the complement of ${\cal V}_\mu$
in ${\cal S}_1$.
By using the following Lemma \ref{lemma-normalizing},
the normalized version of the operators satisfy
\begin{eqnarray}
     \label{eq-proof-2}
\|
\rho_{A^{2n} B^{2n} E^{2n}}^Q -
\rho_{A^{2n} B^{2n} E^{2n}}^{Q, {\cal V}_\mu}
\| \le 2 \tilde{\tau},
\end{eqnarray}
where $\tilde{\tau} := \frac{\kappa +
\varepsilon_{\san{P}}}{P_{\san{PE}}(Q)}$.

%%%%%%%% Lemma normalizing %%%%%%%%%%%%%%%%%%
\begin{lemma}
     \label{lemma-normalizing}
Let $\tilde{\rho}, \tilde{\sigma} \in {\cal P}({\cal H})$
be (not necessarily normalized) operators.
Assume that $\| \tilde{\rho} - \tilde{\sigma} \| \le \varepsilon$
for $\varepsilon \ge 0$. Let $\rho :=
 \frac{1}{\rom{Tr}\tilde{\rho}}\tilde{\rho}$
and $\sigma := \frac{1}{\rom{Tr}\tilde{\sigma}} \sigma$ be the 
normalized operators. Then, we have 
$\| \rho - \sigma \| \le 2 \tilde{\varepsilon}$ 
for $\tilde{\varepsilon} := \frac{\varepsilon}{\rom{Tr} \tilde{\rho}}$.
\end{lemma}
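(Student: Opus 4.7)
The plan is to split $\rho-\sigma$ into two pieces: one that inherits the closeness of the unnormalized operators, and one that captures the mismatch between the two normalization constants. Concretely, I would write
\begin{eqnarray*}
\rho - \sigma
&=& \frac{\tilde{\rho}}{\rom{Tr}\tilde{\rho}} - \frac{\tilde{\sigma}}{\rom{Tr}\tilde{\sigma}} \\
&=& \frac{\tilde{\rho} - \tilde{\sigma}}{\rom{Tr}\tilde{\rho}}
+ \tilde{\sigma}\left(\frac{1}{\rom{Tr}\tilde{\rho}} - \frac{1}{\rom{Tr}\tilde{\sigma}}\right),
\end{eqnarray*}
apply the triangle inequality for the trace norm, and bound each term separately.

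The first term has trace norm at most $\|\tilde{\rho}-\tilde{\sigma}\|/\rom{Tr}\tilde{\rho} \le \tilde{\varepsilon}$ by hypothesis. For the second term, I would use that $\tilde{\sigma}$ is non-negative so $\|\tilde{\sigma}\| = \rom{Tr}\tilde{\sigma}$, giving
\begin{eqnarray*}
\left\| \tilde{\sigma}\left(\frac{1}{\rom{Tr}\tilde{\rho}} - \frac{1}{\rom{Tr}\tilde{\sigma}}\right) \right\|
= \frac{|\rom{Tr}\tilde{\sigma} - \rom{Tr}\tilde{\rho}|}{\rom{Tr}\tilde{\rho}}.
\end{eqnarray*}
The key observation is then that $|\rom{Tr}\tilde{\sigma}-\rom{Tr}\tilde{\rho}| = |\rom{Tr}(\tilde{\sigma}-\tilde{\rho})| \le \|\tilde{\sigma}-\tilde{\rho}\| \le \varepsilon$, so the second term is also bounded by $\tilde{\varepsilon}$. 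Summing yields the claimed $2\tilde{\varepsilon}$.

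There is no real obstacle here; the lemma is a routine ``normalization'' estimate, and the only subtlety worth flagging is the use of non-negativity (so that $\|\tilde{\sigma}\|=\rom{Tr}\tilde{\sigma}$) together with the elementary fact that the trace of a Hermitian operator is bounded in absolute value by its trace norm. Both facts are standard and require no machinery beyond the triangle inequality. The factor of $2$ in the conclusion is tight for this argument and matches the way the lemma is invoked in the proof of Eq.~(\ref{eq-proof-2}), where each unnormalized bound $\kappa + \varepsilon_{\san{P}}$ is converted into $2\tilde{\tau}$ after dividing by $P_{\san{PE}}(Q)$.
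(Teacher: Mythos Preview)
Your proof is correct and is essentially the same as the paper's. Both arguments use the intermediate operator $\hat{\sigma} := \tilde{\sigma}/\rom{Tr}\tilde{\rho}$ to split $\rho - \sigma = (\rho - \hat{\sigma}) + (\hat{\sigma} - \sigma)$ and bound each piece by $\tilde{\varepsilon}$; the only cosmetic difference is that the paper bounds $\|\hat{\sigma} - \sigma\| = |1 - \|\hat{\sigma}\||$ via the reverse triangle inequality, whereas you reach the same quantity directly via $|\rom{Tr}(\tilde{\sigma}-\tilde{\rho})| \le \|\tilde{\sigma}-\tilde{\rho}\|$.
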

%%%%%%%
\begin{proof}
From the assumption, we have $\| \rho - \hat{\sigma} \| \le
 \tilde{\varepsilon}$,
where $\hat{\sigma} := \frac{1}{\rom{Tr} \tilde{\rho}} \sigma$.
By using the triangle inequality, we have
\begin{eqnarray*}
1- \tilde{\varepsilon} \le
\| \rho \| - \| \rho - \hat{\sigma} \|
\le \| \hat{\sigma} \| 
\le \| \rho \| + \| \hat{\sigma} - \rho \| 
\le 1 + \tilde{\varepsilon}. 
\end{eqnarray*}
Thus, we have 
\begin{eqnarray*}
\| \sigma - \hat{\sigma} \| = | 1 - \| \hat{\sigma} \| | \le 
\tilde{\varepsilon}.
\end{eqnarray*}
Using once again the triangle inequality, we have
\begin{eqnarray*}
\| \rho - \sigma \| \le
\| \rho - \hat{\sigma} \| + \| \hat{\sigma} - \sigma \|
\le 2 \tilde{\varepsilon}.
\end{eqnarray*}
\end{proof}

%%%%%%%% Information reconciliation %%%%%%%%%%%%%%%%%%%%
\subsubsection{Information reconciliation}

According to  Section 2,
the IR protocol universally works with a negligible error 
probability for i.i.d. errors, if we set the parameters
$R(Q) = H(P_{W_1}) + \delta$, 
$R_0(Q) = H(P_{W_2|W_1=0}) + \delta$, 
$\frac{\overline{n}_0}{n} = P_{W_1}(0) + \delta$, and
$\frac{\underline{n}_0}{n} = P_{W_1}(0) + \delta$.
In this section, by using Lemma \ref{lemma-condition-on-ir},
we show that 
the IR protocol also works 
with a negligible error probability in the QKD protocol, i.e.,
\begin{eqnarray}
     \label{eq-proof-3}
P_{\mbf{X}\mbf{Y}}^Q( \{ (\mbf{x}, \mbf{y}) :
\hat{\mbf{u}} \neq \mbf{u}~\mbox{or}~\tilde{\mbf{u}} \neq \mbf{u} \})
\le \varepsilon_1 + \varepsilon_2 + 2 \tilde{\tau},
\end{eqnarray}
where $P_{\mbf{X}\mbf{Y}}^Q$ is the probability distribution of 
the outcomes when measuring $\rho_{A^{2n} B^{2n}}^Q$ by 
${\cal M}_{XY}^{\otimes 2n}$.
Note that $\varepsilon_1$ is the error probability of
the IR protocol for i.i.d. errors, which exponentially
goes to $0$ as $n \to \infty$ if we use appropriate
linear codes \cite[Corollary 2]{csiszar:82}. As we will see later,
$\varepsilon_2$ also exponentially goes to $0$ as $n \to \infty$.

By using the fact that the trace distance does not increase
by the CP map (measurement by ${\cal M}_{XY}^{\otimes 2n}$),
the l.h.s. of Eq.~(\ref{eq-proof-3}) is upper bounded by
\begin{eqnarray*}
P_{\mbf{X}\mbf{Y}}^{Q, {\cal V}_\mu}(\{ (\mbf{x}, \mbf{y}) :
\hat{\mbf{u}} \neq \mbf{u}~\mbox{or}~\tilde{\mbf{u}} \neq \mbf{u} \})
+ 2 \tilde{\tau},
\end{eqnarray*}
where $P_{\mbf{X}\mbf{Y}}^{Q, {\cal V}_\mu}$ is the probability
distribution of the outcomes when measuring $\rho_{A^{2n} B^{2n}}^{Q,
{\cal V}_\mu}$ by ${\cal M}_{XY}^{\otimes 2n}$.
Since $\rho_{A^{2n} B^{2n} E^{2n}}^{Q, {\cal V}_\mu}$ is a convex
combination of density operators
$\rho_{A^{2n} B^{2n} E^{2n}}^{Q, \ket{\theta}}$ on
$\rom{Sym}({\cal H}_{ABE}^{\otimes 2n}, \ket{\theta}^{\otimes 2n-r})$
such that $\rom{Tr}_E[ \ket{\theta}\bra{\theta}] \in \Gamma_\mu(Q)$,
by using Lemma \ref{lemma-condition-on-ir}, we have Eq.~(\ref{eq-proof-3}).

%%%%%%%% Privacy amplification %%%%%%%%%%%%%%%%%%
\subsubsection{Privacy amplification}

In this section, we analyze the PA protocol.
By applying Theorem \ref{theorem-key-distillation-from-raw-key},
if the length $\ell(Q)$ of the distilled key satisfies
\begin{eqnarray}
\lefteqn{ \ell(Q) \le } \nonumber \\
&& \max[ 
H_{\min}^{\tilde{\varepsilon}}(\rho_{\mbf{U} \mbf{W}_1 E^N}^Q |
\mbf{W}_1 E^N) 
- n R(Q) - \overline{n}_0 R_0(Q), \nonumber \\
&& ~~~~~~H_{\min}^{\tilde{\varepsilon}}(\rho_{\mbf{U} \mbf{W}_1 \mbf{U}_1 E^N}
 | \mbf{W}_1 \mbf{U}_1 E^N)  
 - \overline{n}_0(Q) R_0(Q)
] \nonumber \\
&&  ~~~~~~~~- \log(1/ 8 \tilde{\varepsilon}),
    \label{eq-proof-4}
\end{eqnarray}
then the distilled key is 
$(3 \sqrt{2 \tilde{\varepsilon}} + 3 \varepsilon_1 + 3
\varepsilon_2 + 6 \tilde{\tau})$-secure, where 
$\rho_{\mbf{U} \mbf{W}_1 E^N}^Q$ and
$\rho_{\mbf{U} \mbf{W}_1 \mbf{U}_1 E^N}$ are
derived from $\rho_{\mbf{X}\mbf{Y} E^N}^Q$ 
by using functions $\xi_1$ and $\xi_2$
in the same way
as in Section 2.
Let $\tilde{\varepsilon} := \sqrt{24 \tilde{\tau}}$.
Multiplying the probability $P_{\san{PE}}(Q)$, 
the quantities 
$P_{\san{PE}}(Q) 3 \sqrt{2 \tilde{\varepsilon}} \le 6 (6 (\kappa +
\varepsilon_{\san{P}}))^{1/4}$
and $P_{\san{PE}}(Q) 6 \tilde{\tau} = 6 (\kappa +
\varepsilon_{\san{P}})$
goes to $0$ as $\kappa, \varepsilon_{\san{P}} \to 0$.
Thus, the security of the distilled key, i.e.,
the l.h.s. of Eq.~(2) goes
to $0$ as $\kappa, \varepsilon_{\san{P}}, \varepsilon_1, \varepsilon_2
\to 0$.

%%%%%%%% Evaluation of key rate %%%%%%%%%%%%%%%%%%%%
\subsubsection{Evaluation of key rate}

One more thing we have left is to replace the r.h.s. of
Eq.~(\ref{eq-proof-4}) by smaller but  more concise equation.
Noting that $\kappa + \varepsilon_{\san{P}} \le \tilde{\varepsilon}$,
we can replace the last term $\log(1/ 8 \tilde{\varepsilon})$ by
$\log(1/8(\kappa + \varepsilon_{\san{P}}))$.

Let $\rho_{\mbf{X}\mbf{Y} E^{2n}}^{Q, \ket{\theta}} := ({\cal
E}_{XY}^{\otimes 2n} \otimes \rom{id}_{E^{2n}})(\rho_{A^{2n} B^{2n}
E^{2n}}^{Q, \ket{\theta}})$, and let 
$\rho_{\mbf{U} \mbf{W}_1 E^{2n}}^{Q, \ket{\theta}}$ be the density
operator derived from $\rho_{\mbf{X}\mbf{Y} E^{2n}}^{Q, \ket{\theta}}$ 
in the same way as in 
Section 2.
Since $\rho_{A^{2n} B^{2n} E^{2n}}^{Q, \ket{\theta}}$
lies on $\rom{Sym}({\cal H}_{A^2 B^2 E^2}^{\otimes n},
\ket{\theta^2}^{\otimes n-r})$ for $\ket{\theta^2} :=
\ket{\theta}^{\otimes 2}$, we can use 
Lemma \ref{lemma-min-entropy-of-symmetric-state}
to obtain
\begin{eqnarray}
\lefteqn{ \frac{1}{n} H_{\min}^{(\kappa + \varepsilon_{\san{P}})}(
\rho_{\mbf{U} \mbf{W}_1 E^{2n}}^{Q, \ket{\theta}} | \mbf{W}_1 E^{2n}) }
\nonumber  \\
&&~~\ge H_{\sigma}(U_1 U_2 | W_1 E_1 E_2) - \delta^\prime,
     \label{eq-proof-5}
\end{eqnarray}
where 
\begin{eqnarray*}
\delta^\prime := 9 \sqrt{
\frac{2 \log(4/ (\kappa + \varepsilon_{\san{P}}))}{n} +
h(r/n)
},
\end{eqnarray*}
and where $\sigma_{U_1 U_2 W_1 E_1 E_2}$ is derived 
from $\sigma_{X_1 X_2 Y_1 Y_2 E_1 E_2} :=
({\cal E}_{XY}^{\otimes 2} \otimes \rom{id}_E^{\otimes 2})(
\ket{\theta}\bra{\theta}^{\otimes 2})$ in the same way as in
Section 2.

Let $\rho_{\mbf{U} \mbf{W}_1 E^{2n}}^{Q, {\cal V}_\mu}$ be a density
operator derived from 
$\rho_{\mbf{X} \mbf{Y} E^{2n}}^{Q, {\cal V}_\mu} :=
({\cal E}_{XY}^{\otimes 2} \otimes \rom{id}_{E^{2n}})(\rho_{A^{2n}
B^{2n} E^{2n}}^{Q, {\cal V}_\mu})$ in the same way as in 
Section 2.
Since $\rho_{\mbf{U} \mbf{W}_1 E^{2n}}^{Q, {\cal V}_\mu}$ is a
convex combination of density operators 
$\rho_{\mbf{U} \mbf{W}_1 E^{2n}}^{Q, \ket{\theta}}$, by using
Eqs.~(\ref{strong-subadditivity}) 
and (\ref{conditioning-on-classical-information-2})
in Lemma \ref{lemma-properties-of-min-max-entropy}, we have
\begin{eqnarray}
\lefteqn{ H_{\min}^{(\kappa + \varepsilon_{\san{P}})}(
\rho_{\mbf{U} \mbf{W}_1 E^{2n}}^{Q, {\cal V}_\mu} |
\mbf{W}_1 E^{2n} )  } \nonumber \\
&&~~\ge \min_{\ket{\theta} \in {\cal V}_\mu}
H_{\min}^{(\kappa + \varepsilon_{\san{P}})}(
\rho_{\mbf{U} \mbf{W}_1 E^{2n}}^{Q, \ket{\theta}} |
\mbf{W}_1 E^{2n} ).
    \label{eq-proof-6}
\end{eqnarray}

Since the trace distance does not increase by a CP map, we have
\begin{eqnarray}
\| 
\rho_{\mbf{U} \mbf{W}_1 E^{2n}}^Q - 
\rho_{\mbf{U} \mbf{W}_1 E^{2n}}^{Q, {\cal V}_\mu}
\| \le 2 \tilde{\tau}.
     \label{eq-proof-7}
\end{eqnarray}
By using (a) Lemmas \ref{lemma-chain-rule} and
\ref{lemma-chain-2},
(b) Eq.~(\ref{eq-proof-7}) and Lemma \ref{lemma-continuity},
(c) $\kappa + \varepsilon_{\san{P}} \le \tilde{\tau}$,
(d) Eqs.~(\ref{eq-proof-5}) and (\ref{eq-proof-6}),
we have 
\begin{widetext}
\begin{eqnarray*}
\frac{1}{n} H_{\min}^{\sqrt{24 \tilde{\tau}}}(\rho_{\mbf{U} \mbf{W}_1
E^N}^Q| \mbf{W}_1 E^N ) 
&\stackrel{\mbox{{\tiny (a)}}}{\ge}& 
\frac{1}{n}
H_{\min}^{3 \tilde{\tau}}(\rho_{\mbf{U} \mbf{W}_1 E^{2n}}^Q | \mbf{W}_1
E^{2n}) - \frac{2 (m+k)}{n} \log \dim {\cal H}_E \\
&\stackrel{\mbox{{\tiny (b)}}}{\ge}&
\frac{1}{n} H_{\min}^{\tilde{\tau}}(\rho_{\mbf{U} \mbf{W}_1
E^{2n}}^{Q,{\cal V}_\mu} | \mbf{W}_1
E^{2n}) - \frac{2 (m+k)}{n} \log \dim {\cal H}_E \\
&\stackrel{\mbox{{\tiny (c)}}}{\ge}&
\frac{1}{n} H_{\min}^{(\kappa + \varepsilon_{\san{P}})}(
\rho_{\mbf{U} \mbf{W}_1 E^{2n}}^{Q, {\cal V}_\mu} | \mbf{W}_1
E^{2n}) - \frac{2 (m+k)}{n} \log \dim {\cal H}_E \\
&\stackrel{\mbox{{\tiny (d)}}}{\ge}&
\min_{\ket{\theta} \in {\cal V}_\mu}
H_{\sigma}(U_1 U_2 | W_1 E_1 E_2) - \delta^\prime
- \frac{2 (m+k)}{n} \log \dim {\cal H}_E .
\end{eqnarray*}
In a similar manner, we have
\begin{eqnarray*}
\frac{1}{n} H_{\min}^{\sqrt{24 \tilde{\tau}}}(\rho_{\mbf{U} \mbf{W}_1 \mbf{U}_1
E^N}^Q| \mbf{W}_1 \mbf{U}_1 E^N ) 
\ge 
\min_{\ket{\theta} \in {\cal V}_\mu} 
H_{\sigma}(U_2 | W_1 U_1 E_1 E_2) - \delta^\prime
- \frac{2 (m+k)}{n} \log \dim {\cal H}_E .
\end{eqnarray*}
\end{widetext}
Finally, setting $k := \alpha_1 n$,
$m := \alpha_2 n$, $\kappa := e^{- \alpha_3 k}$,
$\varepsilon_{\san{P}} := 2^{- \alpha_4 m}$,
$\varepsilon_2 := 2^{- \alpha_5 n}$,
and taking $n \to \infty$
and $\alpha_1,\alpha_2,\alpha_3, \alpha_4, \alpha_5 \to 0$,
we have the assertion of theorem.

\section{Proof of Theorem 3}
\label{proof-of-theorem-3}

This section presents a proof of Theorem 3 in the main text.

Let 
\begin{eqnarray*}
\ket{\psi_{ABE}} &:=& \sum_{\san{x}, \san{z} \in \mathbb{F}_2 }
\sqrt{P_{\san{XZ}}(\san{x},\san{z})} \ket{\psi(\san{x},\san{z})} 
\ket{\san{x}, \san{z}} \\
&=& \sum_{x, \san{x} \in \mathbb{F}_2}
\sqrt{P_{\san{X}}(\san{x})} \ket{x, x + \san{x}} \ket{\phi(x,\san{x})}
\end{eqnarray*}
be a purification of $\sigma_{AB} = \sum_{\san{x},\san{z} \in
\mathbb{F}_2} \ket{\psi(\san{x},\san{z})} \bra{\psi(\san{x},\san{z})}$,
where we set
\begin{eqnarray*}
\ket{\phi(x, \san{x})} := \frac{1}{\sqrt{P_{\san{X}}}(\san{x})}
\sum_{\san{z} \in \mathbb{F}_2} (-1)^{x \san{z}} 
\sqrt{P_{\san{XZ}}(\san{x}, \san{z})} \ket{\san{x}, \san{z}},
\end{eqnarray*}
and where $P_{\san{X}}(\san{x}) = \sum_{\san{z} \in \mathbb{F}_2}
P_{\san{XZ}}(\san{x},\san{z})$
is a marginal distribution.
Then, let
\begin{eqnarray*}
\lefteqn{ \sigma_{X_1 X_2 Y_1 Y_2 E_1 E_2} } \\
&:=&
({\cal E}_{XY}^{\otimes 2} \otimes \rom{id}_E^{\otimes 2})
(\ket{\psi_{ABE}}\bra{\psi_{ABE}}^{\otimes 2}) \\
&=& \sum_{\vec{x}, \vec{\san{x}} \in \mathbb{F}_2^2} \frac{1}{4}
P_{\san{X}}^2(\vec{\san{x}}) \ket{\vec{x}, \vec{x} + \vec{\san{x}}}
\bra{\vec{x}, \vec{x} + \vec{\san{x}}} \otimes \sigma_{E_1
E_2}^{\vec{x}, \vec{\san{x}}},
\end{eqnarray*}
where 
\begin{eqnarray*}
\sigma_{E_1 E_2}^{\vec{x}, \vec{\san{x}}} :=
\ket{\phi(x_1, \san{x}_1)} \bra{\phi(x_1, \san{x}_1)}
\otimes \ket{\phi(x_2, \san{x}_2)} \bra{\phi(x_2, \san{x}_2)}
\end{eqnarray*}
for $\vec{x} = (x_1,x_2)$ and $\vec{\san{x}} = (\san{x}_1,\san{x}_2)$.

Noting that 
\begin{eqnarray*}
P_{X_1 X_2 Y_1 Y_2}(\vec{x}, \vec{x} + \vec{\san{x}}) =
\frac{1}{4} P_{\san{X}}^2(\vec{\san{x}}),
\end{eqnarray*}
we have
\begin{eqnarray*}
P_{U_1}(u_1) &=& \frac{1}{2} \\
P_{W_1}(w_1) &=& \sum_{\vec{x} \in \mathbb{F}_2^2
 \atop \san{x}_1 + \san{x}_2 = w_1 } 
P_{\san{X}}^2(\vec{\san{x}}) \\
P_{U_2|W_1 = 0}(u_2) &=& \frac{1}{2} \\
P_{U_2|W_1 = 1}(u_2) &=& 1 \\
P_{W_2|W_1=0}(w_2) &=& \frac{P_{\san{X}}^2(w_2,w_2)}{P_{W_1}(w_1)} \\
P_{W_2|W_1=1}(0) &=& 1.
\end{eqnarray*}
Using these formulas, we can write
\begin{eqnarray*}
&& \sigma_{U_1 U_2 W_1 E_1 E_2} =
\sum_{\vec{u} \in \mathbb{F}_2^2} \sum_{w_1 \in \mathbb{F}_2}
P_{U_1}(u_1) P_{W_1}(w_1) \\
&& ~~~~~~~~~P_{U_2|W_1 = w_1}(u_2)  
 \ket{\vec{u}, w_1}\bra{\vec{u}, w_1} \otimes
\bar{\sigma}_{E_1 E_2}^{\vec{u}, w_1} 
\end{eqnarray*}
for $\vec{u} = (u_1,u_2)$, where 
\begin{eqnarray*}
\bar{\sigma}_{E_1 E_2}^{\vec{u},w_1} :=
\sum_{w_2 \in \mathbb{F}_2} P_{W_2|W_1=0}(w_2)
\sigma_{E_1 E_2}^{\vec{u} G, (w_1,w_2)G}
\end{eqnarray*}
for $w_1 = 0$ and a matrix $G = \left(\begin{array}{cc} 1 & 1 \\ 1 & 0
				      \end{array} \right)$,
and 
\begin{eqnarray*}
\bar{\sigma}_{E_1 E_2}^{\vec{u}, w_1} :=
\sum_{a,b \in \mathbb{F}_2} \frac{1}{4} \sigma_{E_1 E_2}^{(u_1,a)G, (w_1,b)G}
\end{eqnarray*}
for $w_1 = 1$.

Since supports of rank $1$ matrices $\{ \sigma_{E_1 E_2}^{\vec{x},
\vec{\san{x}}} \}_{\vec{\san{x}} \in \mathbb{F}_2^2}$ are
orthogonal to each other, 
$\sigma_{E_1 E_2}^{\vec{u}, w_1}$ for $w_1 = 0$ is already
eigen value decomposed.
Applying Lemma \ref{lemma-for-key-rate}
for $\san{J} = \{ 00, 10\}$ and 
$C = C^{\bot} = \{ 00, 11 \}$, we can 
eigen value decompose $\sigma_{E_1 E_2}^{\vec{u}, w_1}$
for $w_1 = 1$ as 
\begin{eqnarray*}
\sigma_{E_1 E_2}^{\vec{u}, w_1} =
\sum_{b \in \mathbb{F}_2} \frac{1}{2}
\sum_{\vec{\san{j}} \in \san{J} } 
P_{\san{J}|\vec{\san{X}}=\vec{\san{x}}}(\vec{\san{j}})
\ket{\vartheta((u_1,0), \san{x}, \vec{\san{j}})}
\bra{\vartheta((u_1,0), \san{x}, \vec{\san{j}})},
\end{eqnarray*}
where we follow the notations in 
Lemma \ref{lemma-for-key-rate}
for $m=2$.

Thus, we have
\begin{eqnarray}
\lefteqn{ H(\sigma_{U_1 U_2 W_1 E_1 E_2}) } \nonumber \\
&=& H(P_{U_1}) + H(P_{W_1}) + \sum_{w_1 \in \mathbb{F}_2}
P_{W_1}(w_1) \{ H(P_{U_2|W_1 = w_1}) \nonumber \\
&& +
\sum_{\vec{u} \in \mathbb{F}_2^2} P_{U_1}(u_1) P_{U_2|W_1 = w_1}(u_2)
H(\sigma_{E_1 E_2}^{\vec{u}, w_1}) \} \nonumber \\
&=& 1 + H(P_{\bar{\san{X}}}) + P_{\bar{\san{X}}}(0)\{
1 + H(P_{\vec{\san{X}}|\bar{\san{X}}=0})\} \nonumber \\ 
&&~~ +
P_{\bar{\san{X}}}(1) H(P_{\vec{\san{X}}\san{J}| \bar{\san{X}}=1}). 
\label{entropy-uwe}
\end{eqnarray}

Taking the partial trace of $\sigma_{U_1 U_2 W_1 E_1 E_2}$
over systems $U_1, U_2$, we have
\begin{eqnarray*}
\sigma_{W_1 E_1 E_2} &=& \sum_{w_1 \in \mathbb{F}_2}
P_{W_1}(w_1) \ket{w_1}\bra{w_1}  \\
&& \otimes
\left(
\sum_{\vec{u} \in \mathbb{F}_2^2} P_{U_1}P_{U_2|W_1 = w_1}(u_2)
\bar{\sigma}_{E_1 E_2}^{\vec{u},w_1}
\right).
\end{eqnarray*}
Thus, we have
\begin{eqnarray}
H(\sigma_{W_1 E_1 E_2}) &=&
H(P_{W_1}) + \sum_{w_1 \in \mathbb{F}_2} P_{W_1}(w_1) \nonumber \\
&& H \left(
\sum_{\vec{u} \in \mathbb{F}_2^2} P_{U_1}P_{U_2|W_1 = w_1}(u_2)
\bar{\sigma}_{E_1 E_2}^{\vec{u},w_1}
\right) \nonumber \\
&=& 
H(P_{\bar{\san{X}}}) + 
\sum_{\bar{\san{x}} \in \mathbb{F}_2}
P_{\bar{\san{X}}}(0) H(P_{\vec{\san{X}}\vec{\san{Z}}| \bar{\san{X}}= \bar{\san{x}}}). 
\label{entropy-we}
\end{eqnarray}

Combining Eqs.~(\ref{entropy-uwe}) and (\ref{entropy-we}), we have
\begin{eqnarray*}
\lefteqn{ H_{\sigma}(U_1 U_2 | W_1 E_1 E_2) - H(P_{W_1}) 
P_{W_1}(0) H(P_{W_2|W_1 = 0}) } \\
&=&
2 - H(P_{\vec{\san{X}}\vec{\san{Z}}}) +
P_{\bar{\san{X}}}(1) \{
H(P_{\vec{\san{X}}\san{J}| \bar{\san{X}}=1}) - 1
\} \\
&=& 
2 - 2 H(P_{\san{X}\san{Z}}) +
P_{\bar{\san{X}}}(1) 
h\left( \frac{p_{00} p_{10} + p_{01} p_{11}}{
(p_{00} + p_{01})(p_{10} + p_{11})}\right).
\end{eqnarray*}

On the other hand, by taking partial trace of
$\sigma_{U_1 U_2 W_1 E_1 E_2}$ over the system
$U_1$, we have
\begin{eqnarray*}
\sigma_{U_1 W_1 E_1 E_2} &=&
\sum_{u_1, w_1 \in \mathbb{F}_2} \frac{1}{2} P_{W_1}(w_1) 
\ket{u_1,w_1}\bra{u_1,w_1} \\
&& \otimes
\left(
\sum_{u_2 \in \mathbb{F}_2} P_{U_2 | W_1 = w_1}(u_2)
\sigma_{E_1 E_2}^{(u_1,u_2), w_1}
\right) .
\end{eqnarray*}
Thus, we have
\begin{eqnarray}
H(\sigma_{U_1 W_1 E_1 E_2}) &=&
1 + H(P_{W_1}) + \sum_{u_1, w_1 \in \mathbb{F}_2}
\frac{1}{2} P_{W_1}(w_1) \nonumber \\
&& ~~
H\left(
\sum_{u_2 \in \mathbb{F}_2} P_{U_2 | W_1 = w_1}(u_2)
\sigma_{E_1 E_2}^{(u_1,u_2), w_1}
\right) \nonumber \\
&=& 1 + H(P_{\bar{\san{X}}}) + 
\sum_{\bar{\san{x}} \in \mathbb{F}_2} P_{\bar{\san{X}}}(\bar{\san{x}})
H(P_{\vec{\san{X}} \san{J} | \bar{\san{X}}=1}). \nonumber \\
\label{entropy-uwe2}
\end{eqnarray}

Combining Eqs.~(\ref{entropy-uwe}) and (\ref{entropy-uwe2}), we have
\begin{eqnarray*}
&&  H_{\sigma}(U_2|W_1 U_1 E_1 E_2) - P_{W_1}(0) 
H(P_{W_2|W_1=0})  \\
&=& P_{\bar{\san{X}}}(0) (
1- H(P_{\san{X}\san{Z}}^\prime)
). 
\end{eqnarray*}

%%%%%%%%%%%%%%Lemma for key rate%%%%%%%%%%%%%%%%%%%%
\begin{lemma}
\label{lemma-for-key-rate}
Let $C$ be a linear subspace of $\mathbb{F}_2^m$.
Let 
\begin{eqnarray*}
\ket{\varphi^m(\vec{x},\vec{\san{x}})} 
\frac{1}{\sqrt{P_{\san{X}}^m(\vec{\san{x}})}}
\sum_{\vec{\san{z}} \in \mathbb{F}_2^m}
(-1)^{\vec{x} \cdot \vec{\san{z}}}
\sqrt{P_{\san{XZ}}^m(\vec{\san{x}},\vec{\san{z}})}
 \ket{\vec{\san{x}},\vec{\san{z}}}, 
\end{eqnarray*}
and 
$\sigma_{E^m}^{\vec{x},\vec{\san{x}}} := \ket{\varphi^m(\vec{x},\vec{\san{x}})} 
\bra{\varphi^m(\vec{x},\vec{\san{x}})}$.
Let $\san{J}$ be a set of coset representatives of the cosets $\mathbb{F}_2^m/C$,
and
\begin{eqnarray*}
P_{\san{J}|\san{X}^m = \vec{\san{x}}}(\vec{\san{j}}) :=
\frac{ \sum_{\vec{\san{c}} \in C^\bot} P_{\san{XZ}}^m(\vec{\san{x}}, \vec{\san{j}}+\vec{\san{c}})}
{P_{\san{X}}^m(\vec{\san{x}})}
\end{eqnarray*}
be  conditional probability distributions on $\san{J}$.
Then, for any $\vec{a} \in \mathbb{F}_2^m$, we have
\begin{eqnarray}
\label{eq-mixture-of-eve-state}
\sum_{\vec{x} \in C} \frac{1}{|C|} \sigma_{E^m}^{\vec{x} + \vec{a}, \vec{\san{x}}} =
\sum_{\vec{\san{j}} \in \san{J}} P_{\san{J}|\san{X}^m=\vec{\san{x}}}(\vec{\san{j}})
\ket{\vartheta(\vec{a},\vec{\san{x}}, \vec{\san{j}})}
\bra{\vartheta(\vec{a},\vec{\san{x}},\vec{\san{j}})},
\end{eqnarray}
where 
\begin{eqnarray*}
\ket{\vartheta(\vec{a},\vec{\san{x}},\vec{\san{j}})} &:=& 
\frac{1}{\sqrt{\sum_{\vec{\san{e}} \in C^\bot} 
P_{\san{XZ}}^m(\vec{\san{x}}, \vec{\san{j}}+\vec{\san{e}})} } \\
&& \sum_{\vec{\san{c}} \in C^\bot} (-1)^{\vec{a} \cdot \vec{\san{c}}} 
 \sqrt{P_{\san{XZ}}^m(\vec{\san{x}}, \vec{\san{j}}+\vec{\san{c}})}
\ket{\vec{\san{x}}, \vec{\san{j}} + \vec{\san{c}} }.
\end{eqnarray*}
\end{lemma}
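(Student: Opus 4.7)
The plan is to verify the identity by a direct character-sum calculation. First, I expand the left-hand side pointwise: using the explicit form of $\ket{\varphi^m(\vec{x}+\vec{a},\vec{\san{x}})}$, the operator $\sigma_{E^m}^{\vec{x}+\vec{a},\vec{\san{x}}}$ is the rank-one matrix whose $(\vec{\san{z}},\vec{\san{z}}')$ entry equals
\begin{eqnarray*}
\lefteqn{ \frac{1}{P_{\san{X}}^m(\vec{\san{x}})} (-1)^{(\vec{x}+\vec{a})\cdot(\vec{\san{z}}+\vec{\san{z}}')} } \\
&& \cdot \sqrt{P_{\san{XZ}}^m(\vec{\san{x}},\vec{\san{z}}) P_{\san{XZ}}^m(\vec{\san{x}},\vec{\san{z}}')},
\end{eqnarray*}
sandwiched between $\ket{\vec{\san{x}},\vec{\san{z}}}$ and $\bra{\vec{\san{x}},\vec{\san{z}}'}$.

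Next I average over $\vec{x}\in C$ and invoke the standard character-orthogonality identity
\begin{eqnarray*}
\frac{1}{|C|}\sum_{\vec{x}\in C}(-1)^{\vec{x}\cdot\vec{b}} = \mathbf{1}[\vec{b}\in C^{\bot}],
\end{eqnarray*}
applied with $\vec{b}=\vec{\san{z}}+\vec{\san{z}}'$. Only pairs $(\vec{\san{z}},\vec{\san{z}}')$ lying in the same coset of $C^{\bot}$ survive, and the residual sign is $(-1)^{\vec{a}\cdot(\vec{\san{z}}+\vec{\san{z}}')}$. I then parametrize those surviving pairs by a coset representative $\vec{\san{j}}$ and two arbitrary elements $\vec{\san{c}},\vec{\san{c}}'\in C^{\bot}$, writing $\vec{\san{z}}=\vec{\san{j}}+\vec{\san{c}}$ and $\vec{\san{z}}'=\vec{\san{j}}+\vec{\san{c}}'$, so the residual sign factorizes as $(-1)^{\vec{a}\cdot\vec{\san{c}}}(-1)^{\vec{a}\cdot\vec{\san{c}}'}$.

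With this factorization, the coset-wise contribution is manifestly the outer product of a single vector with itself, and that vector is exactly the unnormalized version of $\ket{\vartheta(\vec{a},\vec{\san{x}},\vec{\san{j}})}$. Collecting the $1/P_{\san{X}}^m(\vec{\san{x}})$ prefactor with the squared norm $\sum_{\vec{\san{c}}\in C^{\bot}}P_{\san{XZ}}^m(\vec{\san{x}},\vec{\san{j}}+\vec{\san{c}})$ used to normalize $\ket{\vartheta}$ produces precisely the weight $P_{\san{J}|\san{X}^m=\vec{\san{x}}}(\vec{\san{j}})$ appearing on the right-hand side, which yields the claimed eigendecomposition.

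There is no real conceptual obstacle: the computation is a one-shot Fourier/character argument on the group $\mathbb{F}_2^m$. The only care needed is notational bookkeeping when re-indexing the double sum by cosets, in particular to make sure the coset representatives ranged over on the right-hand side (denoted $\san{J}$) match the cosets of $C^{\bot}$ singled out by the character-orthogonality step, and to track the cancellation of the $(-1)^{\vec{x}\cdot\vec{\san{z}}}$ phases versus the surviving $(-1)^{\vec{a}\cdot\vec{\san{c}}}$ phases.
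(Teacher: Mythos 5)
Your proposal is correct and uses the same core ideas as the paper's proof: the coset decomposition of $\mathbb{F}_2^m$ by $C^\bot$, the character-orthogonality identity $\frac{1}{|C|}\sum_{\vec{x}\in C}(-1)^{\vec{x}\cdot\vec{b}}=\mathbf{1}[\vec{b}\in C^\bot]$, and the observation that $(-1)^{\vec{x}\cdot\vec{\san{c}}}=1$ for $\vec{x}\in C$, $\vec{\san{c}}\in C^\bot$. The only difference is ordering: you expand the outer product into raw matrix entries first and apply orthogonality at that level, then reorganize by cosets, whereas the paper first regroups the state vector $\ket{\varphi^m(\vec{x}+\vec{a},\vec{\san{x}})}$ into a sum over coset representatives $\vec{\san{j}}$ (already isolating the unnormalized $\ket{\vartheta}$'s), and only then expands the outer product and uses orthogonality to kill the cross terms $\vec{\san{i}}\neq\vec{\san{j}}$ — a cosmetic difference, not a genuinely different route.
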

%%%%%%
\begin{remark}
\label{remark-of-lemma-for-key-rate}
If $\vec{\san{j}} \neq \vec{\san{i}}$, obviously we have 
$\langle \vartheta(\vec{a},\vec{\san{x}},\vec{\san{j}}) |
\vartheta(\vec{a},\vec{\san{x}},\vec{\san{i}}) \rangle = 0$.
Thus, the right hand side of Eq.~(\ref{eq-mixture-of-eve-state})
is an eigen value decomposition. Moreover, if $\vec{a} + \vec{b} \in C$,
then we have $\ket{\vartheta(\vec{a},\vec{\san{x}},\vec{\san{j}})} =
\ket{\vartheta(\vec{b},\vec{\san{x}},\vec{\san{j}})}$.
\end{remark}
%%%%%%
\begin{proof}
For any $\vec{x} \in C$ and $\vec{a} \in \mathbb{F}_2^m$, we can rewrite
\begin{eqnarray*}
\ket{\varphi(\vec{x}+\vec{a},\vec{\san{x}})} &=&
\frac{1}{\sqrt{P_{\san{X}}^m(\vec{\san{x}})}} \sum_{\vec{\san{j}} \in \san{J}}
\sum_{\vec{\san{c}} \in C^\bot} 
(-1)^{(\vec{x}+\vec{a})\cdot(\vec{\san{j}}+\vec{\san{c}})} \\
&& ~~~\sqrt{ P_{\san{XZ}}^m(\vec{\san{x}}, \vec{\san{j}} + \vec{\san{c}} )}
\ket{\vec{\san{x}}, \vec{\san{j}} + \vec{\san{c}} } \\
&=& \sum_{\vec{\san{j}} \in \san{J}}
(-1)^{(\vec{x}+\vec{a})\cdot \vec{\san{j}} }
\sqrt{ P_{\san{J}|\san{X}^m = \vec{\san{x}} }(\vec{\san{j}}) }
\ket{\vartheta(\vec{a}, \vec{\san{x}}, \vec{\san{j}} ) }.
\end{eqnarray*}
Then, we have
\begin{eqnarray*}
\lefteqn{ 
\sum_{\vec{x} \in C} \frac{1}{|C|} \sigma_{E^m}^{\vec{x}+\vec{a}, \vec{\san{x}} } } \\
&=& \sum_{\vec{x} \in C} \frac{1}{|C|} 
\sum_{\vec{\san{i}}, \vec{\san{j}} \in \san{J} } 
(-1)^{(\vec{x}+\vec{a})\cdot(\vec{\san{i}}+\vec{\san{j}}) }
\sqrt{ P_{\san{J}|\san{X}^m = \vec{\san{x}} }(\vec{\san{i}})
P_{\san{J}|\san{X}^m = \vec{\san{x}} }(\vec{\san{j}}) } \\
&& ~~~\ket{\vartheta(\vec{a}, \vec{\san{x}}, \vec{\san{i}} ) }
\bra{\vartheta(\vec{a}, \vec{\san{x}}, \vec{\san{j}} ) } \\
&=& \sum_{\vec{\san{i}}, \vec{\san{j}} \in \san{J} }
(-1)^{\vec{a} \cdot(\vec{\san{i}}+\vec{\san{j}}) }
\sum_{\vec{x} \in C} \frac{1}{|C|}
(-1)^{\vec{x} \cdot(\vec{\san{i}}+\vec{\san{j}}) }
\sqrt{ P_{\san{J}|\san{X}^m = \vec{\san{x}} }(\vec{\san{i}})
P_{\san{J}|\san{X}^m = \vec{\san{x}} }(\vec{\san{j}}) } \\
&& ~~~\ket{\vartheta(\vec{a}, \vec{\san{x}}, \vec{\san{i}} ) }
\bra{\vartheta(\vec{a}, \vec{\san{x}}, \vec{\san{j}} ) } \\
&=& \sum_{\vec{\san{j}} \in \san{J} }
P_{\san{J}|\san{X}^m = \vec{\san{x}} }(\vec{\san{j}}) 
\ket{\vartheta(\vec{a}, \vec{\san{x}}, \vec{\san{j}} )}
\bra{\vartheta(\vec{a}, \vec{\san{x}}, \vec{\san{j}} )},
\end{eqnarray*}
where $\cdot$ is the standard inner product on the vector space
 $\mathbb{F}_2^m$, and 
we used the following equality,
\begin{eqnarray*}
\sum_{\vec{x} \in C} (-1)^{\vec{x} \cdot (\vec{\san{i}} + \vec{\san{j}} ) }
= 0
\end{eqnarray*}
for $\vec{\san{i}} \neq \vec{\san{j}}$.
\end{proof}

\section{Bell diagonal state is the worst case}
\label{bell-worst}

In this section, we show that the evaluation 
of the key rate formula for a Bell diagonal state
is the worst case. 
Let $\sigma_{AB}$ be a two-qubit density operator
such that Bell diagonal entries are
$\{ P_{\san{XZ}}(\san{x},\san{z})\}$, i.e.,
$\bra{\psi(\san{x},\san{z})} \sigma_{AB} \ket{\psi(\san{x},\san{z})} =
P_{\san{XZ}}(\san{x},\san{z})$.
Let $\{ \san{XZ}(\san{x},\san{z})\}_{\san{x},\san{z} \in \mathbb{F}_2}$
be the Pauli operators on the qubit,
let $\sigma_{AB}^{\san{s},\san{t}} := \san{XZ}(\san{s},\san{t})^{\otimes
2} \sigma_{AB} \san{XZ}(\san{s},\san{t})^{\otimes 2}$,
and let 
\begin{eqnarray*}
\hat{\sigma}_{AB} := 
\frac{1}{4} \sum_{\san{s}, \san{t} \in \mathbb{F}_2} 
\sigma_{AB}^{\san{s},\san{t}}
\end{eqnarray*} 
be the discrete-twirled operator of $\sigma_{AB}$.
Note that $\hat{\sigma}_{AB}$ is of the form
$\sum_{\san{x}, \san{z} \in \mathbb{F}_2} P_{\san{XZ}}(\san{x},\san{z})
\ket{\psi(\san{x},\san{z})}\bra{\psi(\san{x},\san{z})}$
\cite{hamada:03}.
Let $\sigma_{U_1 U_2 W_1 E_1 E_2}^{\vec{\san{s}},\vec{\san{t}}}$ be a density
operator derived from a purification 
$\sigma_{A^2B^2E^2}^{\vec{\san{s}},\vec{\san{t}}}$ of 
$\sigma_{A^2B^2}^{\vec{\san{s}},\vec{\san{t}}}$ in the same way as
$\sigma_{U_1 U_2 W_1 E_1 E_2}$ is derived from 
a purification $\sigma_{ABE}^{\otimes 2}$ of $\sigma_{AB}^{\otimes 2}$ 
in Section 2,
where $\sigma_{A^2B^2}^{\vec{\san{s}}, \vec{\san{t}}} :=
\sigma_{AB}^{\san{s}_1,\san{t}_1} \otimes \sigma_{AB}^{\san{s}_2,
\san{t}_2}$
for $\vec{\san{s}} = (\san{s}_1,\san{s}_2)$ and
$\vec{\san{t}} = (\san{t}_1,\san{t}_2)$.
Since a phase flip error does not affect the measurement by 
$\{ \ket{0_z}, \ket{1_z}\}$-basis,
and since a bit flip error only permutate the indices of
measurement results, we have
\begin{eqnarray*}
H_{\sigma}(U_1 U_2 | W_1 E_1 E_2) = 
H_{\sigma^{\vec{\san{s}}, \vec{\san{t}}}}(U_1 U_2 | W_1 E_1 E_2).
\end{eqnarray*}
Let 
\begin{eqnarray*}
\ket{\Phi_{A B E \san{S} \san{T} \san{S}^\prime \san{T}^\prime}}
:= \sum_{\san{s},\san{t} \in \mathbb{F}_2} \frac{1}{2}
\ket{\Phi_{ABE}^{\san{s},\san{t}}} \ket{\san{s},\san{t},\san{s},\san{t}}
\end{eqnarray*}
be a purification of $\tilde{\sigma}_{AB}$,
where
$\ket{\Phi_{ABE}^{\san{s},\san{t}}}\bra{\Phi_{ABE}^{\san{s},\san{t}}} := 
\sigma_{ABE}^{\san{s},\san{t}}$.
Let $\tilde{\sigma}_{U_1 U_2 W_1 E_1 E_2 \vec{\san{S}} \vec{\san{T}}
\vec{\san{S}^\prime} \vec{\san{T}}}$ be a density operator derived
from $\Phi_{ABE\san{S}\san{T} \san{S}^\prime \san{T}^\prime}^{\otimes
2}$
in the same way as $\sigma_{U_1 U_2 W_1 E_1 E_2}$ is derived from
$\sigma_{ABE}^{\otimes 2}$.
Then, by using the strong subadditivity of von
Neumann entropy, 
we have
\begin{eqnarray*}
\lefteqn{
H_{\tilde{\sigma}}(U_1 U_2 | W_1 E_1 E_2 \vec{\san{S}} \vec{\san{T}}
\vec{\san{S}^\prime} \vec{\san{T}^\prime})
} \\
&\le&
H_{\tilde{\sigma}}(U_1 U_2 | W_1 E_1 E_2 \vec{\san{S}} \vec{\san{T}}) \\
&=& \sum_{\vec{s}, \vec{t} \in \mathbb{F}_2^2}
\frac{1}{16} H_{\sigma^{\vec{\san{s}}, \vec{\san{t}}}}(
U_1 U_2 | W_1 E_1 E_2) \\
&=& H_{\sigma}(U_1 U_2 | W_1 E_1 E_2).
\end{eqnarray*}
In the similar manner, we have
\begin{eqnarray*}
\lefteqn{
H_{\tilde{\sigma}}(U_2 | W_1 U_1 E_1 E_2 \vec{\san{S}} \vec{\san{T}}
\vec{\san{S}^\prime} \vec{\san{T}^\prime})
} \\
&\le&
H_{\sigma}(U_2 | W_1 U_1 E_1 E_2).
\end{eqnarray*}
On the other hand, $P_{W_1}$ and 
$P_{W_2|W_1=0}$ are invariant under the
discrete twirling operation.  
Thus, Bell diagonal state is the worst case
for a fixed Bell diagonal entries 
$\{ P_{\san{XZ}}(\san{x},\san{z})\}$.

%%%%%%%%%%%%%%%%%%%%%%%%%%%%
%\bibliography{../../../bibtex/reference.bib}

\end{document}